\newtheorem{theorem}{Theorem}
\newtheorem{corollary}{Corollary}
\newtheorem{remark}{Remark}
\newtheorem{proposition}{Proposition}
\newtheorem{conjecture}{Conjecture}
\newenvironment{proof}[1][Proof]{\noindent\textbf{#1.} }{\ \rule{0.5em}{0.5em}}
\newcommand{\Ind}{\mathds{1}}
\newcommand{\ba}{\mathbf{a}}
\newcommand{\RR}{\mathbb{R}}
\newcommand{\sign}{\mathop{\mathrm{sign}}}
\newcommand{\Ber}{\mathrm{Bernoulli}}
\newcommand{\Unif}{\mathrm{Uniform}}
\newcommand{\m}{\mathcal}
\def\dperp{\perp\!\!\!\perp}
\begin{document}

\title{Information-Distilling Quantizers}
\author{Alankrita Bhatt, Bobak Nazer, Or Ordentlich and Yury Polyanskiy
	\thanks{}
	\thanks{A. Bhatt is with the University of California, San Diego, CA, USA (email: a2bhatt@eng.ucsd.edu). B. Nazer is with Boston Univerity, Boston, MA, USA (email: bobak@bu.edu)
		O. Ordentlich is with the School of Computer Science and Engineering, Hebrew University of Jerusalem, Israel (email: or.ordentlich@mail.huji.ac.il).  Y. Polyanskiy is with the Massachusetts Institute of Technology, MA, USA (email: yp@mit.edu)}
	%\thanks{}
	\thanks{This work was supported, in part, by ISF under Grant 1791/17, by the NSF CAREER award CCF-12-53205, the Center for Science of Information (CSoI), an NSF Science and Technology Center, under grant agreement CCF-09-39370, and NSF grants CCF-1618800, CCF-17-17842, and ECCS-1808692.
	The material in this paper was presented in part at the 2017 International Symposium on Information Theory~\cite{nop17}.}
	\thanks{}}

\maketitle

\parskip 3pt

\maketitle
\begin{abstract} Let $X$ and $Y$ be dependent random variables. This paper considers the problem of designing a scalar quantizer for $Y$ to maximize the mutual information between the quantizer's output and $X$, and develops fundamental properties and bounds for this form of quantization, which is connected to the log-loss distortion criterion. The main focus is the regime of low $I(X;Y)$, where it is shown that, if $X$ is binary, a constant fraction of the mutual information can always be preserved using $\mathcal{O}(\log(1/I(X;Y)))$ quantization levels, and there exist distributions for which this many quantization levels are necessary. Furthermore, for larger finite alphabets $2 < |\mathcal{X}| < \infty$, it is established that an $\eta$-fraction of the mutual information can be preserved using roughly $(\log(| \mathcal{X} | /I(X;Y)))^{\eta\cdot(|\mathcal{X}| - 1)}$ quantization levels.
\end{abstract}

\section{Introduction}\label{sec:intro}

%Quantization plays a central role in many information processing systems. For instance, when the data comes from a continuous alphabet, quantization is a pre-requisite for digital processing. However, even if the data comes from a discrete alphabet, reducing its cardinality often leads to more efficient processing.

Let $X$ and $Y$ be a pair of random variables with alphabets $\m{X}$ and $\m{Y}$, respectively, and a given distribution $P_{XY}$. This paper deals with the problem of quantizing $Y$ into $M<|\m{Y}|$ values, under the objective of maximizing the mutual information between the quantizer's output and $X$. With a slight abuse of notation\footnote{This notation is meant to suggest the distance from a point to a set.}, we will denote the value of the mutual information attained by the optimal $M$-ary quantizer by
\begin{align}
I(X;[Y]_M) \triangleq \sup_{\tilde{Y} \in [Y]_M} I(X; \tilde{Y}) .
\label{eq:gmdef}
\end{align}  where $[Y]_M$ is the set of all (deterministic) $M$-ary quantizations of $Y$,
 $$[Y]_M \triangleq \{f(Y) \  | \  f:\m{Y}\rightarrow [M]\}$$ and $[M] \triangleq \{1,2,\ldots,M\}$.

%
%Define where $[Y]_M$ is the set of all (deterministic) $M$-ary quantizations of $Y$,
% $$[Y]_M \triangleq \{f(Y) \  : \  f:\m{Y}\rightarrow [M]\}.$$
% 
% %Thus, the optimal quantizer under this setup is\footnote{It is not clear, a-priory, that an optimal quantizer exists, i.e., that $\sup I(X;f(Y))$ is attained. However, this is established in~\cite[Lemma 1]{ky14}.}
%%\begin{align}
%%\argmax_{f:\m{Y} \rightarrow [M]}I(X;f(Y)),\label{eq:MImax}
%%\end{align}
%
%where $[M] \triangleq \{1,2,\ldots,M\}$. 
%
%We will use the following shorthand\footnote{This notation is meant to suggest the distance from a point to a set.} to denote the value of the mutual information attained by the optimal $M$-ary quantizer,
%\begin{align}
%I(X;[Y]_M) \triangleq \max_{\tilde{Y} \in [Y]_M} I(X; \tilde{Y}) .
%\label{eq:gmdef}
%\end{align}  where $[Y]_M$ is the set of all (deterministic) $M$-ary quantizations of $Y$,
% $$[Y]_M \triangleq \{f(Y) \  : \  f:\m{Y}\rightarrow [M]\}.$$

When $X$ and $Y$ are thought of as the input and output of a channel, this problem corresponds to determining the highest available information rate for $M$-level quantization. It is therefore not surprising that this problem has received considerable attention. For example, it is well known~\cite[Section 2.11]{vo13} that when $X$ is $\pm 1$ equiprobable and $Y= X+Z$ for Gaussian $Z$, statistically independent of $X$, it holds that $I(X;[Y]_2) \geq \frac{2}{\pi} I(X;Y)$, which is achieved by taking $f(\cdot)$ to be the maximum a posteriori (MAP) estimator of $X$ from $Y$.\footnote{In~\cite{kl13}, it was demonstrated that if an asymmetric signaling scheme is used, instead of binary phase-shift keying (BPSK), the additive white Gaussian noise channel capacity can be attained at low signal-to-noise ratio (SNR) with an asymmetric $2$-level quantizer.}

A characterization of~\eqref{eq:gmdef} is also required for the construction of good polar codes~\cite{arikan09}, since the large output cardinality of polarized channels makes it challenging to evaluate their respective capacities (and identify ``frozen'' bits). Efficient techniques for channel output quantization that preserve mutual information have been developed to overcome this obstacle, and played a major role in the process of making polar codes implementable~\cite{phtt11,tsv12,kt17}. One byproduct of these efforts is a sharp characterization of the \textit{additive gap}. Specifically, it was recently shown in~\cite{kt17} that, for arbitrary $P_{XY}$, it holds that $I(X;Y)-I(X;[Y]_M)= \m{O}(M^{- 2/(|\m{X}|-1)})$, whereas~\cite{tal15} demonstrates that there exist $P_{XY}$ such that $I(X;Y)-I(X;[Y]_M) = \Omega(M^{- 2/(|\m{X}|-1)})$. The works~\cite{phtt11,tsv12,kt17}, among others, also provided polynomial-complexity, sub-optimal algorithms for designing such quantizers. In addition, for binary $X$, an algorithm for determining the optimal quantizer was proposed in~\cite{ky14} (drawing upon a result from~\cite{bpkn92}) that runs in time $\m{O}(|\m{Y}|^3)$. %See also~\cite{wcsw11,wm15} and references therein for some heuristic suboptimal algorithms.
A supervised learning algorithm, for the scenario where $P_{XY}$ is not known, and cannot be estimated with good accuracy, was proposed in~\cite{lr09}.

It may at first appear surprising that the quality of quantization found in~\cite{phtt11,tsv12,kt17} depends on the alphabet size $|\m{X}|$ but not on $| \m{Y}|$. The reason for this is that, given $Y=y$, the relevant information about $X$ is the the posterior distribution $P_{X|Y=y}$, which is a point on ($|\m{X}|-1$)-dimensional simplex. Thus, the goal of quantizing $Y$ is essentially a goal of quantizing the probability simplex. The goal of this paper is to understand the fundamental limits of this quantization, as a function of alphabet size. The crucial difference with~\cite{phtt11,tsv12,kt17} is that here we focus on the \textit{multiplicative gap}, i.e., comparing the ratio of $I(X;[Y]_M)$ to $I(X;Y)$. The difference is especially profound in the case when $I(X;Y)$ is small. We ignore the algorithmic aspects of finding the optimal $M$-level quantizer and instead focus on the fundamental properties of the function $I(X;[Y]_M)$. To this end, we define and study the ``information distillation'' function
\begin{align}
\mathrm{ID}_M(K,\beta) \triangleq \inf_{\substack{ {P_{XY}:}\\{|\m{X}|=K}\\{I(X;Y)\geq\beta}}} I(X;[Y]_M).\label{eq:gmbetadef}
\end{align}
The infimum above is taken with respect to all joint distributions with discrete input alphabet $\m{X}$ of cardinality $K$ and \emph{arbitrary} (possibly continuous) output alphabet $\mathcal{Y}$ such that the mutual information is at least $\beta$. Corollary~\ref{cor:infineq} stated and proved in Section~\ref{sec:properties} shows that taking the infimum in~\eqref{eq:gmbetadef} with respect to joint distributions with mutual information $I(X;Y)=\beta$, rather than $I(X;Y)\geq\beta$ would lead to the same result.  One may further wonder whether $K$ has an essential role in the function $\mathrm{ID}_M(K,\beta)$. Proposition~\ref{prop:SDPI}, stated and proved in Section~\ref{sec:properties}, shows that for any $M$ and $\beta$ it holds that $\inf_{K}\mathrm{ID}_M(K,\beta)=0$. Thus, one must indeed restrict the cardinality of $\m{X}$ in~\eqref{eq:gmbetadef} in order to get a meaningful quantity.

Special attention will be given to the binary input alphabet case, where $X\sim \Ber(p)$ for some $p$. In this setting, it may seem at a first glance that the optimal binary quantizer should always retain a significant fraction of $I(X;Y)$, and that the MAP quantizer should be sufficient to this end. For large $I(X;Y)$, this is indeed the case, as we show in Proposition~\ref{prop:highSNR}. As mentioned above, this is also the case if $Y = X + Z$ with $Z$ Gaussian for all values  $I(X;Y)$, since the MAP quantizer always retains at least $2/\pi\approx 63.66 \%$ of the mutual information. However, perhaps surprisingly, we show that there is no constant $c>0$ such that $I(X;[Y]_2)>c\cdot I(X;Y)$ for all $P_{XY}$ with $|\m{X}|=2$.

\subsection{Main Results}

Our main result is a complete characterization, up to constants, of the binary information distillation function. 

\begin{theorem}
For any mutual information value $0<\beta\leq1$, the binary information distillation function is lower and upper bounded as follows:
\begin{align}
\beta \cdot f_{\text{lower}}\left(\frac{M-1}{\max\left\{\log\left(\frac{1}{\beta}\right),1\right\}}\right) \leq \mathrm{ID}_M(2,\beta)~~~~~~~~~~~~~~~ \nonumber\\
~~~~~~~~~~~~~~~~~ \leq \beta \cdot f_{\text{upper}}\left(\frac{M-1}{\max\left\{\log\left(\frac{1}{\beta}\right),1\right\}}\right) \, ,\nonumber
\end{align}
where
\begin{align}
f_{\text{lower}} (t)\triangleq\begin{cases}
\frac{t}{208} & t<104\\
1-\frac{52}{t} & t\geq 104
\end{cases}, \qquad \ \ \ \ \ \ \ \ \ 
f_{\text{upper}}(t)\triangleq \min\{3t,1\}. \nonumber
\end{align}
\label{thm:bindistillation}
\end{theorem}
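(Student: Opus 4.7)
To prove Theorem~\ref{thm:bindistillation}, I would begin with a standard reduction: by the data processing inequality, $T=\Pr(X=1\mid Y)\in[0,1]$ is a sufficient statistic for $X$, so one may replace $Y$ by $T$. Writing $p=\mathbb{E}[T]$ and letting $h$ denote the binary entropy function, this gives $I(X;Y)=h(p)-\mathbb{E}[h(T)]$. A classical structural fact due to Burshtein et al.~\cite{bpkn92} (see also Kurkoski and Yagi~\cite{ky14}) states that for binary $X$, an optimal $M$-level quantizer partitions the range of $T$ into $M$ contiguous intervals. The problem therefore reduces to selecting thresholds $0=a_0\le a_1\le\cdots\le a_M=1$ so as to minimize the total Jensen gap $\sum_{i=1}^M q_i(h(\mu_i)-\mathbb{E}[h(T)\mid T\in B_i])$, where $B_i=(a_{i-1},a_i]$, $q_i=\Pr(T\in B_i)$, $\mu_i=\mathbb{E}[T\mid T\in B_i]$, under the constraint $\mathbb{E}[h(T)]=h(p)-\beta$.

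For the lower bound I would construct an explicit quantizer using thresholds placed on a geometric scale near $0$ and near $1$, motivated by the fact that $h$ is approximately linear on dyadic intervals when measured in the correct sense. Using the symmetry $h(t)=h(1-t)$, roughly $(M-1)/2$ thresholds are placed at dyadic values $\{2^{-k}\}$ and $\{1-2^{-k}\}$. The key lemma is a per-bin estimate: on a dyadic bin $B=[c,2c]$ with $c\le 1/4$, a second-order Taylor expansion using $|h''(t)|\le 1/t$ gives a Jensen gap of order $c$, while $\mathbb{E}[h(T)\mid B]$ is of order $c\log(1/c)$, so the relative loss is $O(1/\log(1/c))$. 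Combining these bin-wise estimates with the global identity $\sum_i q_i\mathbb{E}[h(T)\mid B_i]=h(p)-\beta$ will yield the saturation form $1-52/t$ for $t\ge 104$. The linear regime $t/208$ requires a separate, coarser argument, showing that adding even a single well-chosen threshold captures a fixed fraction of $\beta$ and then iterating.

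For the upper bound I would construct an explicit ``hard'' distribution. Take $X\sim\Ber(p)$ and let the posterior $T$ be supported on a geometric grid $\{0\}\cup\{2^{-k}\}_{k=1}^N$ with $N$ of order $\log(1/\beta)$, choosing the masses so that $\mathbb{E}[T]=p$, $I(X;Y)=\beta$, and each scale $2^{-k}$ contributes comparably to $\mathbb{E}[h(T)]$. Any $M$-level quantizer corresponds to a partition of $\{1,\ldots,N\}$ into at most $M$ groups, and by concavity of $h$ together with the geometric spacing, merging atoms from multiple scales into one bin forfeits essentially all of the merged-scale contributions beyond the largest. Hence at most $M-1$ of the $N$ per-scale contributions are preserved, giving $I(X;[Y]_M)\le 3\beta(M-1)/\log(1/\beta)$, with the constant $3$ emerging from the explicit atom masses.

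The main obstacle will be the lower bound with its sharp constants $52$ and $208$. A fixed dyadic grid does not adapt to the distribution of $T$, which may concentrate unevenly across scales; I expect the actual proof to define thresholds as quantiles of $T$, or to use a hybrid scheme combining quantile thresholds in the bulk with dyadic ones in the tails. One then has to track both $q_i$ and $\mathbb{E}[h(T)\mid B_i]$ per bin, handle the innermost bin near $T=0$ (and symmetrically near $T=1$) separately since $h$ has unbounded relative oscillation there, and match the two regimes of $f_{\text{lower}}$ at the crossover $t=104$, where both formulas agree at $1/2$.
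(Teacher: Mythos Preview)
Your overall plan is in the right spirit, but both halves diverge from the paper's route, and the lower bound sketch has a real gap.

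For the lower bound, you quantize in posterior space and try to control, bin by bin, the Jensen gap of $h$ relative to $\mathbb{E}[h(T)\mid B_i]$. The difficulty is that these conditional entropies sum to $\mathbb{E}[h(T)]=h(p)-\beta$, not to $\beta$; when $\beta$ is small this is essentially $h(p)$, so a bound of the form ``relative loss $O(1/\log(1/c))$ per bin'' says nothing about the ratio of the total Jensen gap to $\beta$. The paper sidesteps this by working in divergence space: with $D_y\triangleq d(\alpha_y\|\bar\alpha)$ one has $I(X;Y)=\mathbb{E}[D_Y]=\int_0^{\gamma^*}\bar F(\gamma)\,d\gamma$ exactly, where $\bar F(\gamma)=\Pr(D_Y\ge\gamma)$. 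The quantizer places geometric thresholds $\gamma_\ell=\gamma_1\theta^{\ell-1}$ in $D$-space (separately on each side of $\bar\alpha$), and the retained information is lower bounded by the Riemann lower sum $\sum_\ell \bar F(\gamma_\ell)(\gamma_\ell-\gamma_{\ell-1})$, which is compared directly to the integral. Two further ingredients you do not mention: the regime of moderate $\beta$ (specifically $\beta\ge(1-\eta)/2$) is handled not by the same construction but by invoking the additive-gap bound of~\cite{kt17}, and the linear piece $t/208$ of $f_{\text{lower}}$ comes from first establishing $I(X;[Y]_{\lfloor 104\max\{\log(1/\beta),1\}\rfloor})\ge\beta/2$ and then applying the simple subsampling inequality $I(X;[Y]_K)\ge\frac{K-1}{M}I(X;[Y]_M)$ (Corollary~\ref{cor:doublequant}).

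For the upper bound, your discrete geometric-grid construction is different from what the paper does. The paper takes $X\sim\Ber(1/2)$ and a \emph{continuous} BMS channel: $T$ has density $f_T(t)=r\,\delta(t)+4r(1-2t)^{-3}$ on $[0,(1-\sqrt r)/2]$. By~\cite{bpkn92} the optimal quantizer is an interval partition, and symmetry reduces the analysis to bounding $\max_{0\le b\le 1/2}\Pr(0\le T\le b)\,d(\mathbb{E}[T\mid 0\le T\le b]\|1/2)$; the explicit density makes this computation closed-form, yielding $I(X;f(Y))\le M\log(e)\,r$, and a separate estimate relates $r$ to $\beta=I(X;Y)$ through $\mathbb{E}(1/2-T)^2$. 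Your discrete approach might be made to work, but the pigeonhole ``at most $M-1$ scales survive'' step needs a careful argument (merging two geometric atoms does not forfeit \emph{all} of the smaller one's contribution), and there is no reason to expect the constant $3$ to fall out of it.
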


The proof is deferred to Section~\ref{sec:proof}. Note that the negative aspect of this result is in stark contrast to the intuition from the binary additive white Gaussian noise (AWGN) channel. While for the former, two quantization levels suffice for retaining a $2/\pi$ fraction of $I(X;Y)$, Theorem~\ref{thm:main} shows that there exist sequences of distributions for which at least $\Omega(\log(1/I(X;Y)))$ quantization levels are needed in order to retain a fixed fraction of $I(X;Y)$. Furthermore, as illustrated in Section~\ref{sec:binary}, for small $I(X;Y)$ and $M =2$, the MAP quantizer can be arbitrarily bad with respect to the optimal quantizer, which is in general not ``symmetric.'' On the positive side, $\m{O}(\log(1/I(X;Y)))$ quantization levels always suffice for retaining a fixed fraction of $I(X;Y)$.

For the general case where $2 < |\m{X}| < \infty$, we prove the following.

\begin{theorem}\label{thm:nonbindistillation}
	Define
	\begin{align}
	&a_0(M,|\m{X}|,\beta)\triangleq\frac{1}{|\m{X}|-1}\cdot\min\left\{\frac{M-1}{208\log\left(\frac{(|\m{X}|-1)^2}{\beta}\right)},\frac{1}{2} \right\},\\
	&a_{|\m{X}|-1}(M,|\m{X}|,\beta)\triangleq \left(1-\left(\frac{52\log\left(\frac{e(|\m{X}|-1)}{\beta}\right)}{M^{1/{(|\m{X}|-1)}}}\right)^{2/3}\right)^2,
	\end{align}
	and, for $k=1,\ldots,|\m{X}|-2$, define 
	\begin{align}
	a_k(M,|\m{X}|,\beta)&\triangleq\frac{k}{|\m{X}|-1}\left(1-\frac{52\log\left(\frac{(|\m{X}|-1)^2}{\beta}\right)}{M^{1/k}}\right).
	\end{align}
	Then, for any $2< |\m{X}| < \infty$ and $0<\beta\leq \log|\m{X}|$, the information distillation function is upper and lower bounded as follows 	
	\begin{align}
	\beta\cdot\max_{k\in\{0,1,\ldots,|\m{X}|-1\}}&a_k\left(M,|\m{X}|,\beta\right)\leq\mathrm{ID}_M(|\m{X}|,\beta)\nonumber\\
	&\leq \beta \cdot f_{\text{upper}}\left(\frac{M-1}{\max\left\{\log\left(\frac{1}{\beta}\right),1\right\}}\right),
	\end{align}
	where $f_{\text{upper}}(t)$ is as defined in Theorem~\ref{thm:bindistillation}.
%	where
%	\begin{align}
%	g_L\left(M,|\m{X}|,\beta\right)\triangleq\max_{k\in\{0,1,\ldots,|\m{X}|-1\}}a_k\left(M,|\m{X}|,\beta\right).
%	\end{align}
%	
\end{theorem}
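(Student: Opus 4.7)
When $\beta \leq 1$, the upper bound reduces to the binary one in Theorem~\ref{thm:bindistillation}: take any binary $P_{X'Y'}$ witnessing that bound with $I(X';Y') \geq \beta$, and embed $X'$ into a $K$-ary alphabet by adjoining $K-2$ symbols of probability zero. Since the additional symbols are never realized, both $I(X;Y)$ and $I(X;[Y]_M)$ are unchanged by this embedding, so the binary upper bound transfers verbatim. When $\beta > 1$, the function $f_{\text{upper}}$ is pinned at $1$ and the claimed bound degenerates to $\mathrm{ID}_M \leq \beta$, which holds trivially by data processing.

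\textbf{Lower bound.} The plan is to prove each of the bounds $a_0, a_1, \ldots, a_{|\mathcal{X}|-1}$ separately; the statement then follows by taking the maximum. The key tool is a chain-rule reduction to the binary distillation problem. Fix any ordering $x_1, \ldots, x_K$ of $\mathcal{X}$ (with $K = |\mathcal{X}|$) and the threshold binarizations $g_j(X) = \Ind\{X \leq x_j\}$ for $j = 1, \ldots, K-1$; these jointly determine $X$, so
\begin{align*}
I(X;Y) = \sum_{j=1}^{K-1} I\bigl(g_j(X);\, Y \,\big|\, g_1(X),\ldots,g_{j-1}(X)\bigr).
\end{align*}
In particular, for each $k \in \{1,\ldots,K-1\}$ the sum of the $k$ largest terms is at least $\tfrac{k}{K-1} I(X;Y) \geq \tfrac{k}{K-1}\beta$. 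I would then construct, for each $k$, a tree-structured $M$-level quantizer of $Y$: at stage $j = 1,\ldots,k$, within each existing cell I apply the binary distillation guarantee of Theorem~\ref{thm:bindistillation} to the conditional distribution of $(g_j(X),Y)$ inside that cell, refining it into $M_j$ sub-cells with $\prod_j M_j \leq M$. The balanced choice $M_j \approx M^{1/k}$ combined with the stage-wise chain rule
\begin{align*}
I(X;q(Y)) \;\geq\; \sum_{j=1}^{k} I\bigl(g_j(X);\, q(Y) \,\big|\, g_1(X),\ldots,g_{j-1}(X)\bigr)
\end{align*}
and the ``large $t$'' regime of $f_{\text{lower}}$ at each stage then yields each summand bounded below by $\bigl(1 - \tfrac{52\log((K-1)^2/\beta)}{M^{1/k}}\bigr)$ times its unquantized value, reproducing $a_k \cdot \beta$. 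The bound $a_0$ is obtained by using the ``small $t$'' linear regime of $f_{\text{lower}}$ at a single binarization stage, which is the right regime when $M$ is too small for the subtractive form to be meaningful. The bound $a_{K-1}$ arises from a sharpening of the terminal stage: instead of a single chain-rule step, one passes through Pinsker-type inequalities, yielding the characteristic $(1 - (\cdot)^{2/3})^2$ shape.

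\textbf{Main obstacle.} The main technical subtlety is that the binary distillation bound of Theorem~\ref{thm:bindistillation} is stated for an \emph{unconditional} pair $(X',Y')$, whereas at stage $j$ one wishes to distill the \emph{conditional} mutual information $I(g_j(X); Y \mid g_1(X),\ldots,g_{j-1}(X))$, with the conditioning variables unobserved by the quantizer. The tree structure finesses this by using the quantization outputs of the previous stages as proxies for $g_{<j}(X)$ inside each cell; verifying that this proxying loses at most a constant fraction of the preserved MI---and, in particular, that the effective inverse conditional MI entering the binary bound at each stage is no worse than $(K-1)^2/\beta$ rather than a worse polynomial in $K$---is where the constants in $a_k$ are actually decided. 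A secondary obstacle is the refined analysis required for $a_{K-1}$, which cannot rely on chain-rule accounting alone and is where the $2/3$ exponent appears.
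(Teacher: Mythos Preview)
Your upper bound is correct and matches the paper's one-line argument.

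Your lower bound has the right backbone (chain-rule binarization of $X$, then invoke the binary distillation guarantee coordinatewise), but two concrete pieces diverge from the paper and one of them is where your self-identified ``main obstacle'' lives.

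\textbf{Product, not tree.} The paper does \emph{not} build a tree-structured quantizer in which stage $j$ refines the cells produced by stages $<j$. Instead it takes $A_i=\Ind\{X=i\}$ so that the chain rule reads $I(X;Y)=\sum_{i}I(A_i;Y\mid A_1^{i-1}=0)\Pr(A_1^{i-1}=0)$, designs for each relevant $i$ a quantizer $f_i:\mathcal{Y}\to[M_i]$ that is optimal for the \emph{conditional} pair $(A_i,Y)$ given the event $A_1^{i-1}=0$, and then outputs the \emph{Cartesian product} $f(y)=(f_{\pi(1)}(y),\ldots,f_{\pi(k)}(y))$. Each $f_i$ acts on the full $Y$, not on a cell. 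The MI accounting is then a direct chain rule on $X$: $I(X;f(Y))\geq\sum_{i\leq k}I(A_{\pi(i)};f_{\pi(i)}(Y)\mid A_1^{\pi(i)-1}=0)\,p_{\pi(i)}\geq\eta'\sum_{i\leq k}I_{\pi(i)}p_{\pi(i)}$. There is no ``proxying'' of $g_{<j}(X)$ by previous quantizer outputs at all; the obstacle you flag simply does not arise with this construction. The $(|\mathcal{X}|-1)^2/\beta$ inside the logarithm comes from lower-bounding each $I_{\pi(i)}$ by $\tfrac{(1-\bar\eta)\beta}{|\mathcal{X}|-1}$ via a pigeonhole-type argument on the sorted fractions $v_i=I_ip_i/\beta$.

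\textbf{The $a_{|\mathcal{X}|-1}$ bound is not Pinsker.} The $(1-(\cdot)^{2/3})^2$ shape has nothing to do with Pinsker-type inequalities. It arises from the same product construction with $k=|\mathcal{X}|-1$, after introducing a free parameter $\bar\eta\in(\eta,1)$ that trades off two competing losses (the fraction $\eta/\bar\eta$ retained per coordinate versus the lower bound $(1-\bar\eta)/(|\mathcal{X}|-1)$ on each $I_{\pi(i)}$ that controls $M_i$). Choosing $\bar\eta=\sqrt{\eta}$ and then applying the elementary estimate $\log(1/x)\leq\log(e)/\sqrt{x}$ gives $\bar M\leq\bigl[52(1-\sqrt{\eta})^{-3/2}\log(e(|\mathcal{X}|-1)/\beta)\bigr]^{|\mathcal{X}|-1}$, and inverting $(1-\sqrt{\eta})^{3/2}$ is exactly what produces the $2/3$ exponent and the outer square.

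Your description of $a_0$ is roughly right in spirit; in the paper it is obtained by first fixing $k=1$, $\eta'=\tfrac12$ in the product construction (so $M'\approx 104\log((|\mathcal{X}|-1)^2/\beta)$ levels retain $\tfrac{\beta}{2(|\mathcal{X}|-1)}$), and then applying Corollary~\ref{cor:doublequant} to sub-quantize down to $M$ levels.
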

The proof of the lower bound is deferred to Section~\ref{sec:nonbinary}, whereas the upper bound follows trivially by noting that $|\m{X}|\mapsto \mathrm{ID}_M(|\m{X}|,\beta)$ is monotone non-increasing and invoking the upper bound on $\mathrm{ID}_M(2,\beta)$ from Theorem~\ref{thm:bindistillation}. 

The lower bound from Theorem~\ref{thm:nonbindistillation} states that for any $\epsilon>0$, all $P_{XY}$ and $k\in[|\m{X}|-1]$, it holds that $M=\m{O}\left(\left(\frac{1}{\epsilon\log(| \m{X} | /I(X;Y))}\right)^{k}\right)$ suffices to guarantee that $I(X;[Y]_M)>\frac{k}{|\m{X}|-1}(1-\epsilon)I(X;Y)$. In particular, choosing $k=1$, we obtain that $M=\m{O}\left(\frac{1}{\epsilon\log(| \m{X} |/I(X;Y))}\right)$ suffices to attain $I(X;[Y]_M)>\frac{1}{|\m{X}|-1}(1-\epsilon)I(X;Y)$ and, on the other hand, by the upper bound, there exist $P_{XY}$ for which $M=\Omega(1/\log(1/I(X;Y)))$ is required in order to attain $I(X;[Y]_M)>\frac{1}{|\m{X}|-1}I(X;Y)$. Thus, Theorem~\ref{thm:nonbindistillation} gives a tight characterization (up to constants independent of $I(X;Y)$) of the number of quantization levels required in order to maintain a fraction of $0<\eta<\frac{1}{|\m{X}|-1}$ of $I(X;Y)$. However, we were not successful in establishing an upper bound that match the lower bound within the range $\eta\in\left(\frac{1}{|\m{X}|-1},1\right)$. We nevertheless conjecture that for $\eta$ close to $1$ our lower bound is tight.

\begin{conjecture}
For any $|\m{X}|>2$, there exists some $\frac{|\m{X}| - 2}{|\m{X}|-1}
<\eta(|\m{X}|)<1$, $\beta(|\m{X}|)>0$ and a constant $c(|\m{X}|)>0$, such that for all $0<\beta<\beta(|\m{X}|)$ and $M<c(|\m{X}|)(\log(1/\beta))^{|\m{X}|-1}$, it holds that
\begin{align}
\mathrm{ID}_M(|\m{X}|,\beta)<\eta(|\m{X}|)\cdot\beta.
\end{align}
\end{conjecture}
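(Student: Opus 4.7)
The plan is to construct, for each $K>2$ and sufficiently small $\beta$, a joint distribution $P_{XY}$ with $|\mathcal{X}|=K$ and $I(X;Y)\asymp\beta$ for which every $M$-level quantizer $f(Y)$ with $M<c(K)(\log(1/\beta))^{K-1}$ satisfies $I(X;f(Y))\leq\eta(K)\cdot I(X;Y)$ for some $\eta(K)\in\left(\tfrac{K-2}{K-1},1\right)$. The construction I would attempt is a tensor-style lift of the binary hard instance $(B^*,W^*)$ that witnesses tightness of $f_{\text{upper}}$ in Theorem~\ref{thm:bindistillation}: there, the posteriors $P_{B^*\mid W^*=w}$ populate $N\asymp\log(1/\beta_0)$ geometrically spaced points of the $1$-dimensional simplex. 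Since the $K$-ary simplex is $(K-1)$-dimensional, a product of $K-1$ such instances should populate it with $\Theta(N^{K-1})=\Theta((\log(1/\beta))^{K-1})$ candidate posteriors arranged on a $(K-1)$-dimensional grid, which is exactly the scale at which the lower bound $a_{K-1}$ saturates.

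Concretely, I would take $K-1$ independent copies $(B_i,W_i)$ of the binary hard instance with single-letter mutual information $\beta_0\asymp\beta/K$ and prior $B_i\sim\Ber(q)$ for a small $q=q(K)$. Set $Y=(W_1,\ldots,W_{K-1})$ and define the $K$-ary input by $X=i$ if $B_i$ is the unique nonzero coordinate, $X=0$ if all $B_i=0$, and $X=0$ on the rare ``collision'' event. For appropriate $q$, this yields $|\mathcal{X}|=K$ and $I(X;Y)=\Theta(\beta)$, with the posterior map $y\mapsto P_{X\mid Y=y}$ determined by the coordinate-wise likelihood ratios $L_i(w_i)=dP_{W_i\mid B_i=1}/dP_{W_i\mid B_i=0}(w_i)$ via $P(X=0\mid Y=y)\propto 1$ and $P(X=i\mid Y=y)\propto L_i(w_i)$, and thus taking $\Theta((\log(1/\beta))^{K-1})$ distinct values arranged as a grid in the simplex.

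The hardness argument would then proceed in three steps. First, verify the mutual-information bookkeeping: the collision correction is $o(\beta)$, so $I(X;Y)\sim (K-1)\beta_0\asymp\beta$, and the grid of posteriors is well separated. Second, reduce an arbitrary $M$-level quantizer to per-coordinate information bounds through the chain rule $I(X;f(Y))=I(B_1,\ldots,B_{K-1};f(Y))\leq\sum_{i=1}^{K-1}I(B_i;f(Y)\mid B_{<i})$, together with the conditional-independence structure of the parallel channels. Third, apply Theorem~\ref{thm:bindistillation} in each coordinate to an effective binary instance whose quantization ``budget'' is $M^{1/(K-1)}$ levels; once $M^{1/(K-1)}\ll\log(1/\beta)$, each coordinate's preserved fraction is strictly below $1$, and tuning the constants forces the aggregate fraction to exceed the baseline $\tfrac{K-2}{K-1}$ coming from $a_{K-2}$ while staying bounded away from $1$.

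The principal obstacle is step two: a general $M$-ary partition of $\mathcal{W}_1\times\cdots\times\mathcal{W}_{K-1}$ does not factor into coordinate partitions, so a priori nothing forces its information-preservation to be dominated by the best tensorized $M$-cell allocation. The natural remedy is to average over all but one coordinate and observe that the induced map is still an $M$-level (possibly randomized) quantizer of the marginal binary channel, then invoke Theorem~\ref{thm:bindistillation} iteratively. The subtle quantitative point is that this averaging turns the hard binary instance into a mixture, and one must prevent the mixture from becoming ``easy'' enough to drive $\eta(K)$ all the way to $1$; this may require either extending Theorem~\ref{thm:bindistillation} to mixtures of hard binary instances or proving a tensor-style strong-data-processing inequality tailored to the construction. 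Handling the collision correction and the asymmetry between the $X=0$ atom and the remaining states is a comparatively minor secondary issue.
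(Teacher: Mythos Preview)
The statement you are attempting to prove is stated in the paper as a \emph{conjecture}, not a theorem. The authors write explicitly, just before stating it, that they ``were not successful in establishing an upper bound that match the lower bound within the range $\eta\in\left(\tfrac{1}{|\mathcal{X}|-1},1\right)$'' and that they ``nevertheless conjecture'' the lower bound is tight for $\eta$ close to $1$. There is therefore no proof in the paper to compare your proposal against.

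As for the proposal itself: it is not a proof but a plan, and you correctly identify its own gap. Your step two is the entire difficulty. An arbitrary $M$-cell partition of $\mathcal{W}_1\times\cdots\times\mathcal{W}_{K-1}$ need not behave like a product of coordinate partitions, and marginalizing $f(Y)$ onto a single coordinate $W_i$ yields a \emph{randomized} $M$-level quantizer whose effective number of levels is still $M$, not $M^{1/(K-1)}$. So the binary upper bound from Theorem~\ref{thm:binaryub} applied per coordinate only gives $I(B_i;f(Y))\lesssim M\beta_0/\log(1/\beta_0)$, which is vacuous once $M\gg\log(1/\beta)$, i.e., exactly in the regime $\log(1/\beta)\ll M\ll(\log(1/\beta))^{K-1}$ the conjecture targets. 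To make the argument work you would need a genuinely new ingredient---something like a tensorized strong data-processing inequality showing that a single $M$-cell partition cannot simultaneously resolve all $K-1$ geometric grids unless $M\gtrsim(\log(1/\beta))^{K-1}$---and neither the paper nor your sketch supplies one. This is presumably why the statement remains a conjecture.
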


%Note that substituting $n=2$ in the above yields $g_L = f_L$ and recovers the lower bound in Theorem 1 for $0 < \beta < \frac{1}{3}$. 

As discussed above, prior work~\cite{phtt11,tsv12,kt17} has focused on bounding the additive gap. This corresponds to bounding the so-called ``degrading cost''~\cite{tal15,kt17}, which is defined as
\begin{align}
\mathrm{DC}(|\m{X}|,M) \triangleq \sup_{0<\beta\leq\log |\m{X}|} \beta - \mathrm{ID}_M(|\m{X}|,\beta)
\end{align}  
in our notation. In particular, the bound derived in~\cite{kt17} on $\mathrm{DC}(|\m{X}|,M)$ is equivalent to the following ``constant-gap'' result: for every $0<\beta\leq \log{|\m{X}|}$, $$ \mathrm{ID}_M(|\m{X}|,\beta) \geq \beta -  \nu(|\m{X}|) M^{- 2/(|\m{X}|-1)}$$ for some function $\nu$.\footnote{It is also demonstrated in~\cite{tal15} that there exist values of $\beta$, for which this bound is tight. Specifically,~\cite{tal15} found a distribution $P_{XY}$ with $X\sim\Ber(1/2)$ and $I(X;Y)\approx0.2787$ for which $I(X;[Y]_M)<I(X;Y)-c M^{- 2}$ for some constant $c>0$.} For small $\beta$, however, results of this form are less informative. Indeed, for small $\beta$, this bound requires $M$ to scale like $\beta^{-(|\m{X}|-1)/2}$ in order to preserve a constant fraction of the mutual information. On the other hand, our result shows that scaling $M$ like $\m{O}((\log(1/\beta))^{|\m{X}|-1})$ suffices for joint distributions $P_{XY}$.

\underline{Notation:} In this paper, logarithms are generally taken ro base $2$, and all information measures are given in bits.  When a logarithm is taken to base $e$, we use the notation $\ln$ instead of $\log$. We denote the binary entropy function by $h(t)=-t\log(t)-(1-t)\log(1-t)$, and its inverse restricted to the interval $[0,1/2]$ by $h^{-1}(t)$. The notation $\lfloor t \rfloor$ denotes the ``floor'' operation, i.e., the largest integer smaller than or equal to $t$.

\section{Properties of $I(X;[Y]_M)$}\label{sec:properties}

Let $P_{XY}$ be a joint distribution on $\m{X}\times\m{Y}$ and consider the function $I(X;[Y]_M)$, as defined in~\eqref{eq:gmdef}. The restriction to deterministic functions incurs no loss of generality, see e.g.,~\cite{ky14}. Indeed, any random function of $y$, can be expressed as $f(y,U)$ where $U$ is some random variable statistically independent of $(X,Y)$. Thus,
\begin{align}
I(X;f(Y,U))\leq I(X;f(Y,U),U)=I(X;f(Y,U)|U)\label{eq:detran}
\end{align}
and hence there must exist some $u$ for which $I(X;f(Y,u))\geq I(X;f(Y,U))$. %\footnote{In fact, even if the constraint $|f(\cdot)|\leq M$ is relaxed to $H(f(Y))\leq\log(M)$, the optimal quantizer $f(\cdot)$ is still deterministic~\cite{hps18}.}
Furthermore, for any function $f: \m{Y} \rightarrow [M]$, we can associate a disjoint partition of the $|\m{X}|$-dimensional cube $[0,1]^{|\m{X}|}$ into $M$ regions $\m{I}_1,\ldots,\m{I}_M$, such that $f(y)=i$ iff $P_{X|Y=y}\in\m{I}_i$ for $i=1,\ldots,M$.\footnote{This statement may fail if there exist $y_0\neq y_1\in\m{Y}$ with $P_{X|Y=y_0}=P_{X|Y=y_1}$, as the function $f$ may map them to different values $f(y_0)\neq f(y_1)$. However, we can introduce the random variable $T$ which merges all values of $y\in\m{Y}$ with the same $P_{X|Y=y}$ into a single value. By Fisher's factorization criterion~\cite[Section 3.1]{pwLectureNotes}, $T$ is a sufficient statistic of $Y$ for $X$. By Proposition~\ref{prop:dpi} below, we therefore have that $I(X;[Y]_M)=I(X;[T]_M)$. Therefore, even if there exist various symbols in $\m{Y}$ with the same $P_{X|Y=y}$, there is no loss of generality in restricting attention to quantizers associated with an $M$-partition of  $[0,1]^{|\m{X}|}$.} A result of Burshtein et al.~\cite[Theorem 1]{bpkn92} shows that the maximum in~\eqref{eq:gmdef} can without loss of generality be restricted to functions for which there exists an associated partition where the regions $\m{I}_1,\ldots,\m{I}_M$ are all convex.

Below, we state simple upper and lower bounds on $I(X;[Y]_M).$

\begin{proposition}[Simple bounds]
For any distribution $P_{XY}$ on $\m{X}\times\m{Y}$ with a finite output alphabet, and $M<|\mathcal{Y}|$,
\begin{align}
\frac{M-1}{|\mathcal{Y}|}I(X;Y)\leq I(X;[Y]_M) \leq \min\{I(X;Y),\log(M)\}.\nonumber
\end{align}
\label{prop:trivbounds}
\end{proposition}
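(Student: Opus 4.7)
The plan is to handle the upper bound via two elementary inequalities and the lower bound via a probabilistic construction that invokes the averaging argument already spelled out in~\eqref{eq:detran} above.

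For the upper bound, the inequality $I(X;[Y]_M)\leq I(X;Y)$ follows directly from the data processing inequality applied to any deterministic quantizer $\tilde Y=f(Y)$, while $I(X;[Y]_M)\leq \log M$ follows from $I(X;\tilde Y)\leq H(\tilde Y)\leq \log M$, since $\tilde Y$ takes at most $M$ values. Both inequalities are preserved under the supremum over $\tilde Y\in[Y]_M$, and their minimum is the stated upper bound.

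For the lower bound, let $S\subseteq\mathcal{Y}$ be a uniformly random subset of size exactly $M-1$, sampled independently of $(X,Y)$, and let $\star\notin\mathcal{Y}$ be a fresh symbol. Consider the randomized quantizer $\tilde Y_S$ that outputs $Y$ when $Y\in S$ and outputs $\star$ otherwise. The key observation is that, for \emph{every} realization $S=s$, the output $\tilde Y_s$ is a deterministic function of $Y$ whose range has cardinality at most $|s|+1=M$, so $\tilde Y_s\in[Y]_M$.

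Conditioning on $S=s$ and expanding the mutual information in KL form,
\[
I(X;\tilde Y_s)=\sum_{y\in s}P_Y(y)\,D(P_{X|Y=y}\|P_X)+P(Y\notin s)\,D(P_{X|Y\notin s}\|P_X).
\]
Dropping the non-negative second summand and averaging over $S$, and using that $\Pr(y\in S)=(M-1)/|\mathcal{Y}|$ for every $y\in\mathcal{Y}$, I obtain
\[
\mathbb{E}_S\!\left[I(X;\tilde Y_S)\right]\geq \frac{M-1}{|\mathcal{Y}|}\sum_{y\in\mathcal{Y}}P_Y(y)\,D(P_{X|Y=y}\|P_X)=\frac{M-1}{|\mathcal{Y}|}I(X;Y).
\]
The averaging step from~\eqref{eq:detran} then guarantees the existence of a particular $s^\star$ with $I(X;\tilde Y_{s^\star})\geq \mathbb{E}_S[I(X;\tilde Y_S)]$, which yields the lower bound. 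I do not foresee a serious obstacle; the only point requiring mild care is that merging the entire complement of $S$ into the single symbol $\star$ keeps the output alphabet at size $M$ rather than $|\mathcal{Y}|+1$.
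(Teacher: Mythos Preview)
Your proof is correct. The upper bound is handled identically to the paper. For the lower bound, however, you take a genuinely different route: you draw a uniformly random $(M-1)$-subset $S\subseteq\mathcal{Y}$, observe that each $y$ lands in $S$ with probability $(M-1)/|\mathcal{Y}|$, and extract a good deterministic quantizer by averaging. The paper instead argues \emph{greedily}: it sorts the symbols of $\mathcal{Y}$ in decreasing order of $P_Y(y)\,D(P_{X|Y=y}\|P_X)$, keeps the top $M-1$ symbols intact, and merges the remaining ones into a single bin; since the average of the $M-1$ largest terms is at least the overall average, the retained contribution is at least $\tfrac{M-1}{|\mathcal{Y}|}I(X;Y)$.

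Both arguments drop the same nonnegative ``merged-bin'' term and arrive at the identical bound. The paper's argument is constructive and exhibits an explicit quantizer, whereas yours is existential via averaging; on the other hand, your probabilistic proof avoids the sorting step and would generalize verbatim to settings where only the average of the per-symbol contributions is known. One minor remark: the invocation of~\eqref{eq:detran} is really just the elementary fact that some realization meets the mean; you do not need the full ``random quantizers are no better than deterministic ones'' argument there, since you are averaging $I(X;\tilde Y_s)$ over $s$ directly rather than bounding $I(X;\tilde Y_S)$ for random $S$.
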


\begin{proof}
The upper bound does not require any assumptions on $\m{Y}$ and follows from the data processing inequality ($X-Y-f(Y)$ forms a Markov chain in this order), and from $I(X;f(Y))\leq H(f(Y))\leq \log(M)$.

For the lower bound, we can identify the elements of $\mathcal{Y}$ with $\{1,\ldots,|\mathcal{Y}|\}$ such that
\begin{align}
P_Y(1)D(P_{X|Y=1}||P_X)\geq\cdots\geq P_Y(|\mathcal{Y}|)D(P_{X|Y=|\mathcal{Y}|}||P_X)\nonumber
\end{align}
and take the quantization function
\begin{align*}
    f(y)=\begin{cases}
    y & \text{if }y<M,\\
    M & \text{otherwise.}
    \end{cases}
\end{align*}
Since $P_Y(y)D(P_{X|Y =y}\|P_X)$ is monotonically decreasing in $y$, we have
\begin{align*}
\frac{1}{M-1}\sum_{y=1}^{M-1} P_{Y}(y) D(P_{X|Y =y}\|P_X) &\ge \frac{1}{|\mathcal{Y}|}\sum_{y \in \mathcal{Y}} P_{Y}(y) D(P_{X|Y =y}\|P_X)\nonumber\\
&=\frac{1}{|\mathcal{Y}|}I(X;Y).
\end{align*}
Therefore, it follows that 
\begin{align*}
I(X;f(Y)) &= \sum_{y=1}^{M-1}P_{Y}(y) D(P_{X|Y =y}\|P_X) +  \left(\sum_{y\ge M} P_{Y}(y)\right)\cdot D(P_{X|f(Y)=M}\|P_X) \\
&\ge \sum_{y=1}^{M-1}P_{Y}(y) D(P_{X|Y =y}\|P_X)\nonumber\\
& \ge \frac{M-1}{|\mathcal{Y}|} I(X;Y),
\end{align*}
as claimed.
%Recalling that $I(X;Y)=\sum_{y} P_Y(y)D(P_{X|Y=y}||P_X)$ we see that
%\begin{align*}
%I(X;f(Y))\geq \frac{M-1}{|\mathcal{Y}|}I(X;Y).
%\end{align*}
\end{proof}

For $K<M$, we can construct a (possibly sub-optimal) $K$-level quantizer by first finding the optimal $M$-level quantizer and then quantizing its output to $K$-levels. This together with the lower bound in Proposition~\ref{prop:trivbounds}, yields the following.
\begin{corollary}
For natural numbers $K<M$ we have
\begin{align}
I(X;[Y]_K) \geq\frac{K-1}{M} I(X;[Y]_M).\nonumber
\end{align}
\label{cor:doublequant}
\end{corollary}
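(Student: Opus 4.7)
The plan is to combine the two tools that the paragraph before Corollary~\ref{cor:doublequant} hints at: the optimal $M$-level quantizer of $Y$, and the simple lower bound in Proposition~\ref{prop:trivbounds} applied to that quantizer's output.

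First, I would let $f^\star:\m{Y}\to[M]$ be an optimal $M$-level quantizer in~\eqref{eq:gmdef}, and set $Z=f^\star(Y)$. By definition, $I(X;Z)=I(X;[Y]_M)$, and the alphabet of $Z$ satisfies $|\m{Z}|\le M$. Then I would apply the lower bound from Proposition~\ref{prop:trivbounds} to the joint distribution $P_{XZ}$ with the parameter $M$ in that proposition replaced by $K$ and $|\m{Y}|$ replaced by $|\m{Z}|\le M$. This yields
\begin{align}
I(X;[Z]_K) \;\ge\; \frac{K-1}{|\m{Z}|}\, I(X;Z) \;\ge\; \frac{K-1}{M}\, I(X;[Y]_M).\nonumber
\end{align}

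Next, I would observe that any $K$-level quantizer of $Z$ can be composed with $f^\star$ to produce a $K$-level quantizer of $Y$: if $g:[M]\to[K]$, then $g\circ f^\star:\m{Y}\to[K]$ lies in the set over which $I(X;[Y]_K)$ is optimized in~\eqref{eq:gmdef}. Hence $[Z]_K\subseteq [Y]_K$ in the natural sense, giving $I(X;[Y]_K)\ge I(X;[Z]_K)$. Chaining the two inequalities completes the proof.

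There is no genuine obstacle here; the only thing to be slightly careful about is the edge case $|\m{Z}|<M$, which only strengthens the intermediate bound (since then $(K-1)/|\m{Z}|\ge (K-1)/M$), and the fact that Proposition~\ref{prop:trivbounds} was stated under the hypothesis $M<|\m{Y}|$. In the present application this corresponds to requiring $K<|\m{Z}|$; when $K\ge |\m{Z}|$ the result is trivial because we can take the identity on $\m{Z}$, giving $I(X;[Y]_K)\ge I(X;Z)=I(X;[Y]_M)\ge \tfrac{K-1}{M}I(X;[Y]_M)$. So the statement holds in all cases.
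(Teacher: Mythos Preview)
Your proposal is correct and follows exactly the approach the paper intends: the paper's entire proof is the one-sentence remark preceding the corollary, namely ``construct a $K$-level quantizer by first finding the optimal $M$-level quantizer and then quantizing its output to $K$ levels, and apply the lower bound in Proposition~\ref{prop:trivbounds}.'' Your write-up simply spells this out in detail, including the harmless edge case $K\ge|\m{Z}|$; the only cosmetic point is that the supremum in~\eqref{eq:gmdef} need not be attained, but replacing $f^\star$ by a near-optimal quantizer and letting $\epsilon\to 0$ handles this immediately.
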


\begin{remark}
	It is tempting to expect that $I(X;[Y]_M)$ will have ``diminishing returns'' in $M$ for any $P_{XY}$, i.e., that it will satisfy the inequality $I(X;[Y]_{M_1 \cdot M_2})\leq I(X;[Y]_{M_1}) + I(X;[Y]_{M_2})$. However, as demonstrated by the following example, this is not the case. Let $X\sim\Unif(\{0,1,2,3\})$ and $Y=[X+Z]\bmod 4$, where $Z$ is additive noise statistically independent of $X$ with $\Pr(Z=0)=\delta$ and $\Pr(Z=1)=\Pr(Z=2)=\Pr(Z=3)=(1-\delta)/3$. Since $H(Z)=h(\delta)+(1-\delta)\log(3)$, we have that
	\begin{align}
	I(X;[Y]_4)=I(X;Y)=2-h(\delta)-(1-\delta)\log(3).\label{eq:remfullent}
	\end{align}
	Furthermore, by the symmetry of the distribution $P_{XY}$, it is clear that that $I(X;f(Y))$ depends on $f$ only through $|f^{-1}(1)|=|\{y\in\m{Y} \ : \ f(y)=1 \}|$. Thus, $I(X;[Y]_2)$ is attained either by the quantizer $f(0)=1$, $f(1)=f(2)=f(3)=2$, for which $I(X;f(Y))=h\left(\frac{1}{4}\right)-\frac{1}{4}h(\delta)-\frac{3}{4}h\left(\frac{1-\delta}{3}\right)$, or by the quantizer $f(0)=f(1)=1$, $f(2)=f(3)=2$, for which $I(X;f(Y))=1-h\left(\frac{1+2\delta}{3}\right)$.
Comparing the two expressions, we see that
	\begin{align}
	I(X;[Y]_2)=\begin{cases}
	h\left(\frac{1}{4}\right)-\frac{1}{4}h(\delta)-\frac{3}{4}h\left(\frac{1-\delta}{3}\right) & \delta\leq 1/4,\\
	1-h\left(\frac{1+2\delta}{3}\right) & \delta>1/4.
	\end{cases}
	\label{eq:remquantent}
	\end{align}
	It follows from straightforward computation that~\eqref{eq:remfullent} and~\eqref{eq:remquantent} imply that $ 2I(X;[Y]_2)< I(X;[Y]_4)$ for all $\delta\notin\{1/4,1\}$.
\end{remark}

\begin{proposition}[Data processing inequality]
If $X-Y-V$ form a Markov chain in this order, then
\begin{align}
I(X;[V]_M) \leq I(X;[Y]_M).\nonumber
\end{align}
\label{prop:dpi}
\end{proposition}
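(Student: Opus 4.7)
The plan is to reduce the statement to the already-established fact, recorded in~\eqref{eq:detran}, that randomized quantizers do not improve on deterministic ones. The key observation is that any deterministic $M$-ary quantizer of $V$ can be viewed as a \emph{randomized} $M$-ary quantizer of $Y$, which by~\eqref{eq:detran} can be derandomized without loss in $I(X;\cdot)$.

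First, I would fix an arbitrary deterministic quantizer $g:\m{V}\to[M]$ of $V$ and aim to exhibit a deterministic $M$-ary quantizer $f$ of $Y$ with $I(X;f(Y))\geq I(X;g(V))$. Using the standard functional representation of Markov kernels, since $X-Y-V$ is a Markov chain (so $V$ is conditionally independent of $X$ given $Y$), I can write $V=\phi(Y,U)$ for some measurable $\phi$ and some random variable $U$ statistically independent of $(X,Y)$. Then $g(V)=g(\phi(Y,U))$ can be expressed as $\tilde{f}(Y,U)$, where $\tilde{f}(y,u)\triangleq g(\phi(y,u))$ takes values in $[M]$; that is, $g(V)$ is precisely a randomized $M$-ary quantization of $Y$.

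Next, I would invoke the argument leading to~\eqref{eq:detran}: since $U$ is independent of $(X,Y)$, we have $I(X;\tilde{f}(Y,U))\leq I(X;\tilde{f}(Y,U),U)=I(X;\tilde{f}(Y,U)|U)$, so there exists a fixed realization $u_0$ of $U$ with $I(X;\tilde{f}(Y,u_0))\geq I(X;\tilde{f}(Y,U))=I(X;g(V))$. Since $y\mapsto \tilde{f}(y,u_0)$ is a deterministic $M$-ary quantizer of $Y$, this yields $I(X;[Y]_M)\geq I(X;g(V))$, and taking the supremum over $g$ gives the claim. I do not anticipate any substantive obstacle; the only mildly technical ingredient is the functional representation of $P_{V|Y}$ via independent auxiliary randomness, which is standard under the measurability assumptions implicit in the paper (and in any case can be bypassed by arguing directly that any stochastic kernel from $Y$ to $[M]$ is dominated, in the $I(X;\cdot)$ sense, by some deterministic map $\m{Y}\to[M]$).
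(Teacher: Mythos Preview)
Your proposal is correct and follows essentially the same approach as the paper: view any $M$-ary quantizer of $V$ as a randomized $M$-ary quantizer of $Y$ (via the channel $P_{V|Y}$, or equivalently via the functional representation $V=\phi(Y,U)$), then derandomize using~\eqref{eq:detran}. The paper phrases the first step as ``pass $Y$ through the channel $P_{V|Y}$ and apply $f$,'' but the argument is the same.
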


\begin{proof}
For any function $f:\m{V}\to[M]$ we can generate a random function $\tilde{f}:\m{Y}\to[M]$ which first passes $Y$ through the channel $P_{V|Y}$ and then applies $f$ on its output. By~\eqref{eq:detran}, we can always replace $\tilde{f}$ by some deterministic function $\bar{f}:\m{Y}\to[M]$ such that
\begin{align}
I(X;\bar{f}(Y))\geq I(X;\tilde{f}(Y))=I(X;f(V)).\nonumber
\end{align}
\end{proof}

\begin{corollary}
For any integer $K$ and $\beta>0$ we have that
\begin{align}
\inf_{\substack{ {P_{XY}:}\\{|\m{X}|=K}\\{I(X;Y)\geq\beta}}} I(X;[Y]_M)=\inf_{\substack{ {P_{XY}:}\\{|\m{X}|=K}\\{I(X;Y)=\beta}}} I(X;[Y]_M).\nonumber
\end{align}
\label{cor:infineq}
\end{corollary}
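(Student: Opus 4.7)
The plan is to prove the two inequalities between the infima separately. The inequality $\inf_{I(X;Y)\geq\beta} I(X;[Y]_M)\leq\inf_{I(X;Y)=\beta} I(X;[Y]_M)$ is immediate because the feasible set on the left contains that on the right. So the content is in the reverse direction: for any joint distribution $P_{XY}$ with $|\m{X}|=K$ and $I(X;Y)\geq\beta$, I need to exhibit a distribution $P_{XY'}$ with $|\m{X}|=K$, $I(X;Y')=\beta$, and $I(X;[Y']_M)\leq I(X;[Y]_M)$, whence taking the infimum on the left over $P_{XY'}$ will give the desired bound.

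To produce $Y'$ from $Y$, I will use an erasure construction: fix $\alpha\in[0,1]$ and let $E\sim\Ber(\alpha)$ be independent of $(X,Y)$. Define $Y'$ on the alphabet $\m{Y}\cup\{\mathsf{e}\}$ (where $\mathsf{e}\notin\m{Y}$) by $Y'=Y$ if $E=1$ and $Y'=\mathsf{e}$ if $E=0$. By construction $X-Y-Y'$ forms a Markov chain. A short computation using $I(X;Y')=I(X;Y',E)=I(X;Y'\mid E)$ gives
\begin{align}
I(X;Y')=\alpha I(X;Y\mid E=1)+(1-\alpha)I(X;\mathsf{e}\mid E=0)=\alpha I(X;Y).\nonumber
\end{align}
Since $I(X;Y)\geq\beta>0$, the choice $\alpha=\beta/I(X;Y)\in(0,1]$ yields exactly $I(X;Y')=\beta$. (The degenerate case $\beta=0$ is trivial, as both infima vanish by using any constant $Y$.)

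Having constructed $Y'$, the Markov chain $X-Y-Y'$ together with Proposition~\ref{prop:dpi} gives $I(X;[Y']_M)\leq I(X;[Y]_M)$. Since $P_{XY'}$ keeps $|\m{X}|=K$ and satisfies $I(X;Y')=\beta$ exactly, it lies in the feasible set of the right-hand infimum, so
\begin{align}
\inf_{\substack{P_{X\tilde Y}:\,|\m{X}|=K\\ I(X;\tilde Y)=\beta}} I(X;[\tilde Y]_M)\leq I(X;[Y']_M)\leq I(X;[Y]_M).\nonumber
\end{align}
Taking the infimum of the right-hand side over all $P_{XY}$ in the $I(X;Y)\geq\beta$ feasible set then yields $\inf_{=\beta}\leq\inf_{\geq\beta}$, completing the proof.

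There is no real obstacle here: the only subtlety is confirming that the introduced erasure symbol is allowed under the definition of $[Y]_M$, which requires $\m{Y}$ to be an arbitrary (possibly continuous) alphabet; since the paper explicitly allows this, enlarging $\m{Y}$ by a single point is harmless. The construction also never touches $\m{X}$, so the cardinality constraint $|\m{X}|=K$ is preserved.
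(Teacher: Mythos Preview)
Your proof is correct and essentially identical to the paper's own argument: the paper also passes $Y$ through an erasure channel with non-erasure probability $\beta/I(X;Y)$, notes that $I(X;\tilde Y)=\beta$, and invokes Proposition~\ref{prop:dpi} to obtain $I(X;[\tilde Y]_M)\le I(X;[Y]_M)$, which yields the nontrivial inequality.
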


\begin{proof}
For any $P_{XY}=P_X P_{Y|X}$ on $\m{X}\times{Y}$ with $I(X;Y)>\beta$, define the joint distribution $P_{X\tilde{Y}}=P_X P_{\tilde{Y}|X}$ on $\m{X}\times\{\m{Y}\cup\{?\}\}$, where the channel $P_{\tilde{Y}|X}$ is the concatenation of the channel $P_{Y|X}$ and an erasure channel $P_{\tilde{Y}|Y}$ that outputs $\tilde{Y}=Y$ with probability $\tfrac{\beta}{I(X;Y)}$ and $\tilde{Y}=\, \,?$ with probability $1-\tfrac{\beta}{I(X;Y)}$. Clearly, $X-Y-\tilde{Y}$ form a Markov chain in this order. By Proposition~\ref{prop:dpi}, we therefore have that $I(X;[\tilde{Y}]_M)\leq I(X;[Y]_M)$. In addition, we have that $I(X;\tilde{Y})=\beta$, and consequently,
\begin{align}
\inf_{\substack{ {P_{XY}:}\\{|\m{X}|=K}\\{I(X;Y)=\beta}}} I(X;[Y]_M)\leq\inf_{\substack{ {P_{XY}:}\\{|\m{X}|=K}\\{I(X;Y)\leq\beta}}} I(X;[Y]_M).\nonumber
\end{align}
The inequality in the other direction is obvious.
\end{proof}

\begin{proposition}
For a fixed $P_X$, the function $P_{Y|X}\mapsto I(X;[Y]_M)$ is convex.
\end{proposition}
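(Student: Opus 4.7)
The plan is to mimic the classical proof that ordinary mutual information is convex in the channel, adapting it to the quantized quantity $I(X;[Y]_M)$. Let $P_{Y|X}^{(0)}$ and $P_{Y|X}^{(1)}$ be two channels on $\m{X}\times\m{Y}$, fix $\lambda\in[0,1]$, and set the mixture $P_{Y|X}=\lambda P_{Y|X}^{(0)}+(1-\lambda)P_{Y|X}^{(1)}$. Introduce a time-sharing random variable $T\in\{0,1\}$ independent of $X$, with $\Pr(T=0)=\lambda$, and define the conditional channel $P_{Y|X,T=t}=P_{Y|X}^{(t)}$. Then the marginal of $(X,Y)$ under this construction is exactly $P_X P_{Y|X}$, so the LHS can be computed in the joint model $(X,T,Y)$.

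The main step is the following observation. For any deterministic quantizer $f:\m{Y}\to[M]$, the inclusion of $T$ can only increase mutual information, while the independence $X\perp T$ makes the cost of conditioning on $T$ free. Concretely,
\begin{align}
I(X;f(Y))\leq I(X;f(Y),T)=I(X;T)+I(X;f(Y)\mid T)=I(X;f(Y)\mid T),\nonumber
\end{align}
where the last equality uses $I(X;T)=0$. Expanding the conditioning,
\begin{align}
I(X;f(Y)\mid T)=\lambda\, I(X;f(Y)\mid T=0)+(1-\lambda)\, I(X;f(Y)\mid T=1).\nonumber
\end{align}
Under the event $T=t$, the pair $(X,Y)$ is distributed according to $P_X P_{Y|X}^{(t)}$, and $f$ remains a valid $M$-level quantizer of $Y$. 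Hence $I(X;f(Y)\mid T=t)\leq I(X;[Y]_M)\big|_{P_{Y|X}^{(t)}}$ for each $t\in\{0,1\}$.

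Combining the two displays and taking the supremum over $f:\m{Y}\to[M]$ on the left-hand side (which is permissible since the right-hand side does not depend on $f$) yields
\begin{align}
I(X;[Y]_M)\big|_{\lambda P_{Y|X}^{(0)}+(1-\lambda) P_{Y|X}^{(1)}} \leq \lambda\, I(X;[Y]_M)\big|_{P_{Y|X}^{(0)}} +(1-\lambda)\, I(X;[Y]_M)\big|_{P_{Y|X}^{(1)}},\nonumber
\end{align}
which is the desired convexity. There is essentially no obstacle here: the one point that requires a moment of care is that the supremum in the definition of $I(X;[Y]_M)$ is taken with the \emph{same} quantizer class $f:\m{Y}\to[M]$ for all three distributions, which is true because the alphabet $\m{Y}$ is fixed and a quantizer of $Y$ is well-defined irrespective of the law of $Y$; passing to deterministic $f$ is justified by~\eqref{eq:detran}.
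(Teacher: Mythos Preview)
Your proof is correct and essentially the same as the paper's. The paper observes that $I(X;[Y]_M)=\sup_f I^f(P_X\times P_{Y|X})$ with $I^f(\cdot)\triangleq I(X;f(Y))$, shows each $I^f$ is convex in the channel (via the induced channel $P_{f(Y)|X}$ and the standard convexity of mutual information), and then invokes ``supremum of convex functions is convex''; your time-sharing argument with the auxiliary $T$ is just the same convexity of $I^f$ proved inline, followed by the same passage to the supremum.
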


\begin{proof}
For any $f:\m{Y}\to[M]$, let $I^f(P_X \times P_{Y|X})\triangleq I(X;f(Y))$, and note that
\begin{align}
I(X;[Y]_M)=\sup_{f:\m{Y}\to[M] } I^f(P_{X}\times P_{Y|X}).\nonumber
\end{align}
Since the supremum of convex functions is also convex, it suffices to show that for a fixed $P_X$ the function $I^f(P_{X}\times P_{Y|X})$ is convex in $P_{Y|X}$.
To this end, consider two channels $P^1_{Y|X}$ and $P^2_{Y|X}$, and let $P^1_{f(Y)|X}$ and $P^2_{f(Y)|X}$, respectively, be the induced channels from $X$ to $f(Y)$. Clearly, for the channel $\alpha P^1_{Y|X}+(1-\alpha)P^2_{Y|X}$, the induced channel is $\alpha P^1_{f(Y)|X}+(1-\alpha)P^2_{f(Y)|X}$. Let $Z\in[M]$ be the output of this channel, when the input is $X$. From the convexity of the mutual information with respect to the channel we have
\begin{align}
I^f&\left(P_X \times \left(\alpha P^1_{Y|X}+(1-\alpha)P^2_{Y|X}\right)\right)=I(X;Z)\nonumber\\
&\leq \alpha I^f (P_X \times P^1_{Y|X})+(1-\alpha)I^f (P_X\times P^2_{Y|X}),\nonumber
\end{align}
as desired.
%Consider two channels $P^1_{Y|X}$ and $P^2_{Y|X}$ over the same alphabets. Let $U$ be a random variable, statistically independent of $X$, with $\Pr(U=1)=\alpha$ and $\Pr(U=2)=(1-\alpha)$, and assume that $Y$ is generated by passing $X$ through the channel $P^U_{Y|X}$. We have that $P_{XY}=\alpha P_X P^1_{Y|X} + (1-\alpha)P_X P^2_{Y|X}$. Now, for any $f:\m{Y}\mapsto[M]$ we have
%\begin{align}
%I(X;f(Y))&\leq I(X;f(Y),U)\nonumber\\
%&=I(X;f(Y)|U)\nonumber\\
%&= \alpha I(X;f(Y)|U=1)+(1-\alpha)I(X;f(Y)|U=2)\nonumber\\
%&\leq\alpha g_M(P_X P^1_{Y|X})+(1-\alpha)g_M(P_X P^2_{Y|X}),\nonumber
%\end{align}
%which shows that
%\begin{align}
%g_M(\alpha P_X P^1_{Y|X} &+ (1-\alpha)P_X P^2_{Y|X})\nonumber\\
%&\leq \alpha g_M(P_X P^1_{Y|X})+(1-\alpha)g_M(P_X P^2_{Y|X}),\nonumber
%\end{align}
%as desired.
\end{proof}

\begin{remark}
In contrast to mutual information, the functional $I(X;[Y]_M)$ is in general not concave in $P_X$ for a fixed $P_{Y|X}$. To see this consider the following example: $\m{X}=\m{Y}=\{1,2,3\}$, $M=2$, and the channel from $X$ to $Y$ is clean, i.e., $Y=X$. Let $P_{X_1}=(\tfrac{1}{2},\tfrac{1}{4},\tfrac{1}{4})$ and $P_{X_2}=(\tfrac{1}{4},\tfrac{1}{4},\tfrac{1}{2})$. Clearly, $I(X_1;[Y]_M) =1$ is attained by the quantizer $f(1)=1$, $f(2)=f(3)=2$, and $I(X_2;[Y]_M)=1$ is attained by the quantizer $f(1)=f(2)=1$, $f(3)=2$. For any $\alpha\in(0,1)$, let $P_{X} = \alpha P_{X_1} + (1-\alpha) P_{X_2}=\left(\tfrac{1+\alpha}{4},\tfrac{1}{4},\tfrac{2-\alpha}{4}\right)$. Since there do not exist two elements in the vector $P_Y=P_{X} =\left(\tfrac{1+\alpha}{4},\tfrac{1}{4},\tfrac{2-\alpha}{4}\right)$ that sum up to $1/2$, there does not exist any $2$-level quantizer for which $f(Y)\sim\Ber(1/2)$, and consequently
\begin{align}
I(X;[Y]_M)<1.\nonumber
\end{align}
\end{remark}

\begin{remark}[Complexity of finding the optimal quantizer]
For the special case where $Y=X$, the function $I(X;[Y]_M)$ reduces to\footnote{Recent work by Cicalese, Gargano and Vaccaro~\cite{cgv18} provides closed-form upper and lower bounds on $H([Y]_M)$.}
\begin{align}
H([Y]_M) \triangleq\sup_{\tilde{Y} \in [Y]_M}H(\tilde{Y}).\label{eq:gmx}
\end{align}
Furthermore, when $M=2$ the optimization problem in~\eqref{eq:gmx} is equivalent to
\begin{align}
\max_{\m{A}\subseteq \m{X} }\sum_{x \in \m{A}}p_x \text{ subject to: } \sum_{x \in \m{A}}p_x\leq \frac{1}{2},\label{eq:ssp}
\end{align}
where $p_x\triangleq\Pr(X=x)$, $x \in \m{X}$. The problem~\eqref{eq:ssp} is known as the \emph{subset sum problem} and is NP-hard~\cite{kpp04}. See also~\cite{cgv18}. Thus, when $|\m{X}|$ is not constrained, the problem of finding the optimal quantizer of $Y$ is in general NP-hard. Nevertheless, for the case where $\m{X}$ is binary, a dynamic programming algorithm finds the optimal $M$-level quantizer with complexity $\m{O}(M\cdot(|\m{Y}|-M))$, see~\cite{ky14,io14}.
\end{remark}

\begin{proposition}\label{prop:SDPI}
For any $\beta>0$, any natural $M$, and $n>\beta$, we have that
\begin{align}
\mathrm{ID}_M(2^n,\beta)\leq \frac{\log(M)}{n\cdot 2\log (e)}\beta.\label{eq:sdpibound}
\end{align}
Consequently, for any $\beta>0$ and natural $M$ we have that $\inf_{K}\mathrm{ID}_M(K,\beta)=0$, which motivates the restriction to finite input alphabets in our main theorems.
\end{proposition}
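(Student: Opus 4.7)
The plan is to exhibit a joint distribution $P_{XY}$ with $|\m{X}|=2^n$ and $I(X;Y)=\beta$ for which every $M$-ary quantizer of $Y$ preserves only an $\m{O}(\log(M)/n)$ fraction of the mutual information; letting $n\to\infty$ with $M,\beta$ fixed will then yield the second claim $\inf_K\mathrm{ID}_M(K,\beta)=0$. By Corollary~\ref{cor:infineq} we may enforce $I(X;Y)=\beta$ exactly.

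The natural candidate is an $n$-fold product of very noisy binary symmetric channels. I would take $X=(X_1,\ldots,X_n)$ with $X_i\stackrel{\mathrm{iid}}{\sim}\Ber(1/2)$ and $Y_i=X_i\oplus Z_i$ with $Z_i\stackrel{\mathrm{iid}}{\sim}\Ber(p)$, where $p\in(0,1/2)$ is chosen so that $n(1-h(p))=\beta$; the hypothesis $n>\beta$ is exactly what ensures $\beta/n<1$, so such a $p$ exists.

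To control $I(X;[Y]_M)$, I would apply the strong data processing inequality to the \emph{reverse} channel. Because $X$ is uniform and the BSC is symmetric, the posterior $P_{X|Y}$ factors as an $n$-fold product of $\mathrm{BSC}(p)$, whose KL-contraction coefficient equals $(1-2p)^2$; by Ahlswede--G\'{a}cs tensorization, the same coefficient bounds the contraction of the product channel. For any deterministic $f:\m{Y}\to[M]$, write $Z=f(Y)$; since $Z-Y-X$ is a Markov chain,
\begin{align*}
I(X;f(Y))=I(Z;X)\leq (1-2p)^2\,I(Z;Y)=(1-2p)^2\,H(Z)\leq (1-2p)^2\log M,
\end{align*}
where we used that $Z$ is a deterministic function of $Y$. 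Pinsker's inequality for Bernoulli, $1-h(p)\geq(1-2p)^2/(2\ln 2)$, then upper bounds $(1-2p)^2$ by a constant multiple of $\beta/n$, yielding the form of~\eqref{eq:sdpibound}.

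The main obstacle is nailing the precise leading constant $1/(2\log e)$: the chain above delivers the correct scaling $\m{O}(\beta\log(M)/n)$ but the Pinsker step is tight only in the limit $p\to 1/2$ and so matches the stated constant only up to an absolute factor. If needed, this can be tightened either by a more delicate low-noise Taylor expansion $1-h(1/2-\delta)=2\delta^2\log e+\m{O}(\delta^4)$, or by replacing the KL-SDPI by the chi-square identity $\chi^2(P_{Y|X=x}\|P_Y)=(1-2p)^2$ together with $D(P\|Q)\leq\log(1+\chi^2(P\|Q))$ applied to the induced distributions on $[M]$. In any case, fixing $\beta,M$ and sending $n\to\infty$ in~\eqref{eq:sdpibound} gives $\inf_K\mathrm{ID}_M(K,\beta)=0$, as required.
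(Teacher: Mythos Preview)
Your approach is essentially identical to the paper's: the same $n$-fold doubly symmetric binary source, the same tensorized SDPI with single-letter contraction coefficient $(1-2p)^2$ (the paper cites~\cite{agkn13,pw15} for exactly this), and the same Pinsker-type bound $1-h(p)\geq(1-2p)^2/(2\ln 2)$ to pass from $(1-2p)^2$ to $\beta/n$. Your concern about the constant is unwarranted: the paper uses precisely that inequality (written there as $h(\tfrac12-\epsilon)\leq 1-2\log(e)\,\epsilon^2$), so no chi-square or Taylor refinement is needed---in fact, carrying the arithmetic through in either argument gives $(1-2p)^2\leq 2\beta/(n\log e)$, which suggests the $1/(2\log e)$ in the proposition is a typo for $2/\log e$.
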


\begin{proof}
Let $n>\beta$ be a natural number, let $\delta=h^{-1}(1-\tfrac{\beta}{n})$, and let $Y\sim\Ber(1/2)$, $Z\sim\Ber(\delta)$, $Y\dperp Z$, and $X=Y\oplus Z$. Let $(X^n,Y^n)\sim P_{XY}^{\otimes n}$,  such that
\begin{align}
I(X^n;Y^n)=n(1-h(\delta))=\beta.
\end{align}
For product distributions $P^{\otimes n}_{XY}$ we have that for any $U$ satisfying the Markov chain $U-Y^n-X^n$, it holds that~\cite{agkn13,pw15}
\begin{align}
\frac{I(U;X^n)}{I(U;Y^n)}\leq \sup \frac{I(U;X)}{I(U;Y)},
\end{align}
where the supremum is taken with respect to all Markov chains $U-Y-X$ with fixed $P_{XY}$ and $I(U;Y)>0$. For the doubly symmetric binary source $P_{XY}$ of interest, this supremum is $(1-2\delta)^2$~\cite{pw15}, and consequently, we obtain that for any $f:\{0,1\}^n\to [M]$, it holds that
\begin{align}
I(f(Y^n);X^n)&\leq(1-2\delta)^2I(f(Y^n);Y^n)\nonumber\\
&\leq(1-2\delta)^2H(f(Y^n))\nonumber\\
&\leq (1-2\delta)^2 \log(M)\nonumber\\
&=\left(1-2h^{-1}\left(1-\frac{\beta}{n}\right)\right)^2 \log(M).\label{eq:deltaub}
\end{align}
Recalling that $h(\tfrac{1}{2}-\epsilon)=1-\sum_{k=1}^\infty \frac{\log(e)}{2k(2k-1)}(2\epsilon)^k$, we have that 
\begin{align}
h(\tfrac{1}{2}-\epsilon)\leq 1-2\log(e)\epsilon^2 \label{eq:hfirstorder}
\end{align} from which it follows that 
\begin{align}
h^{-1}(1-t)\geq \tfrac{1}{2}-\sqrt{\frac{t}{2\log(e)}}.\label{eq:invhlb}
\end{align}
Applying~\eqref{eq:invhlb} with $t=\tfrac{\beta}{n}$ in~\eqref{eq:deltaub}, yields the claimed result.
%
%For any $\beta>0$, take $n$ large enough such that $\beta\ll n\cdot 2\log(e)$, and set
%\begin{align}
%\delta=\frac{1}{2}-\sqrt{\frac{\beta}{n\cdot 2\log(e)}}
%\end{align}
%we obtain that 
%\begin{align}
%I(X^n;[Y^n]_M)&\leq \frac{\log(M)}{n\cdot 2\log(e)}\beta.
%\end{align}
%On the other hand, we have that
%\begin{align}
%I(X^n;Y^n)=n(1-h(\delta))=\beta(1+o(1)).
%\end{align}
%Thus, for $n$ large enough~\eqref{eq:sdpibound} indeed holds.
\end{proof}

%TODO: What can we say about $g_m(P_{XX})\triangleq \max_{f:\m{X}\mapsto[M]} H(f(X))$?
%Bad distribution is
%\begin{align}
%P_X=\left[2^{-1} \ 2^{-2} \cdots \ 2^{-M} \ \frac{2^{-M}}{n} \cdots \ \frac{2^{-M}}{n}\right].
%\end{align}
%For this distribution the entropy can be as large as desired by increasing $n$, but
%\begin{align}
%g_M(P_{XX})=(M-1)2^{-(M-1)}+\sum_{m=1}^{M-1}m2^{-m}\leq 2.
%\end{align}
%Does performing $|\m{X}|-m$ steps of the Huffman merging algorithm lead to the optimal $f$? Connection to compression schemes that admit efficient searching algorithms.

\vspace{4mm}

%For a distribution $P_X$ on $\m{X}$ we define the function $\mathrm{ID}_M(P_{X},\beta)$ as in~\eqref{eq:gmbetadef}.
%\begin{proposition}
%The function $\mathrm{ID}_M(P_{X},\beta)$ is convex and monotonically nondecreasing in $\beta$.
%\end{proposition}
%\begin{proof}
%Monotonicity follows from definition. For convexity, let $\beta_1,\beta_2\geq 0$ and $\alpha\in[0,1]$, and let $P_{Y_i|X}$ be the channel that attains\footnote{More precisely, this refers to a sequence of channels approaching the infimum.}  the infimum in~\eqref{eq:gmbetadef} for $\beta_i$. Consider a channel with output $Y=(Y_U,U)$, where $U=1$ with probability $\alpha$ and $U=2$ with probability $1-\alpha$, statistically independent of $X$. We have
%\begin{align}
%I(X;Y)=I(X;Y_U,U)=I(X;Y_U|U) \geq \alpha\beta_1+\alpha\beta_2.\nonumber
%\end{align}
%On the other hand, for any $f:\m{Y}\mapsto[M]$,
%\begin{align}
%I(X;f(Y))&=I(X;f(Y_U,U))\nonumber\\
%&\leq I(X;f(Y_U,U),U)\nonumber\\
%&=I(X;f(Y_U,U)|U)\nonumber\\
%&\leq \alpha I(X;[Y_1]_M) +(1-\alpha)I(X;[Y_2]_M) \nonumber\\
%&=\alpha \mathrm{ID}_M(P_{X},\beta_1)+(1-\alpha)\mathrm{ID}_M(P_{X},\beta_2), \nonumber
%\end{align}
%which shows that
%\begin{align}
%&\mathrm{ID}_M(P_X,\alpha\beta_1+(1-\alpha)\beta_2)\nonumber\\
%&\leq \alpha \mathrm{ID}_M(P_{X},\beta_1)+(1-\alpha)\mathrm{ID}_M(P_{X},\beta_2),
%\end{align}
%as desired.
%\end{proof}

\subsection{Relations to quantization for maximizing divergence}
For two distributions $P,Q$ on $\m{Y}$, $Q\ll P$, define
\begin{align}
\psi_M(P,Q)\triangleq\sup_{f:\m{Y}\to [M]}D(P^f||Q^f),
\end{align}
where $P^f$ and $Q^f$ are the distributions on $[M]$ induced by applying the function $f$ on the random variables generated by $P$ and $Q$, respectively.
%Further, define
%\begin{align}
%\psi_M(\m{Y},\beta)\triangleq\inf_{P,Q \, : \, D(P||Q)\geq\beta} \psi_M(P,Q),
%\end{align}
%where the infimum is taken w.r.t. all pairs of distributions $P,Q$ on $\m{Y}$, and finally define
%\begin{align}
%\psi_M(\m{Y})\triangleq\inf_{\beta}\frac{\psi_M(\m{Y},\beta)}{\beta}.
%\end{align}
A classical characterization of Gelfand-Yaglom-Perez~\cite[Section 3.4]{pwLectureNotes}, shows that $\psi_M(P,Q)\nearrow D(P\|Q)$ as $M\to \infty$. We are interested here in understanding the speed of this convergence. To this end, we prove the following result.
\begin{proposition}
For any $\beta,\epsilon>0$, there exists two distributions $P,Q$ on $\mathbb{N}$ such that $D(P\|Q)=\beta$ and $\psi_M(P,Q)\leq M\epsilon$ for any $M\in\mathbb{N}$.
\label{prop:KL}
\end{proposition}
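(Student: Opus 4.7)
The plan is to construct an explicit pair of distributions $(P,Q)$ on $\mathbb{N}$ by "spreading out" the divergence across many distinct likelihood-ratio values, so that no small $M$-level quantizer can recover more than a small fraction. A natural candidate is a distribution whose log-likelihood ratio $\Theta = \log(dP/dQ)$ takes many distinct values, uniformly distributed (under $\nu = $ law of $\Theta$ under $P$) on an interval of width on the order of $\beta/\epsilon$. Concretely, I would fix a large parameter $K = K(\beta,\epsilon) \asymp \beta/\epsilon$ and take $P, Q$ supported on $\{0,1,\ldots,K\} \subset \mathbb{N}$, with $Q$ having geometrically decaying masses $Q(n) = q \cdot 2^{-n}$ on $\{1,\ldots,K\}$ and $P$ having masses $P(n) = p/K$ uniform on the spike positions, padded by a background atom at $0$ such that $P(0) = Q(0)$. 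This yields a likelihood ratio $dP/dQ = (p/(qK)) \cdot 2^n$ on the spikes, spanning $K$ distinct values.

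The first step is to verify $D(P\|Q) = \beta$ by direct summation, which reduces (for appropriate $p = q$) to solving $p[(K+1)/2 - \log K] = \beta$; choosing $p \asymp \beta/K \asymp \epsilon$ achieves this. The second step is to upper-bound $\psi_M(P,Q)$. By a rearrangement argument based on the convexity of $x \mapsto x\log x$ and the fact that the log-likelihood ratio is a sufficient statistic, the adversarial $M$-level quantizer can be taken (without loss of generality) to be a threshold rule on the variable $n$. It thus suffices to consider partitions of the form $S_j = \{n_{j-1}+1,\ldots,n_j\}$ for thresholds $0 = n_0 < n_1 < \cdots < n_{M-1} = K$, and compute
\begin{align*}
D(P^f\|Q^f) \;=\; \sum_{j=1}^{M-1} \frac{k_j\,p}{K}\,\log\!\bigl(\bar w_j\bigr) \;+\; \text{background term}, \qquad \bar w_j = \frac{k_j/K}{2^{-n_{j-1}} - 2^{-n_j}},
\end{align*}
where $k_j = n_j - n_{j-1}$. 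Using $\log(\bar w_j) \leq n_{j-1} + \log k_j - \log K + O(1)$ and summing, the contribution of each threshold can be controlled so that the total grows at most linearly in $M$ with slope $\approx p \approx \epsilon$.

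The main obstacle is that a naive version of this spike construction yields the bound $\psi_M \lesssim \beta(M-2)/(M-1)$, which captures a constant fraction of $\beta$ already at moderate $M$ and therefore fails the stringent $M\epsilon$ target when $\beta \gg \epsilon$. To obtain the sharper linear-in-$M$ bound, one must refine the mass profile of the spikes, for instance by using multiple "strata" with geometrically decaying weights, or by introducing enough background points to dilute every possible spike group so that the adversary incurs a large $Q$-mass cost whenever it tries to isolate high-LLR atoms. The right tuning ensures that each threshold the adversary introduces unlocks at most $\epsilon$ of additional captured divergence, regardless of how the thresholds are placed. Concretely, one shows that for every interval partition of the likelihood ratio, the ``within-interval'' divergence $\mathbb{E}_{P^f}[D(P(\cdot|f)\|Q(\cdot|f))]$ is at least $\beta - M\epsilon$, which by the chain rule $D(P\|Q) = D(P^f\|Q^f) + \mathbb{E}_{P^f}[D(P(\cdot|f)\|Q(\cdot|f))]$ yields the claimed bound.

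Finally, the bound for large $M$ is handled trivially: once $M \geq K \asymp \beta/\epsilon$, the data-processing inequality already gives $\psi_M(P,Q) \leq D(P\|Q) = \beta \leq K\epsilon \leq M\epsilon$, and the finite alphabet $\{0,1,\ldots,K\}$ can be identified with a subset of $\mathbb{N}$ to conclude the construction.
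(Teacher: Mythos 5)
Your base construction does not work, and the repair you sketch is never carried out, so the proposal as written has a genuine gap. Concretely, with $P$ uniform (mass $p/K$ each) on the spike positions $\{1,\ldots,K\}$ and $Q(n)=q\,2^{-n}$ geometric (with $p=q$), take $A=\{\lceil K/2\rceil,\ldots,K\}$. Then $P(A)\approx p/2$ while $Q(A)\approx 2q\,2^{-K/2}$, so
\begin{align*}
P(A)\log\frac{P(A)}{Q(A)} \;\approx\; \frac{p}{2}\left(\frac{K}{2}-O(1)\right) \;\asymp\; pK \;\asymp\; \beta,
\end{align*}
and the complementary cell contributes only $-O(p)=-O(\epsilon)$. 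Thus the two-cell quantizer $\{A,A^c\}$ already achieves $D(P^f\|Q^f)=\Theta(\beta)$, violating the required $\psi_2\le 2\epsilon$ whenever $\beta\gg\epsilon$. You correctly flag that this naive construction fails, but the remedy (``multiple strata,'' ``background points to dilute every spike group'') is only gestured at: you never specify the modified $(P,Q)$, and the within-cell divergence bound $\mathbb{E}_{P^f}[D(P(\cdot\,|\,f)\|Q(\cdot\,|\,f))]\ge\beta-M\epsilon$ that would close the argument is asserted rather than proved. The structural flaw is precisely the uniform $P$-mass: the high log-likelihood-ratio atoms each carry $P$-mass $p/K$, so a block of $\Theta(K)$ of them carries $\Theta(p)$ total $P$-mass but exponentially small $Q$-mass, which is exactly the profile that makes a single quantizer cell carry $\Theta(\beta)$ divergence.

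The paper's construction sidesteps this by making $P$ itself decay geometrically, $P(m)\propto 2^{-m}$ on $\{1,\ldots,T\}$, and setting $Q(m)=g(m)P(m)$ with $g(m)=2^{-\alpha 2^m/m}$. This choice makes the per-atom KL contribution \emph{harmonic}: $P(m)\log(P(m)/Q(m))=\alpha/m$. Consequently the total $D(P\|Q)\approx\alpha\sum_{m>k}1/m$ can be tuned to any $\beta$ by letting $T$ grow (a diverging sum of vanishing terms), while for any set $A$ with $\ell\triangleq\min(A\setminus[k])>k$ one has $P(A\setminus[k])\le 2\cdot 2^{-\ell}$ and $Q(A\setminus[k])\ge 2^{-\ell}g(\ell)$, whence $P(A)\log\frac{P(A)}{Q(A)}\lesssim 2^{-\ell}+\alpha/\ell\le 2^{-k}+\alpha/k$, which is $O(\epsilon)$ once $k$ is large and $\alpha$ is small. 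Plugging this into the elementary bound $D(P^f\|Q^f)\le M\max_A P(A)\log(P(A)/Q(A))$ gives $\psi_M\le M\epsilon$ directly, with no need for the interval-quantizer WLOG or the chain-rule reduction you invoke. If you want to salvage your approach, the indispensable change is to let the spike masses under $P$ decay geometrically so that high-LLR atoms are $P$-negligible; once you do that, the ``strata'' you allude to is essentially the harmonic spreading of KL across atoms that the paper engineers via $g$.
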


\begin{proof}
Let $P$ and $Q$ be distributions on $[T+1]=\{1,\ldots,T+1\}$ with the following probability mass functions:
\begin{align}
P(m)&=\begin{cases}
2^{-m} & m=1,\ldots,T-1\\
2^{-(T-1)} & m=T\\
0 & m=T+1
\end{cases}\nonumber\\
Q(m)&=\begin{cases}
P(m) & 1\leq m\leq k\\
g(m)\cdot P(m) & k< m\leq T\\
1-\sum_{m=1}^{k}P(m)\\-\sum_{m=k+1}^T g(m)P(m) & m=T+1
\end{cases}\nonumber
\end{align}
where $0<g(m)\leq 1$ is some monotonically non-increasing function.
We have that
\begin{align}
D(P||Q)=\sum_{m=k+1}^{T-1} 2^{-m}\log(1/g(m))+2^{-(T-1)}\log(1/g(T)),
\end{align}
whereas for any $f:\{0,1,\ldots\}\to [M]$ we have that
\begin{align}
D(P^f||Q^f)&=\sum_{m=1}^M P(f^{-1}(m))\log\frac{P(f^{-1}(m))}{Q(f^{-1}(m))}\nonumber\\
&\leq M\cdot\max_{A\subset[T+1]} P(A)\log\frac{P(A)}{Q(A)},\label{eq:Amaximization}
\end{align}
where $f^{-1}(m)=\{y\in[T+1] \ : \ f(y)=m\}$, and the maximization in~\eqref{eq:Amaximization} is over all subsets of $[T+1]$. Furthermore, as $P(T+1)=0$, we see that the maximization in~\eqref{eq:Amaximization} may be restricted to all sets in $[T]$.
Let $A$ be a set achieving the maximum in~\eqref{eq:Amaximization}, and let $A_k\triangleq A\cap [k]$. Without loss of generality, we can assume that $A\setminus A_k\neq\emptyset$, as otherwise $P(A)\log (P(A)/Q(A))=0$. Thus, we can define $\ell\triangleq\min\{a \ : \ a\in A\setminus A_k\}$ and write
\begin{align}
P(A)&=P(A_k)+P(A\setminus A_k)\leq P(A_k)+\sum_{m=\ell}^{T}P(m) \leq P(A_k)+2\cdot2^{-\ell}\nonumber\\
Q(A)&=Q(A_k)+Q(A\setminus A_k)\geq P(A_k)+2^{-\ell}g(\ell)
\end{align}
Let $t=2^\ell P(A_k)+2$, and $\tau=2-g(\ell)$ such that the bounds above read as $P(A)\leq 2^{-\ell}t$ and $Q(A)\geq 2^{-\ell}(t-\tau)$, and
\begin{align}
P(A)\log\frac{P(A)}{Q(A)}&\leq -2^{-\ell}t\log\left(1-\frac{\tau}{t}\right).\label{eq:Dfbound}
\end{align}
We note that the function $\varphi(t)=-t\log(1-\tfrac{\tau}{t})$ is convex and monotone decreasing in the range $t>\tau$. This implies that~\eqref{eq:Dfbound} is maximized by choosing $A$ such that $P(A_k)=0$, for which $t=2$, and we obtain
\begin{align}
D(P^f\|Q^f)<M\cdot 2^{-(\ell-1)}\log\frac {2}{g(\ell)}.
\end{align}

Now, take $g(m)=2^{-\frac{\alpha 2^m}{m}}$ for some $0<\alpha\leq 1$, and note that it is indeed monotone non-increasing in $m=1,2,\ldots$,  which yields 
\begin{align}
D(P\|Q)&=\alpha\sum_{m=k+1}^{T-1}\frac{1}{m}+\frac{2\alpha}{T}=\alpha\left(\frac{2}{T}+\sum_{m=k+1}^{T-1}\frac{1}{m}\right)\label{eq:DPQ}\\
D(P^f\|Q^f)&\leq 2M\left(2^{-\ell}+\frac{\alpha}{\ell} \right)\leq M\left(2\cdot2^{-k}+\frac{2\alpha}{k} \right),\label{eq:DPfQf}
\end{align}
where we have used the fact that $\ell>k$ in the last inequality.
The statement follows by noting that we can always choose $k$ such that the right hand side of~\eqref{eq:DPfQf} is smaller than $\epsilon$ (recall that $\alpha<1$), and then we can choose $T>k$ and $\alpha$ such that the left hand side of~\eqref{eq:DPQ} is equal to $\beta$.
\end{proof}

Proposition~\ref{prop:KL} shows that for any fixed $M$, and any value of $D(P\|Q)$, the ratio $\psi_M(P,Q)/D(P\| Q)$ can be arbitrarily small.\footnote{However, under some restrictions on the distributions $P$ and $Q$, it is shown in~\cite{hbb18} that a $2$-level quantizer suffices to retain a constant fraction of $D(P\|Q)$.} Note that choosing a different $\varphi$-divergence in the definition of $\psi_M(P,Q)$ instead of the KL-divergence, could lead to very different results. In particular, under the total variation criterion, the $1$-bit quantizer $f(y)=\sign(P(y)-Q(y))$ achieves $d_{\text{TV}}(P^f,Q^f)=d_{\text{TV}}(P,Q)$ for any pair of distributions $P,Q$ on $\m{Y}$. An interesting question for future study is for which $\varphi$-divergences is the ratio $\psi_M(P,Q)/D_{\varphi}(P,Q)$ always positive.

%Finally, define the function
%\begin{align}
%g_M(k,\beta)\triangleq\inf_{P_X} g_M(P_X,\beta),
%\end{align}
%where the infimum is taken w.r.t. all distributions on the alphabet $\m{X}=\{1,\ldots,k\}$.

\section{Bounds for binary $X$}
\label{sec:binary}

In this section, we consider the case of $|\m{X}|=2$, and provide upper and lower bounds on $\mathrm{ID}_M(2,\beta)$. We begin by studying the case where $M=|\m{X}|=2$, through which we shall demonstrate why the multiplicative decrease in mutual information is small when $I(X;Y)$ is high (close to $1$). These findings illustrate that the more interesting regime for $\mathrm{ID}_{M}(2,\beta)$ is the one where $\beta$ is small. For this regime, we derive lower and upper bound that match up to constants that do not depend on $\beta$.

\subsection{Binary Quantization ($M=2$)}

The aim of this subsection is to analyze the performance of quantizers whose cardinality is equal to that of $\m{X}$. In this case, a natural choice for the quantizer is the maximum a posteriori (MAP) estimator of $X$ from $Y$. Intuitively, when $I(X;Y)$ is high (close to $H(X)$), the MAP estimator should not make many errors and the mutual information between it and $X$ should be high as well. We make this intuition precise below. However, when $I(X;Y)$ is low, it turns out that not only does the MAP estimator fail to retain a significant fraction of $I(X;Y)$, but it can be significantly inferior to other binary quantizers. %To simplify the exposition, in this subsection only, we assume $X\sim\Ber(1/2)$.

Assume without loss of generality that $\mathcal{X} = \{1,2\}$. The maximum a posteriori (MAP) quantizer is defined by
\begin{align}
    f_{\text{MAP}}(y)=\begin{cases}
    1 & \text{if }\Pr(X=1|Y=y)> 1/2\\
    2 & \text{if }\Pr(X=1|Y=y)< 1/2\\
    1\cdot U+2(1-U) & \text{if }\Pr(X=1|Y=y)= 1/2
    \end{cases},\label{eq:mapdef}
\end{align}
where $U\sim\Ber(1/2)$ is statistically independent of $(X,Y)$.
Let $P_{e,\text{MAP}}(y)\triangleq\Pr(f_{\text{MAP}}(Y)\neq X | Y=y)$ and $P_{e,\text{MAP}}\triangleq\mathbb{E}_Y P_{e,\text{MAP}}(Y)$. By the concavity of the binary entropy function $t\mapsto h(t)$, we have that $h(t)\geq 2t$ for any $0\leq t\leq 1/2$, with equality iff $t\in\{0,1/2\}$. Consequently,
\begin{align}
H(X|Y)=\mathbb{E}_Y h(P_{e,\text{MAP}}(Y))\geq 2 P_{e,\text{MAP}}.\label{eq:entpemapbound}
\end{align}
Let $X\sim\Ber(p)$ and $I(X;Y)=\beta$. We have that
\begin{align}
&I(X;f_{\text{MAP}}(Y))\\
&=H(X)-H(X|f_{\text{MAP}}(Y))\nonumber\\
&=h(p)- \Pr( f_{\text{MAP}}(Y) = 1) h\big(\Pr(X \neq 1 | f_{\text{MAP}}(Y) = 1)\big) \nonumber \\
&~~~~~~~~~~- \Pr( f_{\text{MAP}}(Y) = 2) h\big(\Pr(X \neq 2 | f_{\text{MAP}}(Y) = 2)\big) \nonumber\\
&\geq h(p)-h(P_{e,\text{MAP}})\label{eq:mapent}\\
&\geq h(p)-h\left(\frac{H(X|Y)}{2} \right)\label{eq:entpemap}\\
&=h(p)-h\left(\frac{h(p)-\beta}{2} \right),\label{eq:MAPmi}
\end{align}
where~\eqref{eq:mapent} follows from Jensen's inequality and concavity of $t\mapsto h(t)$, whereas~\eqref{eq:entpemap} follows from~\eqref{eq:entpemapbound}.
Since $\beta\leq h(p)\leq 1$, we have obtained that
\begin{align}
I(X;&f_{\text{MAP}}(Y))\geq \min_{\beta\leq t\leq 1} t-h\left(\frac{t-\beta}{2}\right)\nonumber\\
&=\begin{cases}
\beta+\frac{2}{5}-h\left(\frac{1}{5}\right) & \beta<\frac{3}{5}\\
1-h\left(\frac{1-\beta}{2} \right)  & \beta\geq\frac{3}{5}
\end{cases}.\label{eq:mapID}
\end{align}
Here,~\eqref{eq:mapID} follows by using the fact that the function $t \mapsto t - h\left(\frac{t-\beta}{2}\right)$ is convex and restricted to $\beta \le t \le 1$ to show that it attains its minimum at $t = \beta + \frac{2}{5}$ if $\beta < \frac{3}{5}$ and at $ t = 1$ otherwise. Since $I(X;[Y]_2)\geq I(X;f_{\text{MAP}}(Y))$, it follows that the right hand side of~\eqref{eq:mapID} is a lower bound on $\mathrm{ID}_2(2,\beta)$. 

In order to obtain an upper bound on $\mathrm{ID}_2(2,\beta)$, assume $X\sim\Ber(1/2)$ and $P_{Y|X}$ is the binary erasure channel (BEC), i.e., $\m{Y}=\{0,1,?\}$ and
\begin{align}
\Pr(Y=y|X=x)=\begin{cases}
\beta & \text{if } y=x\\
1-\beta & \text{if }y=?
\end{cases},\label{eq:BEC}
\end{align}
such that $I(X;Y)=\beta$. Consider the quantizer
\begin{align}
f_Z(y)=\begin{cases}
1 & \text{if } y\in\{1,?\}\\
2 & \text{if } y=0
\end{cases}.\nonumber
\end{align}
Since there exists an optimal deterministic quantizer, and any deterministic $1$-bit quantizer for the BEC output is of the form $f_Z(y)$, this must be an optimal $1$-bit quantizer. Note that the induced channel from $X$ to $f_Z(Y)$ is a $Z$-channel, and it satisfies
\begin{align}
I(X;f_Z(Y))&=1 - \left(1 - \beta + \frac{\beta}{2}\right)H(X|f_Z(Y) = 1) - \frac{\beta}{2}H(X|f_Z(Y) = 2)\nonumber\\
&=\frac{\beta}{2}h\left(\frac{1-\beta}{2-\beta}\right)+1-h\left(\frac{1-\beta}{2-\beta}\right).\label{eq:fZ}
\end{align}
By the optimality of the quantizer $f_Z(\cdot)$ for this particular distribution, it follows that the right hand side of~\eqref{eq:fZ} constitutes an upper bound on $\mathrm{ID}_2(2,\beta)$.

We have therefore established the following proposition.
\begin{proposition}
For all $3/5\leq\beta\leq 1$ we have
\begin{align}
1-h\left(\frac{1-\beta}{2}\right)\leq \mathrm{ID}_2(2,\beta)\leq 1-\frac{2-\beta}{2}h\left(\frac{1-\beta}{2-\beta}\right).
\end{align}
\label{prop:highSNR}
\end{proposition}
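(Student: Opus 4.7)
My plan is to treat the two bounds separately; both have essentially been assembled in the paragraphs leading up to the statement, and the task is to verify that the restriction $\beta \geq 3/5$ is precisely what makes the respective expressions sharp.

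For the \textbf{lower bound}, I would use the MAP quantizer $f_{\text{MAP}}$ from~\eqref{eq:mapdef}. The first step is the inequality $H(X|Y) \geq 2 P_{e,\text{MAP}}$, which follows from $h(t) \geq 2t$ on $[0,1/2]$ applied pointwise to $h(P_{e,\text{MAP}}(y))$ and then averaged. Since $H(X|Y) = h(p) - \beta$, this gives $P_{e,\text{MAP}} \leq (h(p)-\beta)/2 \leq 1/2$, so monotonicity of $h$ on $[0,1/2]$ combined with Jensen's inequality yields
$$I(X; f_{\text{MAP}}(Y)) \geq h(p) - h(P_{e,\text{MAP}}) \geq h(p) - h\!\left(\tfrac{h(p)-\beta}{2}\right).$$
I then minimize over admissible $p$ by parametrizing $t = h(p) \in [\beta,1]$. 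The map $\varphi(t) = t - h((t-\beta)/2)$ is convex in $t$; setting $\varphi'(t) = 0$ gives the unique stationary point $t^\star = \beta + 2/5$. When $\beta \geq 3/5$, we have $t^\star \geq 1$, so the constrained minimum on $[\beta,1]$ is attained at the endpoint $t = 1$, which evaluates to $1 - h((1-\beta)/2)$. Since $\mathrm{ID}_2(2,\beta) \geq I(X; f_{\text{MAP}}(Y))$ by the definition of the infimum, this is the desired lower bound.

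For the \textbf{upper bound}, I exhibit a single $P_{XY}$ achieving the claimed value. Take $X \sim \Ber(1/2)$ and $Y$ the output of the BEC with erasure probability $1-\beta$, as in~\eqref{eq:BEC}, so that $I(X;Y) = \beta$. By the reduction to deterministic quantizers~\eqref{eq:detran}, any optimal $2$-level quantizer of $Y \in \{0,1,?\}$ is one of the (essentially) two maps: either $? \mapsto 1$ or $? \mapsto 2$. The $X \leftrightarrow 1-X$ symmetry of the BEC makes these equivalent, so $I(X;[Y]_2) = I(X; f_Z(Y))$, and direct computation yields
$$I(X; f_Z(Y)) = 1 - \frac{2-\beta}{2}\, h\!\left(\frac{1-\beta}{2-\beta}\right),$$
as in~\eqref{eq:fZ}. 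Since $\mathrm{ID}_2(2,\beta)$ is an infimum over all $P_{XY}$ with $I(X;Y) \geq \beta$ (which by Corollary~\ref{cor:infineq} coincides with $I(X;Y) = \beta$), this specific example supplies the upper bound.

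The only non-routine step is identifying the correct regime $\beta \geq 3/5$ in the lower bound: one must confirm that the convex minimizer $t^\star = \beta + 2/5$ lies outside $[\beta,1]$ precisely when $\beta \geq 3/5$, so that the piecewise formula in~\eqref{eq:mapID} collapses to its second branch. All other ingredients (the $h(t) \geq 2t$ inequality, Jensen, and the optimality of $f_Z$ for the ternary BEC output) are elementary, so the proof reduces to citing~\eqref{eq:mapID} and~\eqref{eq:fZ} and noting that the hypothesis $\beta \geq 3/5$ selects the second branch in the former.
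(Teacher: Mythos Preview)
Your proposal is correct and follows essentially the same approach as the paper: the lower bound via the MAP quantizer, the inequality $h(t)\geq 2t$, Jensen, and the convex minimization over $t=h(p)\in[\beta,1]$ (with the stationary point $\beta+2/5$ exiting the interval exactly when $\beta\geq 3/5$); the upper bound via the uniform-input BEC and the quantizer $f_Z$, whose optimality among deterministic $2$-level maps on $\{0,1,?\}$ you justify by symmetry. One wording slip: the sentence ``$\mathrm{ID}_2(2,\beta)\geq I(X;f_{\text{MAP}}(Y))$ by the definition of the infimum'' is backwards as stated---what you mean (and have shown) is that the bound $I(X;[Y]_2)\geq 1-h((1-\beta)/2)$ holds for \emph{every} admissible $P_{XY}$, and hence passes to the infimum.
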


Thus, for large $\beta$, the loss for quantizing the output to one bit is small and the fraction of the mutual information that can be retained approaches $1$ as the mutual information increases. In particular, the natural MAP quantizer is never too bad, and retains a significant fraction of at least $1-h((1-\beta)/2)$ of the mutual information $\beta$.

In the small $\beta$ regime, we arrive at qualitatively different behavior.
We next show that the MAP quantizer can be highly sub-optimal when $\beta$ is small. To that end, consider again the distribution $X\sim\Ber(1/2)$ and $P_{Y|X}$ given by~\eqref{eq:BEC}. i.e., a BEC. It is easy to verify that in this case both inequalities in~\eqref{eq:MAPmi} are in fact equalities for all $0\leq\beta\leq 1$. It follows that for a BEC with capacity $\beta\ll 1$ and uniform input, we have that
\begin{align}
&I(X;f_{\text{MAP}}(Y))=1-h\left(\frac{1-\beta}{2}\right)=\frac{\log e}{2}\beta^2+o(\beta^2).\label{eq:mapsmall}\\
&I(X;f_Z(Y))=\frac{\beta}{2}h\left(\frac{1-\beta}{2-\beta}\right)+1-h\left(\frac{1-\beta}{2-\beta}\right)=\frac{\beta}{2}+o(\beta).\label{eq:Zsmall}
\end{align}
Thus, the asymmetric quantizer $f_Z(y)$ retains $50\%$ of the mutual information, whereas the fraction of mutual information retained by the symmetric MAP quantizer vanishes as $\beta$ goes to zero.

One can argue that $f_Z(y)$ is a MAP estimator just as $f_{\text{MAP}}(y)$, as the two quantizers attain the same error probability in guessing the value of $X$ based on $Y$, and dismiss our findings about the sub-optimality of $f_{\text{MAP}}(y)$ by attributing it to the randomness required by the MAP quantizer, as defined in~\eqref{eq:mapdef}, in the BEC setting. This is not the case however. To see this consider a channel with binary symmetric input and output alphabet $\mathcal{Y}=\{0,1\}\times\{g,b\}$, defined by
\begin{align}
\Pr(Y=y|X=x)=\begin{cases}
\beta & \text{if } y=(x,g)\\
(1-\beta)\left(\frac{1}{2}+\delta\right) & \text{if }y=(x,b)\\
(1-\beta)\left(\frac{1}{2}-\delta\right) & \text{if }y=(1-x,b)\\
\end{cases},\nonumber
\end{align}
for some $0\leq\beta\leq 1$ and $0\leq \delta\leq 1/2$. Note that for $\delta=0$, this channel becomes a BEC with capacity $1-\beta$. For any $\delta>0$, the corresponding MAP quantizer is deterministic, but as $\delta\to 0$, the channel approaches a BEC, and its performance becomes closer and closer to~\eqref{eq:mapsmall}. Similarly, the performance of a binary quantizer that assigns the same value to both ``bad'' outputs, i.e., $f(y)=2$ if $y=(0,g)$ and $f(y)=1$ otherwise, approach~\eqref{eq:Zsmall} as $\delta\to 0$.

%Next, we describe a binary quantizer structure that can always attain $I(X;f(Y))>\tfrac{I(X;Y)}{-\log I(X;Y)}$. This performance is significantly better than that of the MAP quantizer that can only retain mutual information of $I^2(X;Y)$. We will then also show that there exist distributions for which no binary quantizer can do better.

\subsection{Lower Bound on Quantized Mutual Information}

We prove the following lower bound on $I(X;[Y]_M)$.

\begin{theorem} 	For any $P_{XY}$ with $|\m{X}|=2$ and $I(X;Y)=\beta$, and any $\eta\in (0,1)$ we have that
	\begin{align}
	I(X;[Y]_{\bar{M}_2(\eta,\beta)})\geq \eta\beta,
	\label{eq:gMlower}
	\end{align}
	where
	\begin{align}
	\bar{M}_2(\eta,\beta)\triangleq\left\lfloor c_1(\eta)\max\left\{\log\left(\frac{1}{\beta}\right),1\right\}\right\rfloor,
	\end{align}
	and
	\begin{align}
	c_1(\eta) \triangleq \frac{52}{1-\eta}.\label{eq:c1}
	\end{align}
	\label{thm:binarylb}
\end{theorem}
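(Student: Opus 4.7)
The plan is to exhibit an explicit deterministic $M$-level quantizer and directly lower bound its output's mutual information with $X$. Because $X$ is binary, $q(y) \triangleq \Pr(X=1 \mid Y=y) \in [0,1]$ is a one-dimensional sufficient statistic for $X$, and by the convex-cell result of Burshtein et al.\ recalled in Section~\ref{sec:properties}, every optimal $M$-level quantizer corresponds to a partition of $[0,1]$ into $M$ subintervals. It therefore suffices to construct one such partition and control the associated information loss. Writing $p \triangleq \Pr(X=1)$ and $\phi(q) \triangleq D(\Ber(q) \| \Ber(p))$, I will use $I(X;Y) = \Expt[\phi(q(Y))] = \beta$ and, for a partition $\{I_k\}_{k=1}^M$ of $[0,1]$ with $\pi_k = \Pr(q(Y) \in I_k)$ and $\bar{q}_k = \Expt[q(Y) \mid q(Y) \in I_k]$, $I(X;f(Y)) = \sum_k \pi_k \phi(\bar{q}_k)$. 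The information deficit is then the Jensen gap $\sum_k \pi_k(\Expt[\phi(q(Y))\mid I_k] - \phi(\bar{q}_k))$, which is upper bounded by $\sum_k \pi_k (\max_{I_k}\phi - \min_{I_k}\phi)$.

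The core design rule is to place thresholds so that on each bin $\max_{I_k} \phi \leq (2-\eta)\min_{I_k} \phi$. When this holds, the per-bin loss is at most $(1-\eta)\min_{I_k}\phi \cdot \pi_k \leq (1-\eta)\pi_k \Expt[\phi(q(Y))\mid I_k]$, and summing yields total loss at most $(1-\eta)\beta$, hence $I(X;f(Y))\geq \eta\beta$. Since $\phi$ is strictly convex with a unique minimum at $p$, I would place thresholds geometrically (with ratio $2-\eta$ in $\phi$-value) separately on the two monotone regions $\{q \leq p\}$ and $\{q \geq p\}$. A single central bin around $p$ absorbs the region $\{\phi(q(Y)) \leq \phi_{\min}\}$ with $\phi_{\min}$ chosen as a small multiple of $(1-\eta)\beta$, so that the loss contributed by that bin is itself $O((1-\eta)\beta)$.

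The number of geometric steps needed to cover $[\phi_{\min}, \phi_{\max}]$ with ratio $2-\eta$ is $\log(\phi_{\max}/\phi_{\min})/\log(2-\eta)$. Using $\phi_{\max} \leq \log(1/\min(p,1-p))$ together with the fact that one can reduce to the case where $\min(p,1-p)$ is at least polynomial in $\beta$ (otherwise $\beta \leq h(p)$ is already very restrictive and the claim follows from simpler bounds), and $\phi_{\min} = \Theta((1-\eta)\beta)$, gives $O(\log(1/\beta)/\log(2-\eta))$ bins. This is $O(\log(1/\beta))$ for $\eta$ bounded away from $1$ and $O(\log(1/\beta)/(1-\eta))$ for $\eta$ close to $1$, exactly matching the scaling claimed by the theorem.

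The main technical obstacle is pinning down the explicit constant $52$ in $c_1(\eta)$: this requires careful bookkeeping of the two-sided partition, of the central absorbing bin, and of the $\log$-to-$\ln$ conversions, since minor inefficiencies at each step accumulate into the final constant. A secondary subtlety is the boundary regime $\beta \geq 1$, where the theorem replaces $\log(1/\beta)$ by $1$; this case I would dispatch separately by invoking Proposition~\ref{prop:trivbounds} combined with monotonicity of $I(X;[Y]_M)$ in $M$.
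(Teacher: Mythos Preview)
Your construction and the paper's are essentially the same: both quantize $[0,1]$ by placing thresholds geometrically in the value of $\phi(q)=d(q\|p)$, separately on each side of $p$, with a single central bin absorbing $\{\phi\leq\phi_{\min}\}$ for $\phi_{\min}=\epsilon\beta$. The paper phrases the idea via the survival function $\bar F(\gamma)=\Pr(D_Y\geq\gamma)$ and a Riemann-sum comparison, arriving at $I(X;f(Y))\geq\frac{1-\epsilon}{\theta}\beta$ with geometric ratio $\theta$; this is your max--min per-bin inequality reorganized. Your direct Jensen-gap bound $\pi_k(\max_{I_k}\phi-\min_{I_k}\phi)$ is arguably cleaner than the $\bar F$ bookkeeping, but the two are equivalent.

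There is, however, a genuine gap in how you close the argument. First, the regime where the theorem replaces $\log(1/\beta)$ by $1$ is $\beta\geq 1/2$, not $\beta\geq 1$ (for binary $X$ one always has $\beta\leq 1$). Second, and more importantly, Proposition~\ref{prop:trivbounds} cannot dispatch this regime: its lower bound $I(X;[Y]_M)\geq\frac{M-1}{|\mathcal{Y}|}I(X;Y)$ depends on $|\mathcal{Y}|$, which is unbounded here, so it says nothing about $\mathrm{ID}_M(2,\beta)$. In fact the geometric construction alone does not yield the constant $52$ uniformly in $\beta$: it produces roughly $M\approx\frac{\text{const}}{1-\eta}\log\!\big(\frac{\phi_{\max}}{(1-\eta)\beta}\big)$, and the stray $\log\!\big(\frac{1}{1-\eta}\big)$ term can be absorbed into $\log(1/\beta)$ only when $\beta\lesssim 1-\eta$. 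The paper therefore splits at $\beta=(1-\eta)/2$. Below that threshold the geometric quantizer gives $M\leq\big\lfloor\frac{32}{1-\eta}\log(1/\beta)\big\rfloor$, after bounding $\phi_{\max}=\kappa$ via $\bar\alpha\geq h^{-1}(\beta)$ and an explicit lower bound on $h^{-1}$. For $\beta\geq(1-\eta)/2$ the paper instead invokes the additive-gap bound of~\cite{kt17}, namely $\beta-I(X;[Y]_M)\leq 1268\,M^{-2}$, from which $M=\big\lfloor\frac{52}{1-\eta}\big\rfloor$ suffices directly. This second ingredient is missing from your proposal and is what actually fixes the constant.
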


\begin{proof}
	Consider the joint distribution $P_{XY}$, and for any $y\in\m{Y}$ define $\alpha_y\triangleq \Pr(X=1|Y=y)$, $\bar{\alpha}\triangleq\mathbb{E}(\alpha_Y)=\Pr(X=1)$ and
	\begin{align}
	D_y\triangleq D(P_{X|Y=y}\|P_X)=d\left(\alpha_y\left\|\right.\bar{\alpha}\right),
	\end{align}
	where $d(p_1\|p_2)\triangleq p_1\log(p_1/p_2)+(1-p_1)\log((1-p_1)/(1-p_2))$ is the binary KL divergence function. Let 
	\[
	\kappa \triangleq \max\left\{\log\left(\frac{1}{\bar{\alpha}}\right),\log\left(\frac{1}{1-\bar{\alpha}}\right)\right\}=\max\{d(1\|\bar{\alpha}),d(0\|\bar{\alpha})\},	\]
	and note that the convexity of $\alpha_y\mapsto d(\alpha_y\|\bar{\alpha})$ implies that $D_y\leq\kappa$ for all $y\in\m{Y}$. 
	We further define the function
	\begin{align}
	\bar{F}(\gamma)\triangleq \Pr(D_Y\geq \gamma),
	\end{align}
	and note that it is non-increasing and satisfies
	\begin{align}
	I(X;Y)=\mathbb{E}D_Y=\int_{0}^{\gamma^*}\bar{F}(\gamma)d\gamma,
	\label{eq:DYint}
	\end{align}
	where $\gamma^*=\sup_{y\in\m{Y}} D_y\leq\kappa$. Let $L$ be some natural number, let $0=\gamma_0\leq\gamma_1\leq\cdots\leq\gamma_L\leq\gamma_{L+1}=\gamma^*+\delta$, for some arbitrary small $\delta>0$, and define the following $(2L+1)$-level quantizer
	\begin{align}
	f(y)=\begin{cases}
	0 & d(\alpha_y\|\bar{\alpha})\leq \gamma_1\\
	-\ell & \alpha_y<\bar{\alpha}, \gamma_\ell\leq d(\alpha_y\|\bar{\alpha})<\gamma_{\ell+1}\\
	\ell & \alpha_y>\bar{\alpha}, \gamma_\ell\leq d(\alpha_y\|\bar{\alpha})<\gamma_{\ell+1}
	\end{cases}.\label{eq:uniquant}
	\end{align}
	The function $f$ above, induces a partition of the interval $[0,1]$ to the intervals $\m{I}_\ell=\{\alpha_y\in[0,1] \ | \ f(y)=\ell \}$, for $\ell\in\{-L,\ldots,L\}$. Consequently, $\mathbb{E}[\alpha_Y|f(Y)=\ell]\in\m{I}_\ell$, and we have that for $\ell=1,\ldots,L$
	\begin{align}
	d\left(\mathbb{E}[\alpha_Y|f(Y)=-\ell]\|\bar{\alpha}\right)\geq \gamma_{\ell}, \ d\left(\mathbb{E}[\alpha_Y|f(Y)=\ell]\|\bar{\alpha}\right)\geq \gamma_{\ell}.\nonumber
	\end{align}
	Furthermore, by the definition of $\bar{F}(\gamma)$ we also have
	\begin{align}
	\Pr\left(\{f(Y)=-\ell\}\cup \{f(Y)=\ell\} \right)=\bar{F}(\gamma_{\ell})-\bar{F}(\gamma_{\ell+1}).\nonumber
	\end{align}
	Thus,
	\begin{align}
	I(&X;f(Y))\nonumber\\&= \sum_{\ell = -L}^L \Pr(f(Y) = \ell) D(P_{X|f(Y)=\ell}\|P_X) \nonumber\\
	& \geq \sum_{\ell=1}^L \left(\bar{F}(\gamma_{\ell})-\bar{F}(\gamma_{\ell+1})\right) \gamma_\ell\nonumber\\
	&=\bar{F}({\gamma_1})\gamma_1+\sum_{\ell=2}^L\bar{F}(\gamma_\ell)(\gamma_{\ell}-\gamma_{\ell-1})-\bar{F}(\gamma_{L+1})\gamma_{L}\nonumber\\
	&=\sum_{\ell=1}^L\bar{F}(\gamma_\ell)(\gamma_{\ell}-\gamma_{\ell-1}),\label{eq:Ifxsum}
	\end{align}
	where in the last equality we used $\gamma_0=0$ and $\bar{F}(\gamma_{L+1})=\bar{F}(\gamma^*+\delta)=0$. Our goal is therefore to choose the numbers $\{\gamma_{\ell}\}_{\ell=1}^L$ such as to maximize~\eqref{eq:Ifxsum}.
	For a general $L$, this problem is difficult and we therefore resort to a possibly suboptimal choice according to the rule
	\begin{align}
	\gamma_1=\epsilon I(X;Y),\ \theta=\left(\frac{\gamma_1}{\kappa}\right)^{-\frac{1}{L}}, \  \gamma_{\ell}=\gamma_1\cdot\theta^{\ell-1},
	\label{eq:thetarule}
	\end{align}
	for $\ell=2,\ldots,L,L+1$ and some $0<\epsilon<1$ to be specified. As $\gamma_1\leq\kappa$, we have that $\theta\geq 1$.
	Note that this choice guarantees that
	\begin{align}
	\gamma_{\ell+1}-\gamma_{\ell} \leq \theta\left(\gamma_\ell-\gamma_{\ell-1} \right), \ \ell=1,\ldots,L.
	\end{align}
	This implies that
	\begin{align}
	I(X;Y)&=\int_{0}^{\kappa}\bar{F}(\gamma)d\gamma\nonumber\\
	&=\sum_{\ell=0}^L\int_{\gamma_\ell}^{\gamma_{\ell+1}}\bar{F}(\gamma)d\gamma\nonumber\\
	&\leq \sum_{\ell=0}^L(\gamma_{\ell+1}-\gamma_{\ell})\bar{F}(\gamma_{\ell})\nonumber\\
	&\leq\gamma_1+\theta\sum_{\ell=1}^L(\gamma_{\ell}-\gamma_{\ell-1})\bar{F}(\gamma_{\ell})\nonumber\\
	&\leq \gamma_1+\theta I(X;f(Y)).
	\end{align}
	Therefore,
	\begin{align}\label{eq:epstheta}
	I(X;f(Y))&\geq \frac{(1-\epsilon)}{\theta}I(X;Y). 
	\end{align}
	Substituting in
	\begin{align}
	\epsilon = \frac{1-\eta}{2},~~L= \Bigg\lceil \frac{2\log\left(\frac{2\kappa}{(1-\eta)I(X;Y)}\right)}{(1-\eta)} \Bigg\rceil,\label{eq:epsdef}
	\end{align} it follows that 
	\begin{align}
	\theta \le 2^{\frac{(1-\eta)}{2}}.\label{eq:thetauba}
	\end{align}
	Using this, from~\eqref{eq:epstheta} it follows that
	\begin{align}
	I(X;f(Y))\geq \frac{(1-\frac{(1-\eta)}{2})}{2^{\frac{(1-\eta)}{2}}}I(X;Y) \geq \eta I(X;Y) \nonumber
	\end{align}
	where the second inequality follows since $\frac{1-x}{2^x} \ge 1-2x$ for $0 \le x \le \frac{1}{2}$. This can be established by observing that for the function $g(x) = \frac{1-x}{2^x} - (1-2x)$ we have $g(0) = 0$ and $g'(x) \ge 0$ for all $0 \le x \le \frac{1}{2}$, which implies that in this interval $g(x) \ge 0$. Recall that $|f(\cdot)|=2L+1$, and note that $2\lceil x\rceil+1\leq \lfloor 4x\rfloor$ for $x>5/4$. We have therefore shown that if $ \frac{2\log\left(\frac{2\kappa}{(1-\eta)I(X;Y)}\right)}{(1-\eta)} >5/4$, then $I(X;[Y]_M)\geq \eta I(X;Y)$ for
	\begin{align}
	M =  \Bigg\lfloor \frac{8\log\left(\frac{2\kappa}{(1-\eta)I(X;Y)}\right)}{(1-\eta)} \Bigg\rfloor.\label{eq:Mderiv}
	\end{align} 
	If $ \frac{2\log\left(\frac{2\kappa}{(1-\eta)I(X;Y)}\right)}{(1-\eta)} <5/4$, then $2L+1\leq 5$, and therefore taking $M=5$ suffices to guarantee that $I(X;[Y]_M)\geq \eta I(X;Y)$. The theorem deals with $\bar{M}_2(\eta,\beta)\geq c_1(\eta)\geq 52$, so we may assume from now on, without loss of generality, that  indeed $ \frac{2\log\left(\frac{2\kappa}{(1-\eta)I(X;Y)}\right)}{(1-\eta)} >5/4$.
	
	Now consider the case where $I(X;Y)\leq \frac{1-\eta}{2}$.
	Note that  
	\begin{align}
	%\Bigg\lfloor & \frac{8\log\left(\frac{2\kappa}{(1-\eta)I(X;Y)}\right)}{(1-\eta)} \Bigg\rfloor \\
	&\frac{8\log\left(\frac{2\kappa}{(1-\eta)I(X;Y)}\right)}{(1-\eta)}\nonumber \\
	&= \frac{8}{1-\eta}\left( \log(\kappa) + \log\left(\frac{2}{1-\eta}\right) +  \log\left(\frac{1}{I(X;Y)}\right) \right)\nonumber\\
	&\le \frac{8}{1-\eta}\left( \log(\kappa) +  2\log\left(\frac{1}{I(X;Y)}\right) \right) \label{eq:kappaM}
	%&\le \frac{32}{1-\eta}\log\left(\frac{1}{I(X;Y)}\right) 
	\end{align}
	where~\eqref{eq:kappaM} follows since $I(X;Y) \le \frac{1-\eta}{2}$. Next, we show that $\log(\kappa) = \m{O}\left(\log\left(\frac{1}{I(X;Y)}\right)\right)$. Without loss of generality, assume $\bar{\alpha} \le (1-\bar{\alpha})$, and so $\kappa = \log\left(\frac{1}{\bar{\alpha}}\right)$. Now note that $I(X;Y) \le H(X) = h(\bar{\alpha})$. Thus, $\bar{\alpha} \ge h^{-1}(I(X;Y))$. We then have the following bound on $h^{-1}(I(X;Y))$ \cite[Theorem 2.2]{calabro2009exponential}
	\begin{align}
	h^{-1}(I(X;Y)) \ge \frac{I(X;Y)}{2\log(\frac{6}{I(X;Y)})}. 
	\end{align}
	Therefore, $\frac{1}{\bar{\alpha}} \le \frac{1}{h^{-1}(I(X;Y))} \le \frac{2}{I(X;Y)}\log\left(\frac{6}{I(X;Y)}\right)$ which in turn implies that $\log(\kappa) = \log\log(\frac{1}{\bar{\alpha}}) \le \log\log\left(\frac{2}{I(X;Y)}\right) + \log\log\log\left(\frac{6}{I(X;Y)}\right)$. Now note that $\tau \mapsto \tau \log \left(\frac{2}{\tau}\right)$ is increasing in $0 \le \tau \le \frac{1}{2}$, which implies that $\tau \log \left(\frac{2}{\tau}\right) \le 1$ and subsequently $\log \log \left(\frac{2}{\tau}\right) \le \log \left(\frac{1}{\tau}\right)$ for $0 \le \tau \le \frac{1}{2}$ . Similarly, by noting that the function $\tau \mapsto \tau \log \log \left(\frac{6}{\tau}\right)$ is increasing in $0 \le \tau \le \frac{1}{2}$, we have $\log\log\log\left(\frac{6}{\tau}\right) \le \log\left(\frac{1}{\tau}\right)$ for $0 \le \tau \le \frac{1}{2}$. Therefore, for any $I(X;Y) \le \frac{1}{2}$,
	\begin{align}\label{eq:logkappa2}
	\log(\kappa) \le \log\log\left(\frac{2}{I(X;Y)}\right) + \log\log\log\left(\frac{6}{I(X;Y)}\right) \le  2\log\left(\frac{1}{I(X;Y)}\right).
	\end{align}
	Using this in~\eqref{eq:kappaM} we get
	\begin{align}
	\frac{8\log\left(\frac{2\kappa}{(1-\eta)I(X;Y)}\right)}{(1-\eta)} \le \frac{32}{1-\eta}\log\left(\frac{1}{I(X;Y)}\right)
	\end{align}
	for any $I(X;Y) = \beta \le \frac{1-\eta}{2}$. Therefore, $I(X;[Y]_M) \geq \eta \beta$ for $M \ge \left\lfloor \frac{32}{1-\eta}\log\left(\frac{1}{\beta}\right) \right \rfloor$ whenever $\beta \le \frac{1-\eta}{2}$. %$\mathrm{ID}_M(p,\beta) \ge M\frac{\beta}{\log\left(\frac{1}{\beta}\right)}\frac{\eta(1-\eta)}{64}$.

		When $\beta \ge \frac{1-\eta}{2}$, we use a result established in~\cite[Theorem 1]{kt17} that bounds the worst-case additive gap due to quantization  for all $M\geq 2|\m{X}|$ and $|\m{Y}|>2|\m{X}|$ as
\begin{align}
\sup_{P_{XY}} I(X;Y)-I(X;[Y]_M) \le \nu(|\m{X}|) \cdot M^{-\frac{2}{|\m{X}| - 1}},\label{eq:ktbound}
\end{align}
where the supremum is with respect to all $P_{XY}$ with input alphabet of cardinality $|\m{X}|$, and output alphabet of cardinality $|\m{Y}|$ and 
\begin{align}\label{NuDefn}
	    \nu(|\mathcal{X}|) \triangleq \frac{\pi |\mathcal{X}| (|\mathcal{X}|-1)}{2\left(\sqrt{1 + \frac{1}{2(|\mathcal{X}|-1)}}-1\right)^2} \cdot \left(\frac{2 |\mathcal{X}|}{\Gamma\left(1 + \frac{|\mathcal{X}|-1}{2}\right)}\right)^{\frac{2}{|\mathcal{X}|-1}}.
\end{align}
 Noting that $\nu(2) \le 1268$, it follows that when  $|\mathcal{X}| = 2$ and $I(X;Y) = \beta$ we have 
	\begin{align}\label{eq:additivegap}
	\beta - I(X;[Y]_M) \le 1268 M^{-2}
	\end{align}
	which implies that for $\beta \ge \frac{1-\eta}{2}$,
	\begin{align}
	I(X;[Y]_M) \ge \beta\left(1-\frac{2536 M^{-2}}{1-\eta}\right) \ge \eta \beta
	\end{align}
	whenever $M \ge \left \lfloor\frac{52}{1-\eta} \right \rfloor$. 
	
	Since
	\begin{align}
	\max\left\{\frac{52}{1-\eta}, \frac{32\log\left(\frac{1}{\beta}\right)}{1-\eta} \right\}\leq \frac{52\max\left\{\log\left(\frac{1}{\beta}\right),1\right\}}{1-\eta} ,
	\end{align}   
	combining the results obtained for each case $\left(\beta \le \frac{1-\eta}{2}  \text{ and } \beta \ge \frac{1-\eta}{2}\right)$ establishes
	\begin{align*}
	I(X;[Y]_M) \ge \eta \beta \mathrm{ \quad for \quad} M \ge \left\lfloor \frac{52\max\left\{\log\left(\frac{1}{\beta}\right),1\right\}}{1-\eta} \right\rfloor, 
	\end{align*}
	as desired.
\end{proof}

\subsection{Upper Bound on Quantized Mutual Information}

\begin{theorem}\label{thm:binaryub}
For any $0<\beta\leq 1$, there exists a distribution $P_{XY}$ with $I(X;Y)\geq\beta$, for which
	\begin{align}
	I(X;[Y]_M)\leq 2M\frac{\beta}{\ln\left(\frac{e\log(e)}{2\beta}\right)},
	\label{eq:gmupper}
	\end{align}
	for every natural $M$.
	\label{thm:main}
\end{theorem}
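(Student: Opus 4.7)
My plan is to exhibit an explicit joint distribution $P_{XY}$ with binary $X$ and $I(X;Y)\geq\beta$ for which every $M$-level quantizer loses a factor of order $\log(1/\beta)/M$. The guiding intuition is the mirror image of the achievability analysis in Theorem~\ref{thm:binarylb}: information is most ``diluted'' when the posterior $P_{X\mid Y=y}$ takes values at $L\asymp\log(1/\beta)$ geometrically-spaced confidence levels, each contributing an equal share $\approx\beta/L$ to $I(X;Y)$, so that capturing a constant fraction of $I(X;Y)$ requires $\Theta(\log(1/\beta))$ quantizer cells. Concretely, I take $X\sim\Ber(p)$ with a suitably small $p=p(\beta)$, let $\m Y=\{0,1,\ldots,L\}$ with $L\asymp\log(1/\beta)$, and design $P_{Y\mid X}$ so that the likelihood ratios $P_{Y\mid X=1}(\ell)/P_{Y\mid X=0}(\ell)$ grow geometrically in $\ell$. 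The resulting posteriors $\alpha_\ell=\Pr(X=1\mid Y=\ell)$ sweep monotonically from essentially $p$ (uninformative) to essentially $1$ (fully informative), while $P_Y(\ell)$ decays geometrically; the free parameters are tuned so that the per-symbol contributions $P_Y(\ell)\cdot d(\alpha_\ell\|p)$ are roughly equal in $\ell$, summing to $I(X;Y)\gtrsim\beta$ in $L$ near-equal pieces.

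To upper-bound $I(X;[Y]_M)$, I invoke the Burshtein et al.\ characterization (Section~\ref{sec:properties}) together with the monotonicity of $\ell\mapsto\alpha_\ell$, so that it suffices to consider quantizers whose cells are contiguous blocks $B\subseteq\{0,\ldots,L\}$. For each such block the contribution to $I(X;f(Y))$ is $P_Y(B)\cdot d(\bar\alpha_B\|p)$ with $\bar\alpha_B=\mathbb{E}[\alpha_Y\mid Y\in B]$. The key inequality to prove is that the geometric decay of $P_Y(\ell)$ combined with the geometric growth of $\alpha_\ell$ forces $\bar\alpha_B$ to be governed by a single ``dominant'' index in $B$, so that $P_Y(B)\,d(\bar\alpha_B\|p)=O(\beta/L)$ uniformly in $B$. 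Summing over the $M$ blocks then gives $I(X;[Y]_M)=O(M\beta/\log(1/\beta))$, and a careful accounting of constants recovers the explicit bound $2M\beta/\ln(e\log(e)/(2\beta))$.

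The main obstacle is precisely this per-cell bound $P_Y(B)\,d(\bar\alpha_B\|p)=O(\beta/L)$: a naive union bound over scales only gives the trivial $I(X;[Y]_M)\leq I(X;Y)$, so one must exploit the convexity of $\alpha\mapsto d(\alpha\|p)$ in tandem with the geometric design of $\alpha_\ell$ and $P_Y(\ell)$ to argue that merging many scales into a single cell produces a divergence comparable to that of the most-informative scale contained, rather than to the sum of all scales in the block. A secondary point is that the claimed inequality is nontrivial only in the regime $M\lesssim\log(1/\beta)$; for larger $M$ the right-hand side already exceeds $\beta$, and the inequality is implied by the trivial data-processing bound $I(X;[Y]_M)\leq I(X;Y)$ combined with the fact that the construction can be chosen to achieve $I(X;Y)$ arbitrarily close to $\beta$. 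It thus suffices to carry out the detailed analysis in the small-$M$ regime.
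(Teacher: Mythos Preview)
Your high-level plan---build a distribution whose information is spread across $\Theta(\log(1/\beta))$ ``scales'' of posterior confidence, reduce via Burshtein et al.\ to contiguous intervals, and then bound each cell's contribution by $O(\beta/L)$---is exactly the right picture, and it is also what the paper does. But the paper realizes it quite differently, and the difference matters for the step you correctly flag as the obstacle.

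The paper takes $X\sim\Ber(1/2)$ (not small $p$) and a \emph{continuous} output $Y=(X\oplus Z_T,T)$ with a carefully engineered density $f_T(t)=r\,\delta(t)+4r(1-2t)^{-3}$ on $[0,(1-\sqrt r)/2]$. The symmetry of $\alpha_Y$ around $1/2$ lets one reduce to intervals $[a,b]\subset[0,1/2]$, and then a monotonicity argument shows the worst interval is $[0,b]$. For this density one computes in closed form $\Pr(0\le T\le b)=r/(1-2b)^2$ and $\mathbb E[T\mid 0\le T\le b]=2b^2$, so the per-cell quantity $\Pr(\cdot)\,d(\mathbb E[T\mid\cdot]\|1/2)$ becomes $r\,(1-h(2b^2))/(1-2b)^2\le r\log e$ by a direct limit. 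The specific constant $2M\beta/\ln(e\log(e)/(2\beta))$ then falls out of relating $r$ to $I(X;Y)$ via $I(X;Y)\ge (r/2)\log(e)\ln(e/r)$. In other words, the per-cell bound in the paper is not an estimate---it is an exact calculation made possible by the choice of $f_T$.

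Your discrete, asymmetric construction can likely be made to work for a bound of the form $C\,M\beta/\log(1/\beta)$, but two points deserve caution. First, the heuristic ``$\bar\alpha_B$ is governed by a single dominant index'' is not quite right: if $P_Y(\ell)\propto\theta^{-\ell}$ and $d(\alpha_\ell\|p)\propto\theta^{\ell}$, then $P_Y(\ell)\alpha_\ell$ is (up to log factors) roughly constant in $\ell$, so the numerator $\sum_{\ell\in B}P_Y(\ell)\alpha_\ell$ grows with $|B|$ and $\bar\alpha_B$ is \emph{not} pinned to $\alpha_{\min B}$. You will need a sharper argument---e.g., exploiting that $d(\alpha\|p)\approx\alpha\log(\alpha/p)$ grows only slightly superlinearly in $\alpha$---to close the bound. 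Second, recovering the exact constant in~\eqref{eq:gmupper} from a different construction is unrealistic; that constant is an artifact of the particular $f_T$. If you want the stated inequality with that constant, you should switch to the symmetric continuous construction; if you are content with the order $M\beta/\log(1/\beta)$, your route is viable but the per-cell estimate needs a real argument, not the dominant-index heuristic.
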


\begin{proof}
We provide a distribution $P_{XY}$ with $I(X;Y)\geq \beta$ for which no $M$-level quantizer achieves mutual information exceeding the right hand side of~\eqref{eq:gmupper}. Let $X\sim\Ber(1/2)$ and $Y=(V=X\oplus Z_T,T)$ be the output of a binary-input memoryless output-symmetric (BMS) channel whose input is $X$, where $Z_T$ is a binary random variable with $\Pr(Z_T=1|T=t)=t$, and $(Z_T,T)$ is statistically independent of $X$. We consider a BMS channel where  $T$ is a mixed random variable in $[0,1/2)$, whose probability density function is given by
\begin{align}
f_T(t)=\begin{cases}
r\delta(t)+\frac{4r}{(1-2t)^3} & 0^-<t\leq\frac{1-\sqrt{r}}{2}\\
0 & \text{otherwise}
\end{cases}
\end{align}
for some $0<r\leq 1$,  where $\delta(t)$ is Dirac's delta function. As above, we define $\alpha_y\triangleq \Pr(X=1|Y=y)$. Recalling that $Y=(V=X\oplus Z_T,T)$, we have that
\begin{align}
\alpha_y=\alpha_{v,t}=\Pr(X=1|V=v,T=t)=\begin{cases}
t & v=0\\
1-t & v=1
\end{cases}.
\end{align}
Furthermore, as $X\sim\Ber(1/2)$ is statistically independent of $(T,Z_T)$, we also have that $V\sim\Ber(1/2)$ is statistically independent of $T$. 
Consequently,
\begin{align}
\Pr(\alpha_Y=t|T=t)=\Pr(V=0)=\Pr(V=1)=\Pr(\alpha_Y=1-t|T=t)=1/2 \ . \label{eq:alphacondprob}
\end{align} %We will show that the induced distribution $P_{XY}$ is difficult for quantization, in the sense that for any $f:\m{Y}\mapsto[M]$ it holds that
%\begin{align}
%I(X;f(Y))\leq M \frac{I(X;Y)}{\log\frac{1}{I(X;Y)}}.
%\end{align}

By~\cite[Theorem 1]{bpkn92}, the optimal quantizer partitions the interval $[0,1]$ into $M$ subintervals $\m{I}_i=[\gamma_{i-1},\gamma_i)$ for $i=1,\ldots,M-1$ and $\m{I}_M=[\gamma_{M-1},\gamma_{M}]$, where $0=\gamma_0<\gamma_1<\cdots<\gamma_M=1$, and outputs $f(y)=i$ iff $\alpha_{y}\in\m{I}_i$. We therefore have
\begin{align}
&I(X;f(Y))=\sum_{i=1}^M \Pr(\alpha_Y\in\m{I}_i)d\left(\mathbb{E}[\alpha_Y|\alpha_Y\in\m{I}_i]\, \bigg\|\, \frac{1}{2}\right)\nonumber\\
&\leq M\max_{0\leq a<b\leq 1}\Pr(a\leq\alpha_Y\leq b)d\left(\mathbb{E}[\alpha_Y|a\leq\alpha_Y\leq b]\, \bigg\|\,\frac{1}{2}\right).\nonumber
\end{align}
By the symmetry of the random variable $\alpha_Y$ around $1/2$, we can restrict the optimization to $a<1/2$ and $a<b\leq 1$.
Let $\underline{b}=\min\{b,1-b\}$ and $\bar{b}=\max\{b,1-b\}$ and define the two intervals $\m{T}_0=[a,\underline{b})$, $\m{T}_1=[\underline{b},\bar{b}]$. By the convexity of KL divergence we have that
\begin{align}
&d\left(\mathbb{E}[\alpha_Y|a\leq\alpha_Y\leq b] \, \bigg\|\, \frac{1}{2}\right)\nonumber\\
&=d\left(\sum_{i=0}^1 \Pr(\alpha_Y\in\m{T}_i|a\leq\alpha_Y\leq b)\mathbb{E}[\alpha_Y|\alpha_Y\in\m{T}_i]\, \bigg\|\,\frac{1}{2}\right)\nonumber\\
&\leq \sum_{i=0}^1 \Pr(\alpha_Y\in\m{T}_i|a\leq\alpha_Y\leq b)d\left(\mathbb{E}[\alpha_Y|\alpha_Y\in\m{T}_i] \, \bigg\|\,\frac{1}{2}\right)\nonumber\\
&=\Pr(\alpha_Y\in\m{T}_0|a\leq\alpha_Y\leq b)d\left(\mathbb{E}[\alpha_Y|a\leq\alpha_Y\leq\underline{b}] \, \bigg\|\,\frac{1}{2}\right),\nonumber
\end{align}
where in the last equation we have used the fact that $\mathbb{E}[\alpha_Y|\alpha_Y\in\m{T}_1]=1/2$, due to the symmetry of the random variable $\alpha_Y$. We have therefore obtained
\begin{align}
&I(X;f(Y))\nonumber\\
&\leq M\max_{0\leq a\leq b\leq \tfrac{1}{2}}\Pr(a\leq\alpha_Y\leq b)d\left(\mathbb{E}[\alpha_Y|a\leq\alpha_Y\leq b] \, \bigg\|\,\frac{1}{2}\right)\nonumber\\
&\overset{(i)}{=}\frac{M}{2}\max_{0\leq a\leq b\leq \tfrac{1}{2}}\Pr(a\leq T \leq b)d\left(\mathbb{E}[T|a\leq T\leq b] \, \bigg\|\,\frac{1}{2}\right).\nonumber\\
&\overset{(ii)}{=}\frac{M}{2}\max_{0\leq b\leq \tfrac{1}{2}}\Pr(0\leq T \leq b)d\left(\mathbb{E}[T|0\leq T\leq b] \, \bigg\|\,\frac{1}{2}\right)
\end{align}
where $(i)$ follows since for any interval $\m{A}\subset[0,1/2)$ we have that $\Pr(\alpha_Y\in\m{A})=\frac{1}{2} \Pr(T\in\m{A})$ and $\mathbb{E}[\alpha_Y|\alpha_Y\in\m{A}]=\mathbb{E}[T|T\in\m{A}]$ by~\eqref{eq:alphacondprob}  and $(ii)$ follows since for any choice of $0<b\leq 1/2$, both terms are individually maximized by $a=0$.
It can be verified that for any $0\leq \rho\leq \tfrac{1-\sqrt{r}}{2}$
\begin{align}
\int_{0}^\rho tf_T(t) dt=\frac{2r\rho^2}{(1-2\rho)^2}; \ \Pr(0\leq T\leq \rho)=\frac{r}{(1-2\rho)^2},\nonumber
\end{align}
and therefore $\mathbb{E}[T|0\leq T\leq b]=2b^2$, and we have that for any $M$-level quantizer
\begin{align}
I(X;f(Y))&\leq \frac{M}{2}\cdot\max_{0\leq b\leq\tfrac{1-\sqrt{r}}{2}} r\cdot\frac{1-h(2b^2)}{(1-2b)^2}\nonumber\\
&\leq M\cdot\log(e)r,
\end{align}
where the last inequality follows by noting that the function $\tfrac{1-h(2b^2)}{(1-2b)^2}$ is monotone increasing in $0<b<1/2$, and taking the limit as $b\to1/2$. It remains to relate $r$ and $I(X;Y)$. Recalling from~\eqref{eq:hfirstorder} that $h(\tfrac{1}{2}-p)\leq 1-2\log(e)p^2$, we have
\begin{align}
I(X;Y)&=1-\mathbb{E}h(T)\nonumber\\
&\geq 2\log(e)\mathbb{E}\left(\frac{1}{2}-T\right)^2.\label{eq:preT2moment}
\end{align}
To compute the expectation above, we use the definition of $f_T(t)$ and write
\begin{align}
\mathbb{E}\left(\frac{1}{2}-T\right)^2&=\int_{0^-}^{\frac{1-\sqrt{r}}{2}}\left(\frac{1}{2}-T\right)^2 f_T(t)dt\nonumber\\
&=\int_{\frac{\sqrt{r}}{2}}^{\frac{1}{2}^+}x^2 f_T\left(\frac{1}{2}-x\right)dx\nonumber\\
&=\int_{\frac{\sqrt{r}}{2}}^{\frac{1}{2}^-}x^2 f_T\left(\frac{1}{2}-x\right)dx+\int_{\frac{1}{2}^-}^{\frac{1}{2}^+}x^2 f_T\left(\frac{1}{2}-x\right)dx\nonumber\\
&=\int_{\frac{\sqrt{r}}{2}}^{\frac{1}{2}^-}\frac{4rx^2}{(2x)^3}dx +\int_{\frac{1}{2}^-}^{\frac{1}{2}^+}x^2\cdot r\delta\left(\frac{1}{2}-x\right)dx\nonumber\\
&=\frac{r}{2}\ln\left(\frac{1/2}{\sqrt{r}/2}\right)+\frac{r}{4}\nonumber\\
&=\frac{r}{4}\ln\left(\frac{e}{r}\right).\label{eq:T2moment}
\end{align}
Substituting~\eqref{eq:T2moment} into~\eqref{eq:preT2moment}, yields
\begin{align}
I(X;Y)&= 2\log(e)\frac{r}{4}\ln\left(\frac{e}{r} \right)\nonumber\\
&= \frac{e\log(e)}{2}\frac{r}{e}\ln\left(\frac{e}{r} \right)\nonumber.
\end{align}
It can be verified that the function $g(t)\triangleq-t\ln(t)$ is monotone increasing in $0< t< 1/e$ and its inverse restricted to this interval satisfies
\begin{align}
\frac{1}{e}\cdot\frac{t}{-\ln(t)} < g^{-1}(t)\leq \frac{t}{-\ln(t)}.
\end{align}
It therefore follows that
\begin{align}
r\leq e g^{-1}\left(\frac{2I(X;Y)}{e\log(e)}\right)\leq \frac{2I(X;Y)}{\log(e)}\frac{1}{\ln\left(\frac{e\log(e)}{2I(X;Y)}\right)}
\end{align}
which gives
\begin{align}
I(X;f(Y))\leq 2M\frac{I(X;Y)}{\ln\left(\frac{e\log(e)}{2I(X;Y)}\right)},
\end{align}
for any $M$-level function $f$.
\end{proof}

\subsection{Proof of Theorem~\ref{thm:bindistillation}}\label{sec:proof}

We begin by proving the lower bound. Using Theorem~\ref{thm:binarylb} and solving for $\eta$, we obtain that
\begin{align}
I(X;[Y]_M)\geq \left(1-\frac{52\max\left\{\log\left(\frac{1}{\beta}\right),1\right\}}{M}\right)\cdot\beta.\label{eq:binfrac1}
\end{align}
As a consequence of Theorem~\ref{thm:binarylb}, we also have that
\begin{align}
I\left(X;[Y]_{\left\lfloor 104\max\left\{\log\left(\frac{1}{\beta}\right),1\right\}\right\rfloor}\right)\geq \frac{1}{2}\beta.
\end{align}
Now, applying Corollary~\ref{cor:doublequant}, we obtain that for any $M<104\max\left\{\log\left(\frac{1}{\beta}\right),1\right\}$ it holds that
\begin{align}
I(X;[Y]_M)\geq  \frac{M-1}{104\max\left\{\log\left(\frac{1}{\beta}\right),1\right\}}\frac{1}{2}\beta.\label{eq:binfrac2}
\end{align}
Combining~\eqref{eq:binfrac1} and~\eqref{eq:binfrac2} establishes that $\mathrm{ID}_M(2,\beta)\geq f_{\text{lower}}\left(\frac{M-1}{\max\left\{\log\left(\frac{1}{\beta}\right),1\right\}}\right)$.

To establish the upper bound, we use Theorem~\ref{thm:binaryub}. Note first that for any $M>1$ and $\beta>1/2$ we have that $f_{\text{upper}}\left(\frac{M-1}{\max\left\{\log\left(\frac{1}{\beta}\right),1\right\}}\right)=1$. Thus, it suffices to prove that $\mathrm{ID}_M(2,\beta)\leq f_{\text{upper}}\left(\frac{M-1}{\max\left\{\log\left(\frac{1}{\beta}\right),1\right\}}\right)$ for $\beta<1/2$. In this case Theorem~\ref{thm:binaryub} shows that
\begin{align}
\mathrm{ID}_M(2,\beta)&\leq 2M\frac{\beta}{\ln\left(\frac{e\log(e)}{2\beta}\right)}\nonumber\\
&=\frac{2}{\ln 2}\frac{M\beta}{\log\left(\frac{1}{\beta}\right)\left(1+\frac{\log(e\log(e)/2)}{\log\left(\frac{1}{\beta}\right)}\right)}\nonumber\\
&\leq \frac{2}{\ln 2\left(1+\log(e\log(e)/2)\right)}\cdot \frac{M}{\log\left(\frac{1}{\beta}\right)}\cdot\beta\nonumber\\
&< \frac{3}{2}\cdot \frac{M}{\log\left(\frac{1}{\beta}\right)}\cdot\beta.\label{eq:binubsimplified}
\end{align}
Combining this with the trivial bound $\mathrm{ID}_M(2,\beta)\leq \beta$, we obtain
\begin{align}
\mathrm{ID}_M(2,\beta)&\leq\min\left\{\frac{3}{2}\cdot \frac{M}
{\log\left(\frac{1}{\beta}\right)},1\right\}\cdot\beta\nonumber\\
&\leq \min\left\{\frac{3}{2}\cdot \frac{M}{\max\left\{\log\left(\frac{1}{\beta}\right),1\right\}},1\right\}\cdot\beta.
\end{align}
Noting that for $M=1$ we trivially have $\mathrm{ID}_M(2,\beta)=0$ and that $3M/2\leq 3(M-1)$ for $M>1$, we obtain the desired result.

\section{Bounds for $|\m{X}| > 2$}
\label{sec:nonbinary}

In the previous section we have shown that if $X$ is binary, 
then $\m{O}(\log(1/I(X;Y)))$ quantization levels always suffice in order to retain any constant fraction $0<\eta<1$ of $I(X;Y)$. In this section, we leverage this result in order to show that in general $\m{O}(\log(1/I(X;Y))^{k})$ quantization levels suffice in order to retain a constant fraction $0<\eta<\frac{k}{|\m{X}|-1}$, for $k\in[|\m{X}|-1]$.

For a random variable $X\in\{1,\ldots,|\m{X}|\}$, we can define the $|\m{X}|-1$ binary random variables $A_i \triangleq \mathbb{1}_{\{X = i\}}$, $i=1,\ldots,|\m{X}|-1$. Clearly, $X$ fully determines $\{A_1,\ldots,A_{|\m{X}|-1}\}$ and vice versa. In particular, the encoding of $X$ by $\{A_1,\cdots,A_{|\m{X}|-1}\}$ can be thought of as the ``one-hot'' encoding of $X$, with the last bit, whose value is deterministically dictated by the preceding $|\m{X}|-1$ bits, omitted. Representing $X$ in this manner, nevertheless, allows us to reduce the problem of quantizing $Y$ in order to retain information on $X$, into $|\m{X}|-1$ separate problems of quantizing $Y$ in order to retain information on $A_i$. Since the random variables $\{A_1,\ldots,A_{|\m{X}|-1}\}$ are binary, the results from Theorem~\ref{thm:binarylb} can be applied.

The main result of this section is Theorem~\ref{thm:nonbinary}, that lower bounds the worst-case multiplicative loss due to quantization. Before stating this result, and giving its proof,  we demonstrate the technique of reducing to the 
binary case via ``one-hot'' encoding for the setup considered in~\cite{kt17,tal15}. 
Recall that for all $M\geq 2|\m{X}|$ and $|\m{Y}|>2|\m{X}|$,~\cite[Theorem 1]{kt17} bounds the worst-case additive gap due to quantization as
\begin{align}
\sup_{P_{XY}} I(X;Y)-I(X;[Y]_M) \le \nu(|\m{X}|) \cdot M^{-\frac{2}{|\m{X}| - 1}}
\end{align}
where the function $\nu(|\m{X}|)$ is defined as in~\eqref{NuDefn} and the supremum is with respect to all $P_{XY}$ with input alphabet of cardinality $|\m{X}|$, and output alphabet of cardinality $|\m{Y}|$. We further note that $\nu(|\m{X}|)$ satisfies $\nu(2)\leq 1268$ and $\nu(|\m{X}|)\approx 16\pi e |\m{X}|^3$ for large $|\m{X}|$. 

Below, we use a ``one-hot'' encoding technique combined with~\eqref{eq:ktbound} for $| \m{X}| = 2$ only to obtain a slight refinement of the constant in the additive gap for $| \m{X} | > 2$ (for large enough values of $M$).

\begin{proposition}
For $|\m{X}|\geq 2$ and any $M$ such that $M^{\frac{1}{|\m{X}|-1}}\geq 4$ is an integer, we have
\begin{align}
\mathrm{DC}(|\m{X}|,M)\leq 1268(|\m{X}|-1)\cdot M^{-\frac{2}{|\m{X}|-1}} \ . \label{eq:DCimproved}
\end{align}
\end{proposition}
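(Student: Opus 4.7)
The plan is to use a ``one-hot'' encoding of $X$ to reduce the problem to $|\m{X}|-1$ binary-input problems, and then combine the resulting binary quantizers into a single product quantizer. Set $M_0 \triangleq M^{1/(|\m{X}|-1)}$, which is an integer satisfying $M_0 \geq 4 = 2 \cdot 2$ by hypothesis. For $i \in \{1, \ldots, |\m{X}|-1\}$, let $A_i \triangleq \Ind_{\{X=i\}}$, so that the tuple $(A_1, \ldots, A_{|\m{X}|-1})$ and $X$ determine each other.

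For each $i$, I would apply the binary ($|\m{X}|=2$) case of~\eqref{eq:ktbound} not to the marginal $P_{A_i,Y}$ but to the conditional joint distribution $P_{A_i,\, Y \mid A_1 = \cdots = A_{i-1} = 0}$, which, whenever this event has positive probability, is a legitimate joint distribution with binary input $A_i$. Since $\nu(2) \leq 1268$, this yields a quantizer $f_i : \m{Y} \to [M_0]$ satisfying
\begin{align*}
I(A_i; Y \mid A_{<i} = 0^{i-1}) - I(A_i; f_i(Y) \mid A_{<i} = 0^{i-1}) \leq 1268\, M_0^{-2},
\end{align*}
where $A_{<i} \triangleq (A_1,\ldots,A_{i-1})$. (The output-alphabet condition $|\m{Y}|>4$ can be assumed without loss of generality, since otherwise a trivial quantizer already achieves zero additive gap.) Define the product quantizer $g : \m{Y} \to [M_0]^{|\m{X}|-1}$ by $g(y) \triangleq (f_1(y),\ldots,f_{|\m{X}|-1}(y))$, whose image has cardinality at most $M_0^{|\m{X}|-1} = M$.

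To bound the gap of $g$, I would apply the chain rule to both $I(X;Y)$ and $I(X;g(Y))$ with respect to the ordered factorization $X \leftrightarrow (A_1,\ldots,A_{|\m{X}|-1})$, obtaining
\begin{align*}
I(X; Y) - I(X; g(Y)) = \sum_{i=1}^{|\m{X}|-1} \Big[\, I(A_i; Y \mid A_{<i}) - I(A_i; g(Y) \mid A_{<i}) \,\Big].
\end{align*}
Since $f_i(Y)$ is a deterministic function of $g(Y)$, the data-processing inequality gives $I(A_i; g(Y)\mid A_{<i}) \geq I(A_i; f_i(Y)\mid A_{<i})$. The key observation is that whenever any $A_j$ with $j<i$ equals $1$, the variable $A_i$ equals $0$ almost surely, so every conditional term with $A_{<i}\neq 0^{i-1}$ vanishes. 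Hence each summand is at most $\Pr(A_{<i}=0^{i-1}) \cdot 1268\, M_0^{-2} \leq 1268\, M_0^{-2}$, and the total gap is at most $1268(|\m{X}|-1) M^{-2/(|\m{X}|-1)}$, as claimed.

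The main subtlety I expect is verifying that the binary bound of~\eqref{eq:ktbound} transfers cleanly to the conditional setting and that the chain-rule decomposition correctly attributes the additive loss to each coordinate. Both are manageable: a quantizer designed for the conditional law $P_{Y \mid A_{<i}=0^{i-1}}$ can be extended arbitrarily outside its support without affecting the relevant conditional mutual information, and the one-hot factorization of $X$ makes the chain rule directly applicable.
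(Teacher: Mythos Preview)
Your proposal is correct and follows essentially the same route as the paper: one-hot encoding $X$ into $(A_1,\ldots,A_{|\m{X}|-1})$, applying the binary case of~\eqref{eq:ktbound} to each conditional law $P_{A_i,Y\mid A_{<i}=0}$, assembling the $f_i$ into a product quantizer of size $M_0^{|\m{X}|-1}=M$, and bounding the gap via the chain rule together with the observation that only the $A_{<i}=0^{i-1}$ event contributes. The paper's write-up differs only cosmetically, collapsing the chain rule directly to the $A_{<i}=0$ terms from the start rather than stating the full conditional chain rule and then zeroing out the other cases.
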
 

\begin{proof}
Without loss of generality we may assume $|\m{Y}|> 4$, as otherwise the assumption $M^{\frac{1}{|\m{X}|-1}}\geq 4$ implies that $M\geq |\m{Y}|$, in which case $I(X;[Y]_M)=I(X;Y)$.

The case $|\m{X}|=2$ is therefore obtained from~\eqref{eq:ktbound}, which reads
\begin{align}
I(X;Y)-I(X;[Y]_M) \le \nu(2) \cdot M^{-2} \label{eq:additivebinary}
\end{align}
for all $P_{XY}$ with binary $X$. %The upper bound is attained by the greedy-merge quantizer.
	
Now let $|\m{X}| > 2$, 
and without loss of generality assume $\m{X}=\{1,2,\ldots,|\m{X}|\}$. Define $A_i \triangleq \mathbb{1}_{\{X = i\}}$, for $i=1,2,\ldots,|\m{X}|-1$. Then, 
	\begin{align}
	I(X;Y) &= I(A_1,\ldots,A_{|\m{X}|-1};Y) \nonumber \\
	&= \sum_{i=1}^{|\m{X}|-1} I(A_i;Y|A_1^{i-1}=0)\Pr(A_1^{i-1}=0) \label{eq:additiveone}
	\end{align}
	where $A_1^{i-1}=0$ denotes the event $A_1  = \cdots = A_{i-1} = 0$. 
	
	%Assume $M^{\frac{1}{|\m{X}|-1}}$ is an integer and $M^{\frac{1}{|\m{X}|-1}} \ge 4$. 
	Let $f(y)$ be an $M$-level quantizer of the form $f(y) = (f_1(y),\ldots,f_{|\m{X}|-1}(y))$. Then,
	\begin{align}
	&I(X;f(Y)) =\sum_{i=1}^{|\m{X}|-1} I(A_i;f(Y)|A_1^{i-1}=0)\Pr(A_1^{i-1}=0)\nonumber\\
	&\ge \sum_{i=1}^{|\m{X}|-1} I(A_i;f_i(Y)|A_1^{i-1}=0)\Pr(A_1^{i-1}=0). \label{eq:additivetwo}
	\end{align}
	Thus, combining~\eqref{eq:additiveone} and~\eqref{eq:additivetwo}, gives
	\begin{align}
	I(X;Y)-I(X;f(Y))
	\le \sum_{i=1}^{|\m{X}|-1}& \big(I(A_i;Y|A_1^{i-1}=0)  \nonumber \\ & - I(A_i;f_i(Y)|A_1^{i-1}=0)\big)\Pr(A_1^{i-1}=0).
	\end{align}
	From~\eqref{eq:additivebinary}, it holds that by choosing $|f_i(y)| = M^{\frac{1}{|\m{X}|-1}}\geq 4$, for all $1 \le i \le |\m{X}|-1$, we can find quantizers
	 $f_1(y),\ldots,f_{|\m{X}|-1}(y)$ for which
	\begin{align}
	I(A_i;Y|A_1^{i-1}=0)-I(A_i;f_i(Y)|A_1^{i-1}=0) \le \nu(2)\cdot M^{\frac{-2}{|\m{X}|-1}}. \label{eq:chainrule}
	\end{align}
	%holds if each $f_i(y)$ is individually chosen to be an optimal quantizer (or even the greedy-merge quantizer). Fix each of the $f_i(y)$ to be a quantizer attaining the bound in~\eqref{eq:chainrule}. 
	Consequently, with this choice, we obtain
	\begin{align}
	I(X;Y)-I(X;f(Y)) &\le \nu(2)\cdot M^{\frac{-2}{|\m{X}|-1}} \sum_{i=1}^{|\m{X}|-1}\Pr(A_1^{i-1}=0)\nonumber\\
	&\le (|\m{X}|-1)\nu(2) \cdot M^{\frac{-2}{|\m{X}|-1}},%\label{eq:gmimproved}
	\end{align}
	as desired.
	\end{proof}

Next, we focus on the regime of $I(X;Y)\ll 1$ and prove an upper bound on the number of quantization levels $M$, required to attain a fraction $0<\eta<1$ of $I(X;Y)$. Roughly, we show that it suffices to take $M$ that scales like $(\log(1/I(X;Y)))^{\eta\cdot(|\m{X}|-1)}$. More precisely, for any $k\in[|\m{X}|-1]$, if $\eta<\frac{k}{|\m{X}|-1}$, then $\m{O}\left((\log(1/I(X;Y)))^{k}\right)$ levels suffice.
	
\begin{theorem}
For any $P_{XY}$ with $I(X;Y)=\beta $, and any $\eta\in(0,1)$, we have that
\begin{align}
I(X;[Y]_{\bar{M}_{|\m{X}|}(\eta,\beta)})\geq \eta\beta,
\end{align}
where
\begin{align}
\bar{M}_{|\m{X}|}(\eta,\beta) &\triangleq 
\left\lfloor\left[c_1\left(\sqrt{\eta}\right)\log\left(\frac{|\m{X}|-1}{(1-\sqrt{\eta})\beta} \right)\right]^{|\m{X}|-1}\right\rfloor,\label{eq:Mbardef}
\end{align}
with $c_1(\eta)$ as defined in~\eqref{eq:c1}. Furthermore, if $\eta<\frac{k}{|\m{X}|-1}$ for some natural $k\leq |\m{X}|-2$, we have that
\begin{align}
I(X;[Y]_{\tilde{M}_{|\m{X}|}(\eta,\beta,k)})\geq \eta\beta,
\end{align}
where
\begin{align}
\tilde{M}_{|\m{X}|}(\eta,\beta,k) &\triangleq 
\left\lfloor\left[c_1\left(\frac{\eta}{k/(|\m{X}|-1)}\right)\log\left(\frac{(|\m{X}|-1)^2}{\beta} \right)\right]^k\right\rfloor.
\end{align}
\label{thm:nonbinary}
\end{theorem}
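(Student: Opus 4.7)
The plan is to extend the ``one-hot'' chain-rule decomposition used in the additive-gap proof of the preceding proposition to the multiplicative-gap setting, invoking Theorem~\ref{thm:binarylb} on each binary coordinate in place of the binary additive bound~\eqref{eq:additivebinary}. Without loss of generality let $\m{X}=\{1,\ldots,|\m{X}|\}$, set $A_i\triangleq\mathbb{1}_{\{X=i\}}$ for $i=1,\ldots,|\m{X}|-1$, and define $p_i\triangleq\Pr(A_1^{i-1}=0)$, $\gamma_i\triangleq I(A_i;Y\mid A_1^{i-1}=0)$, and $\beta_i\triangleq p_i\gamma_i$. Because $A_i=0$ deterministically whenever any earlier $A_j$ equals $1$, the chain rule collapses to $I(X;Y)=\sum_{i=1}^{|\m{X}|-1}\beta_i=\beta$. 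For a product quantizer $f(y)=(f_1(y),\ldots,f_{|\m{X}|-1}(y))$ with $|f_i|=K_i$ and total size $M=\prod_i K_i$, data-processing of $f_i(Y)$ from $f(Y)$ combined with the same chain rule applied to $f(Y)$ gives $I(X;f(Y))\geq\sum_i p_i\,I(A_i;f_i(Y)\mid A_1^{i-1}=0)$, exactly as in the preceding proposition.

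\textbf{First bound.} For $\bar{M}_{|\m{X}|}(\eta,\beta)$, I would partition the coordinates at the level $\tau\triangleq(1-\sqrt{\eta})\beta/(|\m{X}|-1)$. On the ``heavy'' set $H\triangleq\{i:\beta_i\geq\tau\}$ apply Theorem~\ref{thm:binarylb} to the conditional binary problem $(A_i,Y)\mid A_1^{i-1}=0$ with efficiency $\sqrt{\eta}$; the resulting $f_i$ satisfies $I(A_i;f_i(Y)\mid A_1^{i-1}=0)\geq\sqrt{\eta}\gamma_i$ and, using $\gamma_i\geq\beta_i\geq\tau$, uses at most $K_i\leq c_1(\sqrt{\eta})\log((|\m{X}|-1)/((1-\sqrt{\eta})\beta))$ levels. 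On the complementary ``light'' coordinates take $f_i\equiv 1$. The total $\beta_i$-mass discarded is at most $(|\m{X}|-1)\tau=(1-\sqrt{\eta})\beta$, hence $I(X;f(Y))\geq\sqrt{\eta}\sum_{i\in H}\beta_i\geq\sqrt{\eta}\cdot\sqrt{\eta}\beta=\eta\beta$. Since $|H|\leq|\m{X}|-1$ and $M$ is an integer, the product bound is at most $\bar{M}_{|\m{X}|}(\eta,\beta)$.

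\textbf{Second bound.} When $\eta<k/(|\m{X}|-1)$ for $k\leq|\m{X}|-2$, I would use the stricter threshold $\tau'\triangleq\beta/(|\m{X}|-1)^2$ and split into two subcases according to $|H'|\triangleq|\{i:\beta_i\geq\tau'\}|$. If $|H'|\geq k$, retain the $k$ largest $\beta_i$ (all automatically $\geq\tau'$) and apply Theorem~\ref{thm:binarylb} with efficiency $\eta(|\m{X}|-1)/k\in(0,1)$; the retained $\beta$-sum is at least $k\beta/(|\m{X}|-1)$, so the preserved mutual information is at least $\eta\beta$, and each of the $k$ nontrivial coordinates uses at most $c_1(\eta(|\m{X}|-1)/k)\log((|\m{X}|-1)^2/\beta)$ levels. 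If instead $|H'|<k$, keep exactly $H'$: the light $\beta_i$-mass is at most $(|\m{X}|-1)\tau'=\beta/(|\m{X}|-1)$, so $\sum_{i\in H'}\beta_i\geq(|\m{X}|-2)\beta/(|\m{X}|-1)$, and Theorem~\ref{thm:binarylb} applied with the smaller efficiency $\eta(|\m{X}|-1)/(|\m{X}|-2)$ still preserves at least $\eta\beta$ while using $|H'|<k$ nontrivial coordinates. Monotonicity of $c_1$ then gives $M\leq\tilde{M}_{|\m{X}|}(\eta,\beta,k)$ in both subcases.

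\textbf{Main obstacle.} The crucial difficulty is that the $\beta_i$ (equivalently the $\gamma_i$) can be arbitrarily unbalanced across the $|\m{X}|-1$ coordinates, so a naive uniform application of Theorem~\ref{thm:binarylb} would blow up the per-coordinate cost $c_1(\cdot)\log(1/\gamma_i)$ when some $\gamma_i$ is tiny. The thresholding step, together with the explicit ``square-root'' ($\sqrt{\eta}$) or ``$k$-out-of-$(|\m{X}|-1)$'' split of efficiency between the ``which coordinates to retain'' decision and the ``how much to preserve per coordinate'' decision, is exactly what replaces $\log(1/\gamma_i)$ by the uniform quantities $\log((|\m{X}|-1)/((1-\sqrt{\eta})\beta))$ and $\log((|\m{X}|-1)^2/\beta)$ appearing in the theorem's constants.
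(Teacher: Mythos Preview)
Your proposal is correct and follows essentially the same route as the paper: the one-hot decomposition $A_i=\mathbb{1}_{\{X=i\}}$, the chain-rule identity $I(X;Y)=\sum_i p_i\gamma_i$, a product quantizer with Theorem~\ref{thm:binarylb} applied per coordinate, and a lower bound on the retained $\gamma_i$ to control the $\log(1/\gamma_i)$ cost. The only presentational difference is that the paper \emph{sorts} the fractions $v_i=p_i\gamma_i/\beta$ and defines $k_{\bar\eta}=\min\{k:F(k)\ge\bar\eta\}$ for the cumulative sum $F$, then observes $v_{\pi(k_{\bar\eta})}\ge(1-\bar\eta)/(|\m X|-1)$ and $k_{\bar\eta}\le k$ when $\bar\eta=k/(|\m X|-1)$; this yields both parts from one argument and avoids your case split in the second bound, but the content is the same.
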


\begin{proof}
%We will show that there exist a quantizer with $\bar{M}_{|\m{X}|}(\eta,\beta)$ levels that attains $I(X;f(Y))\geq \eta\beta$. The statement will then immediately follow from Corollary~\ref{cor:doublequant}. 
As above, define $A_i \triangleq \mathbb{1}_{\{X = i\}}$, such that 
\begin{align}
I(X;Y) &=I(A_1,\ldots,A_{|\m{X}|-1};Y) =\sum_{i=1}^{|\m{X}|-1} I_i\cdot p_i \label{eq:Ixysum}
\end{align}
where
\begin{align}
I_i&\triangleq I(A_i;Y|A_1^{i-1}=0), \ i=1,\ldots,|\m{X}|-1, \\
p_i&\triangleq \Pr(A_1^{i-1}=0), \ \ \ \ \ \ \ i=1,\ldots,|\m{X}|-1,
\end{align}
and $A_1^{i-1}=0$ denotes the event $A_1  = \cdots = A_{i-1} = 0$. Furthermore, set 
\begin{align}
v_i\triangleq \frac{I_i\cdot p_i}{\beta},  \ i=1,\ldots,|\m{X}|-1,
\end{align}
and let the permutation $\pi:[|\m{X}|-1]\to [|\m{X}|-1]$ be such that $v_{\pi(1)}\geq\cdots\geq v_{\pi(|\m{X}|-1)}$.
For $0\leq k\leq |\m{X}|-1$, define the function 
\begin{align}
F(k)\triangleq\sum_{i=1}^k v_{\pi(i)},
\end{align}
with the convention that $F(0)=0$, and note that 
\begin{enumerate}
\item $F(t)\geq \frac{t}{|\m{X}|-1}$ for any natural $t\leq |\m{X}|-1$, and in particular, $F(|\m{X}|-1)=1$;
\item $F(|\m{X}|-1)-F(t-1)=\sum_{i=t}^{|\m{X}|-1}v_{\pi(i)}\leq (|\m{X}|-t)v_{\pi(t)}$ and therefore, for any natural $t\leq|\m{X}|-1$ we have that $v_{\pi(t)}\geq\frac{1-F(t-1)}{|\m{X}|-t}$;
\end{enumerate}

Let $\eta\in(0,1)$ and let $\bar{\eta}$ be some number satisfying
$0<\eta<\bar{\eta}<1$. Let
\begin{align}
k_{\bar{\eta}}=\min\{k \ : \ F(k)\geq\bar{\eta} \}.
\end{align}
%By the first property above, we have that $k^*\leq k$. 
By the definition of $k_{\bar{\eta}}$, we have that $F(k_{\bar{\eta}}-1)< \bar{\eta}$. Thus, by the second property, we have that
\begin{align}
v_{\pi(k_{\bar{\eta}})}\geq \frac{1-\bar{\eta}}{|\m{X}|-k_{\bar{\eta}}}\geq \frac{1-\bar{\eta}}{|\m{X}|-1}.\label{eq:vklb}
\end{align}
Let $\eta'=\frac{\eta}{\bar{\eta}}<1$. 
Consider the conditional joint distribution $P_{A_i Y|A_1^{i-1}=0}$. Since $A_i$ is a binary random variable, by Theorem~\ref{thm:binarylb} we can design a quantizer $f_{i}:\m{Y}\to [M_i]$ with $M_i\leq \big\lfloor c_1(\eta')\log\max\big\{\big(\frac{1}{I_i} \big),1\big\}\big\rfloor$ quantization levels, such that $I(A_i;f_i(Y)|A_{1}^{i-1}=0)\geq \eta'\cdot I_i$. Let $f(y)=(f_{\pi(1)}(y),\ldots,f_{\pi(k_{\bar{\eta}})}(y)):\m{Y}\to[M_{\pi(1)}]\times\cdots\times[M_{\pi(k_{\bar{\eta}})}]$ be the Cartesian product of the quantizers $f_{\pi(1)}(y),\ldots,f_{\pi(k_{\bar{\eta}})}(y)$ attaining this tradeoff between $\eta'$ and the number of quantization levels. We have that
\begin{align}
I(X&;f(Y))=I(A_1,\ldots,A_{|\m{X}|-1};f(Y))\nonumber\\
&=\sum_{i=1}^{|\m{X}|-1}I(A_i;f(Y)|A_1^{i-1}=0)\Pr(A_1^{i-1}=0)\nonumber\\
&\geq\sum_{i=1}^{k_{\bar{\eta}}}I(A_{\pi(i)};f(Y)|A_1^{\pi(i)-1}=0)\Pr(A_1^{\pi(i)-1}=0)\nonumber\\
&\geq\sum_{i=1}^{k_{\bar{\eta}}}I(A_{\pi(i)};f_{\pi(i)}(Y)|A_1^{\pi(i)-1}=0)p_{\pi(i)}\nonumber\\
&\geq \sum_{i=1}^{k_{\bar{\eta}}}\eta' I_{\pi(i)}p_{\pi(i)}\nonumber\\
&=\eta'\beta \sum_{i=1}^{k_{\bar{\eta}}} v_{\pi(i)}\nonumber\\
&=  \frac{\eta}{\bar{\eta}}\beta F(k_{\bar{\eta}})\nonumber\\
&\geq \eta\beta.
\end{align}
Since $I_{\pi(i)}=\beta v_{\pi(i)}/p_{\pi(i)}\geq \beta v_{\pi(i)}>\frac{\beta(1-\bar{\eta})}{|\m{X}|-1}$, $\forall i\leq k_{\bar{\eta}}$, by~\eqref{eq:vklb}, we have that $\forall i\leq k_{\bar{\eta}}$
\begin{align}
M_{\pi(i)}&\leq \left\lfloor c_1(\eta')\max\left\{\log\left(\frac{|\m{X}|-1}{(1-\bar{\eta})\beta} \right),1\right\}\right\rfloor\nonumber\\
&=\left\lfloor c_1(\eta')\log\left(\frac{|\m{X}|-1}{(1-\bar{\eta})\beta} \right)\right\rfloor,
\end{align}
where the last inequality follows since $\frac{|\m{X}|-1}{\beta}\geq
\frac{|\m{X}|-1}{\log|\m{X}|}\geq 2$ for all $|\m{X}|> 2$.
Consequently, we obtained
\begin{align}
|f(y)|\leq \left\lfloor\left[c_1\left(\frac{\eta}{\bar{\eta}}\right)\log\left(\frac{|\m{X}|-1}{(1-\bar{\eta})\beta} \right)\right]^{k_{\bar{\eta}}}\right\rfloor.\label{eq:fycard}
\end{align}
To establish the  first part of the statement, take $\bar{\eta}=\sqrt{\eta}$ and recall that $k_{\bar{\eta}}\leq |\m{X}|-1$ by definition.

For the second part, note that if $\eta<\frac{k}{|\m{X}|-1}$, for some $k<|\m{X}|-1$ we may take $\bar{\eta}=\frac{k}{|\m{X}|-1}$, and that $k_{\bar{\eta}}\leq k$, as $F(t)\geq\frac{t}{|\m{X}|-1}$. Substituting into~\eqref{eq:fycard}, and noting that $1-\bar{\eta}\geq \frac{1}{|\m{X}|-1}$, establishes the second part of the statement.
\end{proof}

Theorem~\ref{thm:nonbindistillation} now follows as a rather simple corollary.

\begin{proof}[Proof of Theorem~\ref{thm:nonbindistillation}]
We first show that for $1\leq k\leq |\m{X}|-2$, it holds that $I(X;[Y]_M)\geq a_k(M,|\m{X}|,\beta)\cdot\beta$.
To that end, for any $0<\eta'<1$, and $1\leq k\leq |\m{X}|-2$ define
\begin{align}
M(|\m{X}|,\beta,\eta',k)=\left\lfloor\left[\frac{52}{1-\eta'}\log\left(\frac{(|\m{X}|-1)^2}{\beta} \right)\right]^k\right\rfloor.
\end{align}
By the second part of Theorem~\ref{thm:nonbinary}, we have that
\begin{align}
I(X;[Y]_{M(|\m{X}|,\beta,\eta',k)})\geq \frac{k}{|\m{X}|-1}\eta'\beta.\label{eq:kfrac}
\end{align}
Solving for $\eta'$ shows that 
\begin{align}
I(X;[Y]_M)\geq \frac{k}{|\m{X}|-1} \left(1-\frac{52\log\left(\frac{(|\m{X}|-1)^2}{\beta} \right)}{M^{\frac{1}{k}}}\right)\beta,
\end{align}
and maximizing with respect to $k$ yields the bound
\begin{align}
I(X;[Y]_M)\geq\max_{1\leq k\leq |\m{X}|-2}a_k (M,|\m{X}|,\beta)\cdot\beta.\label{eq:akbound}
\end{align}
Moreover,~\eqref{eq:kfrac} applied with $\eta'=1/2$ and $k=1$ shows that
\begin{align}
I\left(X;[Y]_{\left\lfloor 104 \log\left(\frac{(|\m{X}|-1)^2}{\beta} \right)\right\rfloor}\right)\geq \frac{1}{2(|\m{X}|-1)}\beta.
\end{align}
Now, applying Corollary~\ref{cor:doublequant}, we obtain that for any $M<104 \log\left(\frac{(|\m{X}|-1)^2}{\beta} \right)$ it holds that
\begin{align}
I(X;[Y]_M)\geq\frac{M-1}{104 \log\left(\frac{(|\m{X}|-1)^2}{\beta} \right)}\frac{1}{2(|\m{X}|-1)}\beta.
\end{align}
which is equivalent to
\begin{align}
I(X;[Y]_M)\geq a_0(M,|\m{X}|,\beta)\cdot\beta.\label{eq:a0bound}
\end{align}
Finally, we use the first part of Theorem~\ref{thm:nonbinary} to show that $I(X;[Y]_M)\geq a_{|\m{X}|-1} (M,|\m{X}|,\beta)\cdot\beta$. Recalling the definition of $\bar{M}_{|\m{X}|}(\eta,\beta)$ in~\eqref{eq:Mbardef}, we have that for any $0<\eta<1$
\begin{align}
&\bar{M}_{|\m{X}|}(\eta,\beta)=\left\lfloor\left[c_1\left(\sqrt{\eta}\right)\log\left(\frac{|\m{X}|-1}{(1-\sqrt{\eta})\beta} \right)\right]^{|\m{X}|-1}\right\rfloor\nonumber\\
&\leq \left\lfloor\left[c_1(\sqrt{\eta})\left(\log\left(\frac{|\m{X}|-1}{\beta}\right)+\frac{\log(e)}{(1-\sqrt{\eta})^{1/2}}\right)\right]^{|\m{X}|-1}\right\rfloor	\nonumber\\
&\leq \left\lfloor\left[\frac{52}{(1-\sqrt{\eta})^{3/2}}\left(\log\left(\frac{e(|\m{X}|-1)}{\beta}\right)\right)\right]^{|\m{X}|-1}\right\rfloor\nonumber\\
&\triangleq {M}'_{|\m{X}|}(\eta,\beta)
\end{align}
where the first inequality follows since $\log\left(\frac{1}{x}\right) \le \frac{\log(e)}{\sqrt{x}}$ for $0 \le x \le 1$. This can be established by differentiating the function $\frac{1}{\sqrt{x}} - \ln\left(\frac{1}{x}\right)$ and finding that $\min_{0 < x \le 1} \frac{1}{\sqrt{x}} - \ln\left(\frac{1}{x}\right) > 0$. Thus, by the first part of Theorem~\ref{thm:nonbinary}, we have that
\begin{align}
I(X;[Y]_{{M}'_{|\m{X}|}(\eta,\beta)})\geq \eta\beta.
\end{align}
Solving for $\eta$ yields
\begin{align}
I(X;[Y]_M)\geq a_{|\m{X}|-1} (M,|\m{X}|,\beta)\cdot\beta.\label{eq:axbound}
\end{align}
The theorem now follows by combining~\eqref{eq:akbound},~\eqref{eq:a0bound}, and~\eqref{eq:axbound}.
\end{proof}

\section{Connections to Quantization Under Log-Loss and the Information Bottleneck Problem}

In general, an $M$-level quantizer $q$ for a random variable $Y$ consists of a disjoint partition of its alphabet $\m{Y}=\bigcup_{i=1}^M\m{S}_i$, and a set of corresponding reproduction values $a_i\in\m{A}$, such that $q_y=\sum_{i=1}^M a_i\Ind_{\{y\in\m{S}_i\}}$, see, e.g.,~\cite{gn98}. The performance of the quantizer is measured with respect to some predefined distortion function $d:\m{Y}\times\m{A}\to\RR$, which quantifies the ``important features'' of $Y$ that the quantizer should aim to retain. The expected distortion $\mathbb{E}d(Y,q_Y)$ is then typically taken as the quantizer's main figure of merit.

In our considerations, we observe and quantize the random variable $Y$, but the distortion measure is evaluated with respect to $X$, where $X$ and $Y$ are jointly distributed according to $P_{XY}$. This setup is sometimes referred to as  remote source coding (or quantization). If the distortion measure of interest between $X$ and the reconstruction $q_y$ is $\tilde{d}:\m{X}\times\m{A}\to\RR$, one can define the induced distortion measure
\begin{align}
d(y,q_y)=\mathbb{E}\left[\tilde{d}(X,q_y)|Y=y\right],
\end{align}
such that $\mathbb{E}[\tilde{d}(X,q_Y)]=\mathbb{E}[d(Y,q_Y)]$. Consequently, the remote quantization problem is reduced to a direct quantization problem, with an induced distortion measure~\cite{dt62}.

Under various tasks of inferring information about $X$ from $Y$, it is natural to take the reconstruction alphabet $\m{A}$ to be the set of all distributions on $\m{X}$, i.e., the $|\m{X}|-1$ dimensional simplex $\m{P}^{|\m{X}|-1}$~\cite{as66,csiszar67,mf98} . Ideally, we would like the reconstructed distribution $q_y$ to be as close as possible to the conditional distribution $P_{X|Y=y}$, for all $y\in\m{Y}$. Various loss functions can be used to measure the distance between two distributions, depending on the ultimate performance criterion for the inference of $X$. One such loss function, that has enjoyed a special status in the information theory and machine learning literature~\cite{mf98,cw14,jcvw15,cesabianchilugosi,sv18,srv18} is the logarithmic-loss:
\begin{align}
d(x,P)=\log\left(\frac{1}{P(x)}\right), \ 
\forall (x,P)\in\m{X}\times\m{P}^{|\m{X}|-1}.
\end{align}
For the remote quantization setup, the induced distortion measure is
\begin{align}
d(y,P)\triangleq \mathbb{E}\left[ \log\frac{1}{P(X)}\bigg| \, Y=y\right], \ \forall (y,P)\in\m{Y}\times\m{P}^{|\m{X}|-1}.\label{eq:distdef}
\end{align}
Thus, the design of a quantizer for $Y$ under $d(y,P)$ reduces to determining a disjoint partition $\m{Y}=\bigcup_{i=1}^M\m{S}_i$ of the alphabet $\m{Y}$, and assigning a representative distribution $a_i\in\m{P}^{|\m{X}|-1}$ for each quantization cell $S_i$, such that $q_y=a_i$ iff $y\in S_i$. Note that once the sets $S_i$, $i=1,\ldots,M$ are determined, the reconstructions that minimize $D=\mathbb{E}d(Y,q_Y)$ are given by $a_i=P_{X|Y\in\m{S}_i}$. To see this, let $f:\m{Y}\to[M]$ be such that $f(y)=i$ if $i\in\m{S}_i$, set
$T=f(Y)$, and write
\begin{align}
D&=\mathbb{E}_{XY}\left[\log\left(\frac{1}{q_Y(X)}\right)\right]\nonumber\\
&=\mathbb{E}_{XT}\left[\log\left(\frac{1}{a_T(X)}\right)\right]\nonumber\\
&=\mathbb{E}_T\left[\mathbb{E}\left[\log\left(\frac{1}{P_{X|T}(X|T)}\frac{P_{X|T}(X|T)}{a_T(X)}\right) \bigg|T\right]\right]\nonumber\\
&=H(X|T)+D\big(P_{X|T}\big\| a_T \big| P_T\big)\nonumber\\
&\geq H(X|T)\nonumber\\
&=H(X|f(Y)),
\end{align}
with equality if and only if $a_t=P_{X|T=t}=P_{X|Y\in\m{S}_t}$ for all $t\in[M]$.

It follows that, for a given distribution $P_{XY}$, the design of the optimal quantizer under the distortion measure~\eqref{eq:distdef} reduces to finding $f: \m{Y} \rightarrow [M]$ which minimizes $H(X|f(Y))$. Clearly, determining the minimum value of $H(X|f(Y))$ is equivalent to our maximization problem~\eqref{eq:gmdef}.

A quantity closely related to $I(X;[Y]_M)$ is the information bottleneck tradeoff~\cite{tpb99}, defined as
\begin{align}
\mathrm{IB}_R(P_{XY})\triangleq\max_{P_{T|Y} \, : \, I(Y;T)\leq R} I(X;T),\label{eq:IB}
\end{align}
which has been extensively studied in the machine learning literature, see e.g.~\cite{st2000,tz15,st17}.
There, $Y$ is thought of as a high-dimensional observation containing information about $X$, that must be first ``compressed'' to a simpler representation before inference can be efficiently performed. The random variable $T=f(Y)$ represents a clustering operation, where for the task of inferring $X$, all members in the cluster are treated as indistinguishable. A major difference, however, between the information bottleneck formulation and that of~\eqref{eq:gmdef} is that the latter restricts $|f(\cdot)|$ to $M$, whereas the former allows for random quantizers and restricts the compression rate $I(T;Y)$. The discussion above indicates that the problem~\eqref{eq:gmdef} is a standard quantization/lossy compression problem (or more precisely, a remote source coding problem). As such, its fundamental limit admits a single-letter solution\footnote{One subtle point to be noted is that the relevant distortion measure for $I(X^n;[Y]^n_{M^n})$ is not separable. Nevertheless, it is not difficult to show that restricting the reconstruction distribution to the form $q_{y^n}(x^n)=\prod_{i=1}^n q^i_{y^n}(x_i)$ entails no loss asymptotically.} and we have that~\cite{gnt03,cw14}
\begin{align}
\lim_{n\to\infty} \frac{1}{n} I(X^n;[Y^n]_{M^n})=\mathrm{IB}_{\log{M}}(P_{XY}).
\end{align} where $P_{X^nY^n} = P^{\otimes n}_{XY}$ and $[Y^n]_{M^n}$ refers to the set of all $M^n$-quantizations of $Y^n$. That is, in the asymptotic limit, our problem~\eqref{eq:gmdef} corresponds to an information bottleneck problem. However, the scalar setting $n=1$ is of major importance as inference is seldom performed in blocks consisting of multiple independent samples from $P_{XY}$. Overall, our results in the previous sections indicate that when $I(X;Y)$ is small we may need at least $\Theta(\log(1/I(X;Y))$ clusters to guarantee that we retain a significant fraction of the original information.

\subsection{On the Gap Between Scalar Quantization and Information Bottleneck}

In this subsection, we show that in the limit $I(X;Y)\to 0$, the restriction to a scalar quantizer results in significantly worse performance than that predicted by the information bottleneck, which implicitly assumes quantization is performed in asymptotically large blocks. In particular, we prove the following theorem.
\begin{theorem}
For any $P_{XY}$ with $|\m{X}|=2$ and $I(X;Y)=\beta$, and any $\eta\in (0,1)$ there exists a quantizer $f(Y)$ such that $I(X;f(Y))\geq \eta\beta$ and
	\begin{align}
	H(f(Y))\leq \log\log\log\left(\frac{1}{\beta}\right)-\log(1-\eta)+\log(-\log(1-\eta))+11.
	\end{align}
	\label{thm:entcoding}
\end{theorem}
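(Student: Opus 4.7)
The plan is to take the $(2L+1)$-level geometric quantizer constructed in the proof of Theorem~\ref{thm:binarylb} and to control its entropy by a head/tail decomposition based on Markov's inequality applied to $D_Y \triangleq d(\alpha_Y\|\bar\alpha)$, whose expectation is $\beta$.

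First I would instantiate Theorem~\ref{thm:binarylb} with a slightly strengthened target fraction $\eta' = (1+\eta)/2 \in (\eta,1)$, thereby producing a quantizer $f$ with $I(X;f(Y)) \geq \eta'\beta \geq \eta\beta$. Recalling the construction in~\eqref{eq:uniquant}, its thresholds are $\gamma_k = \gamma_1\theta^{k-1}$ with $\gamma_1 = (1-\eta')\beta/2$, $\log\theta \leq (1-\eta')/2$, and the number of positive levels satisfies $L = \m{O}(\log(1/\beta)/(1-\eta'))$. Because $|f(Y)| \geq k$ is equivalent to $D_Y \geq \gamma_k$, Markov's inequality gives the tail bound
\begin{equation*}
P(|f(Y)| \geq k) \leq \frac{\beta}{\gamma_k} = \frac{2\,\theta^{-(k-1)}}{1-\eta'},
\end{equation*}
which decays geometrically once $k$ is a constant above $\log(1/(1-\eta'))/\log\theta$.

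Next I would choose a cutoff $K$ as the smallest integer such that $\beta/\gamma_K \leq 1/\log(1/\beta)$, i.e., $\theta^{K-1} \geq 2\log(1/\beta)/(1-\eta')$, which gives $K = \m{O}((\log\log(1/\beta)+\log(1/(1-\eta')))/(1-\eta'))$. Setting $A \triangleq \mathbb{1}\{|f(Y)|>K\}$, the chain rule yields $H(f(Y)) \leq H(A) + H(f(Y)\mid A)$. Each piece is controlled as follows: $H(A) \leq h(1/\log(1/\beta))$ tends to $0$ as $\beta\to 0$; conditional on $A=0$ the output has support of size at most $2K+1$, so $H(f(Y)\mid A=0) \leq \log(2K+1)$; conditional on $A=1$ the output has support of size at most $2L$, and the bound $P(A=1) \leq 1/\log(1/\beta)$ implies that this piece contributes at most $\log(2L)/\log(1/\beta)$, which is absorbed into an additive constant since $L = \m{O}(\log(1/\beta)/(1-\eta'))$.

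The final step is to unfold $\log(2K+1)$. Since $u+v \leq uv$ whenever $u,v\geq 2$, applying this with $u = \log\log(1/\beta)$ and $v = \log(2/(1-\eta'))$ gives $\log(2K+1) \leq \log\log\log(1/\beta) + \log\log(1/(1-\eta')) + \log(1/(1-\eta')) + \m{O}(1)$. Substituting $\eta' = (1+\eta)/2$ (so $1-\eta' = (1-\eta)/2$) and collecting every remaining $\m{O}(1)$ term into the stated additive constant $11$ matches the claimed bound. The main obstacle I expect is the careful constant-accounting needed to land exactly on the constant $11$, particularly in the regime where $\log(1/(1-\eta))$ is comparable to $\log\log(1/\beta)$ so both leading terms from the $\log(u+v)$ splitting must be retained, and also verifying that the tail-entropy contribution $P(A=1)\log(2L)$ is genuinely absorbed throughout the full parameter range.
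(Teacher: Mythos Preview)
Your approach is sound and will work, but it takes a different technical route from the paper's. Both arguments start from the same quantizer~\eqref{eq:uniquant} and the same moment identity: since $|f(Y)|=\ell$ implies $D_Y\geq\gamma_\ell$, the relation $\mathbb{E}D_Y=\beta$ yields $\sum_\ell\gamma_\ell P_\ell\leq\beta$, equivalently $\sum_\ell\theta^\ell P_\ell\leq 4\theta/(1-\eta)$. From this point the paper proceeds differently: it treats this as a single linear constraint and bounds the entropy by the maximum-entropy distribution satisfying it, obtaining via Lagrangian duality $H(\{P_\ell\})\leq\min_{\lambda>0}\bigl\{\lambda\cdot\tfrac{4\theta}{1-\eta}+\log\sum_\ell 2^{-\lambda\theta^\ell}\bigr\}$, and then evaluates at $\lambda=1/L$ and splits the log-sum at $\ell\approx 2\log L/\log\theta$. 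Your Markov-plus-conditioning decomposition is a more elementary substitute for this duality step; the cutoff $K$ you select is of the same order as the paper's $2\log L/\log\theta$, and both routes produce the dominant $\log\log L\approx\log\log\log(1/\beta)$ term. The paper's method is cleaner for constant-tracking because everything is packaged into one inequality, whereas yours accumulates several $\m{O}(1)$ pieces that must be tallied separately.

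One concrete gap in your sketch: to conclude $K=\m{O}\bigl((\log\log(1/\beta)+\log(1/(1-\eta')))/(1-\eta')\bigr)$ you need a \emph{lower} bound on $\log\theta$, but you only recorded the upper bound $\log\theta\leq(1-\eta')/2$. The needed bound $\log\theta\geq(1-\eta')/4$ does follow from the construction (since $L\leq\tfrac{4}{1-\eta'}\log(\kappa/\gamma_1)$ and $\log\theta=\tfrac{1}{L}\log(\kappa/\gamma_1)$), and the paper invokes it explicitly; you should do the same. Separately, the detour through $\eta'=(1+\eta)/2$ appears to buy nothing: the paper works directly with $\eta$ and the slack you create is never used downstream.
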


Contrasting this with Theorem~\ref{thm:main}, and its simplification in~\eqref{eq:binubsimplified}, which show that there exist distributions $P_{XY}$ with $|\m{X}|=2$, for which no scalar quantizer with less than $\log\log(1/\beta)+\log(\eta)+1$ bits can attain $I(X;f(Y))>\eta\beta$, we see that the restriction to quantization in blocklength $n=1$ entails a significant cost with respect to quantization in long blocks. In particular, if for a distribution $P_{XY}$ there exists a quantizer $f(Y)$ with entropy $H(f(Y))=R$ for which $I(X;f(Y))=\Gamma$, then certainly $\mathrm{IB}_R(P_{XY})\geq \Gamma$. To see this just take $T=f(Y)$ in~\eqref{eq:IB}.\footnote{See~\cite{ss16} for an elaborate discussion on the information bottleneck tradeoff when $T$ is restricted to be a deterministic quantizer of $Y$.} It therefore follows from Theorem~\ref{thm:main} and Theorem~\ref{thm:entcoding} that the information bottleneck tradeoff may be over-optimistic in predicting the performance of optimal scalar quantization.

\begin{proof}
	In the proof of the lower bound of Theorem~\ref{thm:main}, we have proposed the $M$-level quantizer~\eqref{eq:uniquant} with the parameters specified by~\eqref{eq:thetarule}. For $M=\lfloor\frac{52}{1-\eta}\log\big(\frac{1}{\beta}\big)\rfloor$, and $\beta\leq\frac{1-\eta}{2}$, we have shown that this quantizer attains $I(X;f(Y))\geq \eta\beta$. We will now show that for the same quantizer $H(f(Y))=\m{O}\left(\log\log(M)\right)$.
	
	Let
	\begin{align}
	P_{\ell}\triangleq \Pr\left(\{f(Y)=-\ell\}\cup\{f(Y)=\ell\}\right), \ \ell=0,\ldots,L\nonumber
	\end{align}
	and note that 
	\begin{align}
	H(f(Y))\leq 1+H(\{P_\ell\}).\label{eq:entf}
	\end{align}
	 Our goal is therefore to derive universal upper bounds on $H(\{P_\ell\})$ that hold for all joint distributions $P_{XY}$ with $|\mathcal{X}|=2$.
	
	First, recall from the proof of Theorem~\ref{thm:binarylb} that
		\begin{align}
	I(X;Y)=\mathbb{E}D_Y\geq\sum_{\ell=0}^L \gamma_{\ell}P_{\ell}=\gamma_1\sum_{\ell=1}^L \nonumber \theta^{\ell-1}P_{\ell},
	\end{align}
	where we have used~\eqref{eq:thetarule} in the last equality. We therefore have
	\begin{align}
	\sum_{\ell=0}^L \theta^{\ell}P_{\ell}&=P_0+\sum_{\ell=1}^L
\theta^{\ell}P_{\ell}\nonumber\\
&\leq 1+\frac{\theta I(X;Y)}{\gamma_1}\nonumber\\
&=1+\frac{\theta}{(1-\eta)/2} \label{eq:beforePIconstraint}\\
&\leq \frac{4\theta}{1-\eta} \label{eq:Plconstraint}%\label{eq:meanConstTmp}\\
%	&\leq \frac{2^{\frac{1-\eta}{2}}}{(1-\eta)/2}\label{eq:Plconstraint2}\\
%	&\leq\frac{4}{1-\eta},\label{eq:Plconstraint}
	\end{align}
	where in~\eqref{eq:beforePIconstraint} we have used $\gamma_1=\epsilon I(X;Y)$, due to~\eqref{eq:thetarule}, and $\epsilon=(1-\eta)/2$, due to~\eqref{eq:epsdef}.%, and in~\eqref{eq:Plconstraint2} we have used the fact that $\theta\leq 2^{\frac{1-\eta}{2}}$ for the parameters in~\eqref{eq:epsdef}.
	
	For a vector $\ba=\{a_0,a_1,\ldots,a_{L}\}\in \RR_{+}^{L+1}$ and a scalar $\min_\ell\{a_\ell\}\leq b\leq\max_{\ell}\{a_{\ell}\}$, define the function
	\begin{align}
	f(\ba,b)\triangleq &\max \sum_{\ell=0}^L P_{\ell}\log \left(\frac{1}{P_{\ell}}\right) \nonumber\\
	&\text{subject to } \sum_{\ell=0}^L a_{\ell}P_{\ell}\leq b, \ \sum_{\ell=0}^L P_{\ell}=1.
	\label{eq:entmax}
	\end{align}
	The problem~\eqref{eq:entmax} is a concave maximization problem under linear constraints, and its solution is~\cite[p.228]{boydvandenberghe}
	\begin{align}
	f(\ba,b)=\min_{\lambda\geq 0}\lambda b+\log\left(\sum_{\ell=0}^L 2^{-\lambda a_\ell} \right).\label{eq:entmaxsol}
	\end{align}
	Combining~\eqref{eq:Plconstraint} and~\eqref{eq:entmaxsol} with $a_\ell=\theta^\ell$ and $b=\frac{4\theta}{1-\eta}$, gives
	\begin{align}
	H(\{P_{\ell}\})\leq \min_{\lambda\geq 0}\lambda\frac{4\theta}{1-\eta}+\log\left(\sum_{\ell=0}^L 2^{-\lambda \theta^\ell}\right).\label{eq:HplBound}
	\end{align}
	Setting $\lambda=\frac{1}{L}$, gives
	\begin{align}
	&H(\{P_{\ell}\})\leq \frac{4}{1-\eta}\cdot\frac{\theta}{L}+\log\left(\sum_{\ell=0}^L 2^{-\frac{\theta^{\ell}}{L}}\right)\nonumber\\
	&\leq\frac{4}{1-\eta}\cdot\frac{\theta}{L}+\log\left(\sum_{\ell=0}^{\lfloor 2\frac{\log{L}}{\log{\theta}} \rfloor} 2^{-\frac{\theta^{\ell}}{L}}+\sum_{\ell=\lfloor 2\frac{\log{L}}{\log{\theta}}\rfloor +1}^{L}2^{-\frac{\theta^{\ell}}{L}}\right)\nonumber\\
	&\leq \frac{4}{1-\eta}\cdot\frac{\theta}{L}+\log\left({2\frac{\log{L}}{\log{\theta}}}+1+L2^{-L}\right),\nonumber
	\end{align}
	where in the last transition we have used the fact that $2^{-\frac{\theta^{\ell}}{L}}\leq 1$ for all $0\leq\ell\leq \lfloor 2\frac{\log{L}}{\log{\theta}}\rfloor$ such that  $\sum_{\ell=0}^{\lfloor 2\frac{\log{L}}{\log{\theta}} \rfloor} 2^{-\frac{\theta^{\ell}}{L}}\leq {2\frac{\log{L}}{\log{\theta}}}+1$, and that $2^{-\frac{\theta^{\ell}}{L}}\leq 2^{-L}$ for all $\lfloor 2\frac{\log{L}}{\log{\theta}}\rfloor+1\leq \ell\leq L$ such that $\sum_{\ell=\lfloor 2\frac{\log{L}}{\log{\theta}}\rfloor +1}^{L}2^{-\frac{\theta^{\ell}}{L}}<L2^{-L}$.
	Recalling that $1<\theta<2^{(1-\eta)/2}$ due to~\eqref{eq:thetauba}, and noting that $1 + L2^{-L}<2\log(L)$ for $L>1$, we obtain
	\begin{align}
	&H(\{P_{\ell}\})\leq \frac{1}{L}\frac{2^{(1-\eta)/2}}{(1-\eta)/4} +\log\left(4\frac{\log{L}}{\log{\theta}}\right).\nonumber
	\end{align}
	For the first term, we can use the definition of $L$ in~\eqref{eq:epsdef} to obtain
	\begin{align}
	\frac{1}{L}\frac{2^{(1-\eta)/2}}{(1-\eta)/4}&\leq\frac{(1-\eta)/2}{\log\left(\frac{2\kappa}{(1-\eta)I(X;Y)}\right)}\cdot\frac{2^{(1-\eta)/2}}{(1-\eta)/4}\nonumber\\
	&\leq 2\cdot 2^{(1-\eta)/2}\nonumber\\
	&\leq 3,
	\end{align}
	where we have used the fact that $\kappa\geq 1$, and consequently $\log\left(\frac{2\kappa}{(1-\eta)I(X;Y)}\right)\geq 1$.
	 For the second term, we have that 
	\begin{align}
	\log(\theta)&\overset{(i)}{\geq} \frac{1}{L}\log\left(\frac{\kappa}{\epsilon \beta}\right)\nonumber\\
	&\overset{(ii)}\geq\frac{(1-\eta)/4}{\log\left(\frac{\kappa}{\beta\cdot (1-\eta)/2}\right)}\log\left(\frac{\kappa}{\beta\cdot (1-\eta)/2}\right)\nonumber\\
	&=\frac{1-\eta}{4}.
	\end{align} where $(i)$ follows from~\eqref{eq:thetarule} and $(ii)$ uses the following upper bound on the choice of $L$ from~\eqref{eq:epsdef}:
	\begin{align}
	L&=\left\lceil\frac{2}{1-\eta} \log\left(\frac{\kappa}{\beta\cdot (1-\eta)/2}\right)\right\rceil\nonumber\\
	&\leq \frac{4}{1-\eta} \log\left(\frac{\kappa}{\beta\cdot (1-\eta)/2}\right).\nonumber
	\end{align}
	We have therefore obtained that
	\begin{align}
	H(\{P_{\ell}\})
	&\leq 7+\log\log{L}-\log(1-\eta).\nonumber%\label{eq:Hplub}
	\end{align}
	Now, recalling that $L<M<\frac{52\log(1/\beta)}{1-\eta}$ (see beginning of proof),
	we have that
	\begin{align}
	H(\{P_{\ell}\})
	&\leq 10 + \log(-\log(1-\eta))-\log(1-\eta)+\log\log\log\left(\frac{1}{\beta}\right).\label{eq:Hplub}%\nonumber\\
	%&\leq 10-2\log(1-\eta)+\log\log\log\left(\frac{1}{\beta}\right) 
	\end{align}
	 Now, applying~\eqref{eq:entf} with~\eqref{eq:Hplub} establishes the result.
\end{proof}

\section*{Acknowledgment}
The authors are grateful to Emre Telatar for pointing out the relation to the Knapsack problem, to Meir Feder for suggesting to compare the performance of scalar quantizers with that of scalar quantization followed by entropy coding for $f(Y)$, and to Guy Bresler, Robert Gray, Pablo Piantanida, and Shlomo Shamai (Shitz) for valuable discussions.

 \bibliographystyle{IEEEtran}
 \bibliography{IAQbib}

\end{document}